\newif\ifcomments
  \newcommand{\mdb}[1]{\textcolor{blue}{MdB: #1}}
  \newcommand{\marcel}[1]{\textcolor{blue}{Marcel: #1}}
  \newcommand{\mcomment}[1]{\marginpar{{\footnotesize #1}}}
  \newcommand{\mdb}[1]{}
  \newcommand{\marcel}[1]{}
  \newcommand{\mcomment}[1]{}
\newcommand{\BeginMyItemize}{\begin{itemize}\setlength{\itemsep}{-\parskip}}
\newcommand{\EndMyItemize}{\end{itemize}}
\newcommand{\myitemize}[1]{\BeginMyItemize #1 \EndMyItemize}
\newcommand{\BeginMyEnumerate}{\begin{enumerate}\setlength{\itemsep}{-\parskip}}
\newcommand{\EndMyEnumerate}{\end{enumerate}}
\newcommand{\myenumerate}[1]{\BeginMyEnumerate #1 \EndMyEnumerate}
\newcommand{\myitpara}[1]{\vspace{10pt} \noindent \textit{#1}}
\newcommand{\mypara}[1]{\vspace{10pt} \noindent \textbf{\sffamily #1}}
\renewcommand{\leq}{\leqslant}
\renewcommand{\geq}{\geqslant}
\newcommand{\eps}{\varepsilon}
\newcommand{\mydef}{:=}
\newcommand{\etal}{{\emph{et al.}}\xspace}
\newtheorem{defin}{Definition}
  \newenvironment{definition}{\begin{defin} \sl}{\end{defin}}
\newtheorem{propo}[defin]{Proposition}
\newtheorem{coro}[defin]{Corollary}
\newtheorem{rem}[defin]{Remark}
\newtheorem{myfact}[defin]{Fact}
\newcommand{\Reals}{{\mathbb{R}}}
\newcommand{\dbs}{{\sc dbscan}\xspace}
\newcommand{\dbss}{{\dbs}$^*$\xspace}
\newcommand{\hdbs}{{\sc hdbscan}\xspace}
\newcommand{\optics}{{\sc optics}\xspace}
\newcommand{\mst}{{\sc mst}\xspace}
\newcommand{\minp}{\mbox{{\sc MinPts}}}
\newcommand{\epsclusters}[1]{\mathcal{C}_{#1}(D)}
\newcommand{\refine}{\prec}
\newcommand{\neps}{N_{\eps}}
\newcommand{\dist}{\mathrm{dist}}
\newcommand{\core}{\mathrm{core}}
\newcommand{\dcore}{D_{\core}}
\newcommand{\graph}{\mathcal{G}}
\newcommand{\tree}{\mathcal{T}}
\newcommand{\gnb}{\graph}
\newcommand{\enb}{E}
\newcommand{\gcore}{\graph_{\core}}
\newcommand{\ecore}{E_{\core}}
\newcommand{\border}{\mathrm{border}}
\newcommand{\dborder}{D_{\border}}
\newcommand{\cdist}{d_{\core}}
\newcommand{\mreach}{\mathrm{mr}}
\newcommand{\mdist}{d_{\mreach}}
\newcommand{\Gmr}{\graph_{\mreach}}
\newcommand{\GmrRed}{\overline{\graph}_{\mreach}}
\newcommand{\B}{\mathcal{B}}
\newcommand{\C}{\mathcal{C}}
\newcommand{\mybox}{\mathrm{box}}
\newcommand{\gbox}{\graph_{\mybox}}
\newcommand{\ebox}{E_{\mybox}}
\newcommand{\cone}{c}
\author[1]{Mark de Berg}
\author[1]{Ade Gunawan}
\author[2,3]{Marcel Roeloffzen}
\affil[1]{Department of Computing Science, TU Eindhoven \\
          P.O.~Box 513, 5600 MB Eindhoven, the Netherlands \\
          \texttt{mdberg@win.tue.nl}}
\affil[2]{National institute of informatics, Tokyo, Japan \\
          \texttt{marcel@nii.ac.jp}}
\affil[3]{JST ERATO, Kawarabayashi Large Graph Project}
\title{Faster \dbs and \hdbs in Low-Dimensional Euclidean Spaces\thanks{%
MdB is supported by the Netherlands Organization for Scientific Research under grant 024.002.003.}}
\begin{document}
\maketitle

%-----------------------------------------------------------------------
\begin{abstract}
We present a new algorithm for the widely used density-based clustering method~\dbs.
Our algorithm computes the \dbs-clustering in $O(n\log n)$ time in~$\Reals^2$,
irrespective of the scale parameter~$\eps$
(and assuming the second parameter $\minp$ is set to a fixed constant, as is the case in practice).
Experiments show that the new algorithm is not
only fast in theory, but that a slightly simplified version is 
competitive in practice and much less sensitive to the choice of $\eps$ than
the original \dbs algorithm.
We also present an $O(n\log n)$ randomized algorithm for \hdbs in the plane---\hdbs
is a hierarchical version of \dbs
introduced recently---and we show how to compute an approximate version of
\hdbs in near-linear time in any fixed dimension.
\end{abstract}
%-----------------------------------------------------------------------

%-----------------------------------------------------------------------
\section{Introduction}
%-----------------------------------------------------------------------
Clustering is one of the most fundamental tasks in data mining.
Due to the wide variety of applications where clustering is important,
the clustering problem comes in many variants. These
variants differ for example in the dimensionality of the data set~$D$
and in the underlying metric, but also in the objective of the clustering.
Thus a multitude of clustering algorithms has been developed~\cite{tsk-idm-06}, each with
their own strengths and weaknesses. We are interested in \emph{density-based clustering},
where clusters are defined by areas in which the
density of the data points is high and clusters are separated from each other
by areas of low density.

One of the most popular density-based clustering methods is \dbs;
the paper by Ester~\etal~\cite{eksx-dbadc-96} on \dbs has
been cited over 8,800 times, and in 2014 \dbs received the test-of-time award from KDD,
a leading data-mining conference.
\dbs has two parameters, $\eps$ and $\minp$, that together determine
when the density around a point $p\in D$ is high enough for~$p$
to be part of a cluster (as apposed to being noise); see Section~\ref{se:preli}
for a precise definition of the \dbs clustering.
Typically $\minp$ is a constant---in the original article~\cite{eksx-dbadc-96}
it is concluded that $\minp=4$ works well---but
finding the right value for $\eps$ is more difficult.
The worst-case running time of the original \dbs algorithm is $\Theta(n^2)$. It is often
stated that the running time is $O(n\log n)$ for
Euclidean spaces when a suitable indexing structure such as an R-tree
is used to support the \dbs algorithm. While this may be true
in certain practical cases, it is not true from a theoretical
point of view.

Several variants of \dbs algorithm have been proposed, often
with the goal to speed up the computation.
Some ({\sc idbscan}~\cite{bb-idbscan-04} and
{\sc fdbscan}~\cite{l-fdbscan-06}) do so at the expense of
computing a slightly different, and not clearly defined, clustering.
Others ({\sc g}ri\dbs~\cite{mm-uga-08}) compute the same
clustering as~\dbs, but without speeding up the worst-case running time.

A fundamental bottleneck of the original \dbs algorithm is that it
performs a query with each point $p\in D$ to find $\neps(p,D)$, the set
of points within distance $\eps$ of~$p$. Thus $\sum_{p\in D} |\neps(p,D)|$
is a lower bound on the running time of the \dbs algorithm.
In the worst case $\sum_{p\in D} |\neps(p,D)|=\Theta(n^2)$, so even with
a fast indexing structure the worst-case running time of the original \dbs
algorithm is $\Omega(n^2)$.
(Apart from this, the worst-case query time of R-trees and other
standard indexing structures is not logarithmic even if we disregard to time to report points.)
In most practical instances the \dbs algorithm is much faster than quadratic. The reason is that $\eps$ is typically
small so that the sets $\neps(p,D)$ do not contain many points and the range
queries can be answered quickly. However, the fact that the algorithm always explicitly reports the sets
$\neps(p,D)$ makes the running time sensitive to the choice of $\eps$
and the density of the point set~$D$. For example, suppose
we have a disk-shaped cluster with a Gaussian distribution around the disk center.
Then a suitable value of $\eps$ will lead to large sets $\neps(p,D)$
for points $p$ near the center of the cluster.

Chen~\etal~\cite{csx-gadbc-05} overcame the quadratic bottleneck of the standard
approach, and designed an algorithm\footnote{As described,
the algorithm actually computes a variation of the
\dbs clustering, but it is easily adapted to compute the true \dbs clustering.}
with $O(n^{2-\frac{2}{d+2}}\;\mbox{polylog}\;n)$ worst-case running time.
Note that for $d=2$ the running time of the exact algorithm is $O(n^{1.5}\;\mbox{polylog}\;n)$.
They also present an approximate algorithm that is more practical.
Chen~\etal remark that their exact algorithm is mainly of theoretical interest.
The natural question is then whether or not it is possible to to compute the
\dbs clustering in subquadratic time in the worst case, irrespective
of the value of~$\eps$, with a simple and practical algorithm?
\medskip

Although \dbs is used extensively and performs well in many situations,
it has its drawbacks. One is that it produces a flat (non-hierarchial)
clustering which heavily depends on the choice of the scale parameter~$\eps$. Ankerst~\etal~\cite{abks-optics-99}
therefore introduced \optics, which can be seen as a hierarchical version of~\dbs.
Recently Campello~\etal~\cite{cms-dbchde-13} proposed an improved density-based
hierarchical clustering method---similar to \optics but cleaner---together with a cluster-stability
measure that can be used to automatically extract relevant clusters. The new method,
called \hdbs, only needs the parameter~$\minp$, which is much easier to choose than~$\eps$.
(Campello~\etal used \minp=4  in all their experiments.)
While \hdbs is very powerful, the algorithm to compute
the \hdbs hierarchy runs in quadratic time; not only in the worst-case, but
actually also in the best-case.
There have been only few papers dealing with speeding up \hdbs or its predecessor \optics.
A notable recent exception is {\sc Poptics}~\cite{ppalmc-poptics-13}, a parallel algorithm
that computes a similar (though not the same) hierarchy as \optics.
We do not know of any algorithm that computes the \hdbs or \optics hierarchy in
subquadratic time.
Thus the second question we study is:
is it possible to compute the \hdbs hierarchy in subquadratic time?

%-----------------------------------------------------------------------------
\mypara{Our results.}
%-----------------------------------------------------------------------------
We present an $O(n\log n)$ algorithm to compute the \dbs clustering for a set~$D$
of $n$ points in the plane, irrespective of the setting of the parameter $\eps$
used to define the \dbs clustering. (Here, and in our other results, we assume that
the parameter~$\minp$ is a fixed constant. As mentioned this is the case in practice,
where one typically uses $\minp=4$.)
We remark that our algorithm is not only fast in theory, but a slightly simplified version is also
competitive in practice and much less sensitive to the choice of $\eps$ than
the original \dbs algorithm. Some basic experimental results are provided in Section~\ref{se:experiments}.

We also present a new algorithm for planar \hdbs: we show how to compute the
\hdbs hierarchy in $\Reals^2$ in $O(n\log n)$ expected time, thus obtaining the first
subquadratic algorithm for the problem.

Finally, we provide a slightly improved version of the approximate \dbs clustering algorithm by Chen~\etal~\cite{csx-gadbc-05} and by Gan and Tao~\cite{gt-dbs-15} (their results are discussed in more detail below). Specifically we improve the dependency on the approximation parameter $\delta$.
We then extend the concept of an approximate \dbs clustering
as defined to the hierarchical version. We thus obtain $\delta$-approximate \hdbs,
an approximate version of the \hdbs hierarchy of
Campello~\etal~\cite{cms-dbchde-13}, where the parameter $\delta$ specifies
the accuracy of the approximation. (Intuitively, a $\delta$-approximate \hdbs hierarchy
has the same clusters as the standard \hdbs hierarchy at any level~$\eps$,
except that clusters at distance $(1-\delta)\cdot \eps$ from each other may be merged.
See Section~\ref{se:approx} for  precise definition.)
We show that a $\delta$-approximate
\hdbs hierarchy in $\Reals^d$ can be computed in $O((n/\delta^{(d-1)/2})\log n)$ time.

%-----------------------------------------------------------------------------
\mypara{Further related work.}
%-----------------------------------------------------------------------------
This work should be viewed as the journal publication of the 
(so far unpublished) masters thesis of the second author~\cite{g-fadbs-13}, 
which contained the results on \dbs, extended with results on \hdbs.
In the meantime, Gan and Tao~\cite{gt-dbs-15}
published a paper in which they extend the work from the masters thesis
to $\Reals^d$, resulting in an algorithm for \dbs with a running time 
of $O(n^{2-\frac{2}{\lceil d/2\rceil +1} +\gamma})$;
we briefly comment on how this is done at the end of Section~\ref{se:dbs}.
Gan and Tao also prove that computing the \dbs clustering in $\Reals^d$
for $d\geq 3$ is at least as hard as the so-called unit-spherical emptiness problem,
which is believed to require $\Omega(n^{4/3})$ time~\cite{e-rcgp-95}.
Finally, Gan and Tao show that a $\delta$-approximate \dbs clustering
can be computed in $O(n/\delta^{d-1})$ expected time, using a modified version of the exact algorithm.
Their approximate clustering is the
same as the approximate clustering defined by Chen~\etal~\cite{csx-gadbc-05},
who already showed how to compute it in $O(n\log n + n/\delta^{d-1})$
time deterministically.
(Gan and Tao were unaware of the paper by Chen~\etal.)
As we show in Section~\ref{se:approx} our algorithm can also be used
to obtain a deterministic algorithm with $O(n\log n + n/\delta^{d/3})$ running time.

%-----------------------------------------------------------------------------
\section{Preliminaries on \dbs and \dbss} \label{se:preli}
%-----------------------------------------------------------------------------
Let $D$ be a set of points in $\Reals^d$.
\dbs distinguishes three types of points:
core points (points in the ``interior'' of a cluster),
border points (points on the boundary of a cluster),
and noise (points not in any cluster).
The distinction is based on two global
parameters, $\eps$ and $\minp$.
Define $\neps(p,D) \mydef \{ q \in D : |pq| \leq \eps \}$
to be the \emph{neighborhood} of a point~$p$,
where $|pq|$ denotes the (Euclidean) distance between~$p$~and~$q$;
the point $p$ itself is included in~$\neps(p,D)$.
A point $p\in D$ is a \emph{core point} if $|\neps(p,D)|\geq \minp$,
and a non-core point $q$ in
the neighborhood of a core point is a \emph{border point}.
We denote the set of core points by $\dcore$, and the set of border points by $\dborder$.
The remaining points are \emph{noise}.
In \dbss~\cite{cms-dbchde-13}
border points are not part of a cluster but are considered noise.

Ester~\etal~\cite{eksx-dbadc-96} define the \dbs clusters
based on the concept of density-reachability (a detailed description is given below).
Equivalently, we can
define the clusters as the connected components of a certain graph.
%-----------------------------------------------------------------------------
\begin{figure}[bt]
\begin{center}
\includegraphics{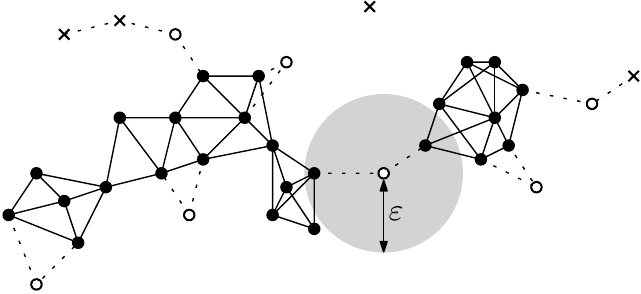}
\end{center}
\caption{A neighborhood graph (with $\minp=4$ and $\eps$ as indicated).
         Solid disks are core points, open circles
         are border points, and crosses are noise.
         Edges between core points are solid, other edges are dotted.
         The solid disks and edges form the core graph.}
\label{fi:core-graph}
\end{figure}
%-----------------------------------------------------------------------------
To this end, define the \emph{neighborhood graph} $\gnb(D,\enb)$
as the (undirected) graph  with node set $D$ and
edges connecting pairs of points within distance~$\eps$; see Fig.~\ref{fi:core-graph}.
In other words,
\[
\enb = \{ \ (p,q)\in D\times D \; : \; q\in \neps(p,D) \setminus\{p\} \ \}.
\]
Note that a point $p\in D$ is a core point if and only if its degree in $\gnb$
is at least $\minp-1$, since then its neighborhood contains at least $\minp$ points
(including $p$ itself).
Now consider the subgraph $\gcore(\dcore,\ecore)$
induced by the core points, that is, $\gcore$ is the graph whose nodes
are the core points and whose edges connect two
core points when they are within distance $\eps$ from each other.
We call $\gcore$ the \emph{core graph}.
The connected components of $\gcore$ are the clusters in \dbss.
The clusters in \dbs are the same, except that they also contain border points.
Formally, a border point~$q$ belongs to a cluster $C$ if $q$
has an edge (in $\gnb$) to a core point $p\in C$. Thus a border
point can belong to multiple clusters.
The original \dbs algorithm assigns
a border point~$p$ to the first cluster that finds $p$ (clusters are constructed one by one);
we assign border points to the cluster of their nearest core point.

%------------------------------------------------------------------------------------
\subsection{The original definition of the \dbs clustering.}
%------------------------------------------------------------------------------------
For comparison purposes only, we restate the original definition of the \dbs clustering.
Ester~\etal~\cite{eksx-dbadc-96} define when two points are in the same cluster
based on the concept of density-reachability, as explained next.
A point $q\in D$ is \emph{directly density-reachable} from a point $p\in D$
if $q\in \neps(p,D)$ and $p$ is a core point. We denote this by $p\rightarrow q$.
A core point is always directly density-reachable from itself,
since a point $p\in D$ is always in its own neighborhood.
A point $q$ is \emph{density-reachable} from a core point $p$, denoted by $p\leadsto q$,
if there is a sequence $p=r_0,\ldots,r_k=q$ (for some $k\geq 0$)
such that $r_0\rightarrow r_1 \rightarrow \cdots \rightarrow r_k$.
Observe that if two points $p$ and $q$ are both core points,
then $p\leadsto q$ if and only if $q\leadsto p$.
Two points $p$ and $q$ are \emph{density-connected} if there is a point~$r$
such that $r\leadsto p$ and $r\leadsto q$.

A cluster is now defined as
a subset~$C\subseteq D$ such that (i) if a core point $p$ is in $C$
then all points $q$ that are reachable from $p$ are in $C$, and
(ii) any two points in $C$ are density-connected to each other.
As observed by Ester~\etal, a cluster must contain at least one core point,
and a cluster is uniquely defined by any of its core points.
More precisely, if $p\in C$ is a core point then $C = \{ q\in D : p\leadsto q\}$.
Each core point belongs to exactly one \dbs cluster.
Under the above definition border points can belong to multiple clusters, however.
This is typically undesirable, so the original \dbs algorithm
assigns each border point~$q$ to only one cluster, namely the cluster from
which $q$ is discovered first by their algorithm. This implies that the computed clustering
depends on the order in which their algorithm happens to handle the points.

%------------------------------------------------------------------------------
\section{A fast algorithm for \dbs} \label{se:dbs}
%------------------------------------------------------------------------------
The original \dbs algorithm reports, while generating
and exploring the clusters, for each point $p\in D$ all
its neighbors. In other words, it spends time on every edge
in the neighborhood graph.
Our new algorithm avoids this by
working with a smaller graph, the \emph{box graph}~$\gbox$.
Its nodes are disjoint rectangular boxes with a diameter of at most $\eps$
that together contain all the points in~$D$, and its edges connect pairs of boxes within distance~$\eps$; see Fig.~\ref{fi:box-graph}.
%------------------------------------------------------------------------------------
\begin{figure}[b]
\begin{center}
\includegraphics{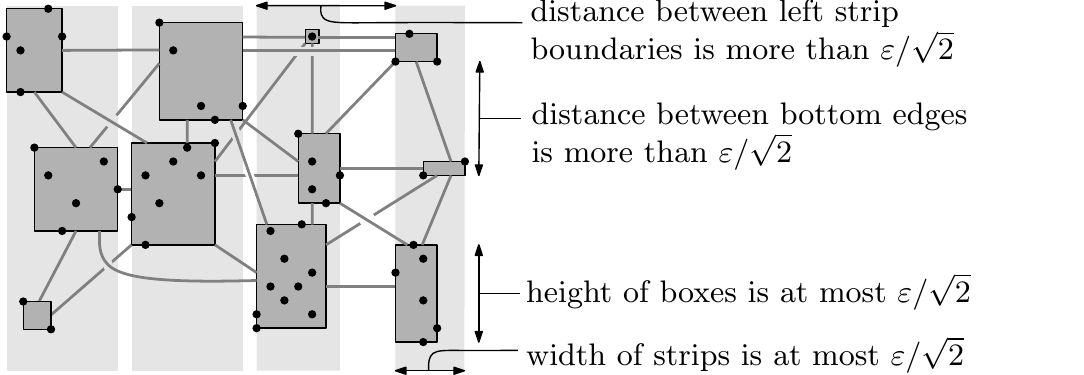}
\end{center}
\caption{Example of a box graph.}
\label{fi:box-graph}
\end{figure}
%------------------------------------------------------------------------------------
The boxes are generated such that (i) any two points in the same
box are in each other's neighborhood, and
(ii) the degree of any node in the box graph is~$O(1)$.
Property~(i) allows us to immediately
classify all points in a box as core points when it contains
at least $\minp$ points, and property~(ii) allows us to quickly
retrieve the neighbors of any given point in a box.
Next we describe the algorithm, which consists of four easy steps, in detail.

%------------------------------------------------------------------------------------
\mypara{Step~1: Compute the box graph~$\gbox$.}
%------------------------------------------------------------------------------------
To compute~$\gbox$, we first construct
a collection of vertical \emph{strips} that together cover all the
points. Let $p_1,\ldots,p_n$ be the points in $D$ sorted by $x$-coordinate, with ties broken arbitrarily.
The first strip has $p_1$ on its left boundary.
We go through the remaining points from left to right, adding them to
the first strip as we go, until we encounter a point $p_i$ whose distance
to the left strip boundary is more than~$\eps/\sqrt{2}$.
We then start a new strip with $p_i$ on its left boundary, and we
add points to that strip until we encounter a point
whose distance to the left strip boundary is more than~$\eps/\sqrt{2}$,
and so on, until we have handled all the points. Constructing
the strips takes $O(n)$ time, after sorting the points by $x$-coordinate.

Within each strip we perform a similar procedure,
going over the points within the strip in order of increasing $y$-coordinate
and creating boxes instead of strips.
Thus the first box in the strip has the lowest point
on its bottom edge, and we keep adding points to this box
(enlarging it so that the new point fits, ensuring a tight bounding box)
until we encounter a point whose vertical distance to the bottom edge
is more than~$\eps/\sqrt{2}$. We then start a new box, and so on,
until we handled all points in the strip.
If the number of points in
the $j$-th strip is $n_j$, then the time needed to handle all
the strips is $\sum_j O(n_j \log n_j) = O(n \log n)$.

Let $m$ be the number of strips and $\B_j$ the set of boxes
in the $j$-th strip. We sometimes refer to a set $\B_j$ as a strip,
even though formally $\B_j$ is a set of boxes.
Let $\B := \B_1\cup\cdots\cup\B_m$.
The nodes of the box graph~$\gbox$ are the boxes in $\B$
and there is an edge~$(b,b')$ when $\dist(b,b')\leq\eps$, where
$\dist(b,b')$ denote the minimum distance between $b$ and $b'$.
Two boxes $b,b'$ are \emph{neighbors}
when they are connected by an edge in~$\gbox$.
Let $\neps(b,\B)$ be the set of neighbors~$b$.
%------------------------------------------------------------------------------------
\begin{lemma}\label{le:box-graph-degree}
$\gbox$ has at most~$n$ nodes, each having $O(1)$~neighbors.
\end{lemma}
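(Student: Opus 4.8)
The bound on the number of nodes is immediate, and I would dispose of it first: by construction every box in $\B$ contains at least one point of $D$, and the boxes are pairwise disjoint, so $|\B|\leq n$. The real content is the degree bound, and the plan is a two-level packing argument, one level for the strips and one for the boxes inside a strip.

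First I would record the separation guarantees that the greedy construction provides. Horizontally, if $X_j$ denotes the left boundary of strip $\B_j$, then a new strip is opened exactly when a point lies farther than $\eps/\sqrt2$ from the current left boundary; hence consecutive values $X_j$ differ by more than $\eps/\sqrt2$, and every box of $\B_j$ is contained in the $x$-range $[X_j,X_j+\eps/\sqrt2]$. Vertically, the same rule applied inside a strip shows that the bottom edges of consecutive boxes of a strip differ by more than $\eps/\sqrt2$, while each box has height at most $\eps/\sqrt2$. (In passing this also gives that a box fits in an $\eps/\sqrt2\times\eps/\sqrt2$ region and thus has diameter at most $\eps$, which is property (i).) The point of phrasing things via the strip boundary $X_j$ rather than the tight left edge of a box is that it supplies clean reference lines for the distance comparisons below.

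Now I would fix a box $b\in\B_j$ and bound its neighbors in two steps. Step A (few strips): if $b'\in\B_{j'}$ is a neighbor, the horizontal gap between the $x$-ranges of $\B_j$ and $\B_{j'}$ lower-bounds $\dist(b,b')\leq\eps$, giving $|X_{j'}-X_j|\leq\eps+\eps/\sqrt2$; since consecutive $X$-values are more than $\eps/\sqrt2$ apart, at most $\lfloor 1+\sqrt2\rfloor=2$ strips on each side of $\B_j$ (plus $\B_j$ itself) can contain a neighbor, so neighbors lie in $O(1)$ strips. Step B (few boxes per strip): within one such strip the vertical gap between $b$ and a neighbor $b'$ is likewise at most $\eps$, so the bottom edge of $b'$ must lie in an interval whose length is the height of $b$ plus $2\eps$ plus the height of $b'$, hence $O(\eps)$; as consecutive bottom edges in a strip are more than $\eps/\sqrt2$ apart, only $O(1)$ boxes of that strip qualify. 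Multiplying the two $O(1)$ counts bounds the degree of $b$ by a constant.

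The steps are routine once the separation properties are in place, so I do not expect a genuine obstacle. The two points needing care are (i) checking that the tight bounding boxes, whose edges can lie strictly inside the strip, are still contained in the $\eps/\sqrt2$-wide $x$-range, so that the slab-to-slab and edge-to-edge comparisons are valid, and (ii) keeping the box's own extent separate from the $\eps$ reach when measuring the qualifying intervals, so that the constants in the packing counts come out correctly.
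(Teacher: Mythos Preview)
Your proposal is correct and follows essentially the same two-level packing argument as the paper: first bound the number of strips that can contain a neighbor using the $>\eps/\sqrt2$ horizontal separation of strip boundaries, then bound the number of neighboring boxes within each such strip using the $>\eps/\sqrt2$ vertical separation of box bottoms. The only difference is cosmetic: the paper carries through explicit constants (at most four neighbors in $\B_j$, five in $\B_{j\pm 1}$, four in $\B_{j\pm 2}$, for a total of~22), arguing each case by contradiction, whereas you stop at the $O(1)\times O(1)$ level.
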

%------------------------------------------------------------------------------------
\begin{proof}
The box graph obviously has at most~$n$ nodes. Next we give a precise analysis
of the number of neighbors a box~$b$ in some strip~$ B_j$ can have.
Consider the node corresponding to the box~$b$.  If there are two
or more strips in between $b$ and some other box $b'$ then
$\dist(b,b') > 2 (\eps/\sqrt{2}) > \eps$, so $b$ can only have neighbors in
$\B_{j-2}$, $\B_{j-1}$, $\B_j$, $\B_{j+1}$, or~$\B_{j+2}$.
We bound the number of neighbors in each of these strips separately.
\myitemize{
\item
  The number of neighbors in $\B_j$ is at most four. Indeed,
  the boxes in $\B_j$ can be ordered vertically, and if there are
  more than two boxes in between $b$ and some other box~$b'\in\B_j$,
  then the vertical distance between $b$ and $b'$ is more
  than~$\eps$.
\item
  The number of neighbors in $\B_{j-1}$ (and similarly
  in $\B_{j+1}$) is at most five. Suppose for a contradiction that~$b$
  has six or more neighbors in~$B_{j-1}$. Let $b'$ and $b''$ be the
  lowest and highest of these neighbors, respectively.
  Then the vertical distance between the top boundary of~$b'$
  and the bottom boundary of~$b''$is more than~$4\eps/\sqrt{2}$.
  Since the height of~$b$ is at most $\eps/\sqrt{2}$, this implies
  that the vertical distance between $b$ and either $b'$ or $b''$
  is more than $(4\eps/\sqrt{2} -  \eps/\sqrt{2})/2 > \eps$,
  contradicting that both $b'$ and $b''$ are neighbors of~$b$.
\item
  The number of neighbors in $\B_{j-2}$ (and similarly
  in $\B_{j+2}$) is at most four. Suppose for a contradiction that~$b$
  has five or more neighbors in~$B_{j-2}$. Let $b'$ and $b''$ be the
  lowest and highest of these neighbors, respectively.
  Then the vertical distance between the top boundary of~$b'$
  and the bottom boundary of~$b''$ is more than~$3\eps/\sqrt{2}$.
  Since the height of~$b$ is at most $\eps/\sqrt{2}$, this implies
  the vertical distance between $b$ and either $b'$ or $b''$ is
  more than $(3\eps/\sqrt{2} -  \eps/\sqrt{2})/2 = \eps/\sqrt{2}$.
  The horizontal distance is at least $\eps/\sqrt{2}$ as well,
  because there is a strip in between. Hence, the total distance
  between $b$ and either $b'$ or $b''$ is more than~$\eps$,
  contradicting that both $b'$ and $b''$ are neighbors of~$b$.
}
Adding up the maximum number of neighbors in each of the strips gives
us a maximum of~22 neighbors in total.
\end{proof}

This also gives us an easy way
to compute the edge set $\ebox$ of the box graph, because the edges
between boxes in strips $\B_j$ and $\B_{j'}$ with $|j-j'|\leq 2$
can be computed in $O(|\B_j|+|\B_{j'}|)$ time in total by scanning the
boxes in $\B_{j}$ and $\B_{j'}$ in a coordinated manner.
The total time to compute all edges of the box graph is thus
$$
O\left( \sum_{j=1}^m \sum_{j'=\max(j-2,1)}^{\min(j+2,m)} \left(|\B_j|+|\B_{j'}|\right) \right)
 =
O\left(\sum_{j=1}^m |\B_j|  \right)
 =
O(n).
$$
Adding the time to construct the strips and boxes, we
see that Step~1 takes $O(n\log n)$ time and we
obtain the following lemma.
%------------------------------------------------------------------------------------
\begin{lemma}\label{le:box-graph-comp}
The box graph~$\gbox(\B,\ebox)$ can be computed in $O(n\log n)$ time.
\end{lemma}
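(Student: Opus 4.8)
The plan is to charge the running time to each substep of Step~1 and check that the dominant term is $O(n\log n)$. First I would account for the construction of the strips. Sorting the points $p_1,\dots,p_n$ by $x$-coordinate costs $O(n\log n)$; once they are sorted, a single left-to-right sweep that opens a new strip whenever the current point lies more than $\eps/\sqrt{2}$ from the active left boundary produces all of $\B_1,\dots,\B_m$ in $O(n)$ additional time.

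Next I would bound the box-construction step. Within the $j$-th strip I sort its $n_j$ points by $y$-coordinate and sweep upward, closing the current box and opening a new one whenever the vertical gap to the bottom edge exceeds $\eps/\sqrt{2}$; this costs $O(n_j\log n_j)$ per strip. Since the strips partition the point set we have $\sum_j n_j = n$, so the total is $\sum_j O(n_j\log n_j)\leq O(n\log n)$. Together with the strip step, this already yields the claimed bound for producing the node set $\B$ of $\gbox$.

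It remains to compute the edge set $\ebox$, and this is where I expect the main obstacle to lie, since a naive test of all box pairs would be quadratic. Here Lemma~\ref{le:box-graph-degree} does the real work: a box in $\B_j$ can only be adjacent to boxes lying in $\B_{j-2},\dots,\B_{j+2}$, so it suffices to inspect, for each strip $\B_j$, the $O(1)$ strips $\B_{j'}$ with $|j-j'|\leq 2$. For one such pair I would find all edges by a coordinated scan of the two (already $y$-sorted) lists of boxes: as a box $b$ advances upward through $\B_j$, the set of boxes of $\B_{j'}$ within distance $\eps$ of $b$ (measured by $\dist$) forms a contiguous, monotonically shifting window, so a merge-style sweep with a sliding window reports all edges between the two strips in $O(|\B_j|+|\B_{j'}|)$ time. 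The point to verify is that this window never stalls or grows large: the degree bound of Lemma~\ref{le:box-graph-degree} guarantees each box has only $O(1)$ neighbors, so the window holds a constant number of candidates at any instant, and advancing $b$ only pushes its lower and upper ends forward. Summing $O(|\B_j|+|\B_{j'}|)$ over all $O(n)$ relevant pairs telescopes, exactly as in the displayed computation preceding this lemma, to $O(n)$. Adding the $O(n\log n)$ for strips and boxes to the $O(n)$ for the edges gives the stated bound.
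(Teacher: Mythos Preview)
Your proof is correct and follows essentially the same route as the paper: sort-and-sweep for the strips, sort-and-sweep per strip for the boxes (summing $O(n_j\log n_j)$ to $O(n\log n)$), and then a coordinated scan of each pair $\B_j,\B_{j'}$ with $|j-j'|\leq 2$ to compute all edges in $O(|\B_j|+|\B_{j'}|)$ time, totaling $O(n)$. You actually supply a little more justification than the paper for why the coordinated scan runs in linear time (the sliding-window argument backed by the constant-degree bound of Lemma~\ref{le:box-graph-degree}), but the approach is the same.
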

%------------------------------------------------------------------------------------

%------------------------------------------------------------------------------------
\myitpara{An alternative for Step~1.}
%------------------------------------------------------------------------------------
An alternative approach is to define the boxes as the non-empty cells
in a grid whose cells have height and width~$\eps / \sqrt{2}$. If we store
the boxes in a hash-table based on the coordinates of their lower left corners,
then finding the neighbors of a box~$b$ can be done by checking each potential
neighbor cell for existence in the hash-table---we do not need to store
the box graph explicitly. Creating the boxes (with their corresponding point sets)
can be done in $O(n)$ time if the floor function can be computed in $O(1)$ time.

%------------------------------------------------------------------------------------
\mypara{Step~2: Find the core points.}
%------------------------------------------------------------------------------------
The graph $\gbox$ allows us to determine the core points
in a simple and efficient manner. The key observation is that the maximum
distance between any two points in the same box is at most~$\eps$.
Hence, if a box contains more than $\minp$ points, then all of them
are core points. Thus the following simple algorithm suffices to determine
the core points.

For a box $b\in \B$, let $D(b) := D\cap b$ be the set
of point inside $b$, and let $n_b := |D(b)|$.
If $n_b \geq \minp$ then label all points in $b$ as core points.
Otherwise, for each point $p\in D(b)$, count the number of points $q$ in neighboring
boxes of $b$ for which $|pq|\leq\eps$.
If this number is at least $\minp-n_b$, then label~$p$ as core point.
The counting is done brute-force, by checking all points
in neighboring boxes. Hence, this takes $O(\sum_{b'\in \neps(b,\B)} n_{b'})$ time
for each point $p\in b$.
%
%\begin{algorithm}{FindCorePoints}[\gbox(\B,\ebox),\eps,\minp]{}
%   \qfor each box $b\in \B$
%\\ \qdo Let $n_b$ be the number of points in~$b$.
%\\      \qif $n_b \geq \minp$
%\\      \qthen label all points in $b$ as core points
%\\      \qelse \qfor each point $p\in b$
%\\             \qdo Check all points $q$ in neighboring boxes (that is,
%                    in boxes $b'$ such that $(b,b')\in \ebox$)
%                    to see if $\dist(p,q)\leq \eps$.
%                    If the number of such points is at least $\minp-n_b$,
%                    then label $p$ as core point.
%                \qend
%         \qfi
%   \qend
%\end{algorithm}
%------------------------------------------------------------------------------------
\begin{lemma}\label{le:find-core}
% Algorithm~\qproc{FindCorePoints} runs in $O(n)$ time.
Given~$\gbox$, we can find all core points in $D$ in $O(n)$ time.
\end{lemma}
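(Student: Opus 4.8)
The plan is to establish correctness of the algorithm described above and then bound its running time by $O(n)$ via a summation-swapping argument.

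For correctness, I would first observe that any point $q$ within distance $\eps$ of a point $p\in b$ must lie in $b$ itself or in a neighboring box $b'\in\neps(b,\B)$: indeed $\dist(b,b')\leq |pq|\leq\eps$, so $b'$ is adjacent to $b$ in $\gbox$. Since any two points in the same box are within distance $\eps$ (the box diameter is at most $\eps$), all $n_b$ points of $b$ lie in $\neps(p,D)$, and therefore $|\neps(p,D)| = n_b + c_p$, where $c_p$ is the number of points in neighboring boxes within distance $\eps$ of $p$. Hence $p$ is a core point if and only if $n_b + c_p \geq \minp$, equivalently $c_p \geq \minp - n_b$, which is exactly the threshold the algorithm tests. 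When $n_b\geq\minp$ we have $|\neps(p,D)|\geq n_b \geq \minp$ for every $p\in b$, so labeling all points of such a box as core points without any counting is correct.

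For the running time, I would split the boxes into two groups. Boxes with $n_b\geq\minp$ are handled in $O(n_b)$ time just by labeling, contributing $O(\sum_b n_b)=O(n)$ in total. For a box $b$ with $n_b<\minp$, each point $p\in b$ is processed in $O(\sum_{b'\in\neps(b,\B)} n_{b'})$ time by scanning its neighboring boxes; since $n_b<\minp=O(1)$, the whole box $b$ costs $O(\sum_{b'\in\neps(b,\B)} n_{b'})$. Summing over all such boxes and swapping the order of summation,
\[
\sum_b \sum_{b'\in\neps(b,\B)} n_{b'} \;=\; \sum_{b'} n_{b'}\cdot \bigl|\{\, b : b'\in\neps(b,\B)\,\}\bigr|.
\]
By Lemma~\ref{le:box-graph-degree} each box has $O(1)$ neighbors, and since adjacency in $\gbox$ is symmetric, every box $b'$ is a neighbor of only $O(1)$ boxes $b$. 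The right-hand side is therefore $O(\sum_{b'} n_{b'}) = O(n)$.

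I do not expect any real obstacle here. The only point worth isolating is that the condition $n_b<\minp$ forces $n_b$ to be a constant (because $\minp$ is fixed), which is precisely what keeps the brute-force counting affordable; the bounded degree of $\gbox$ then lets the summation-swap close the bound. The argument is in essence a charging/double-counting step and should require no delicate estimates.
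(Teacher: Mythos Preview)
Your proposal is correct and follows essentially the same approach as the paper: split into boxes with $n_b\geq\minp$ (trivially $O(n)$) and boxes with $n_b<\minp$, then use the bounded degree of $\gbox$ from Lemma~\ref{le:box-graph-degree} together with $n_b<\minp=O(1)$ to bound the brute-force cost via a double-counting/charging step. The paper phrases the latter as charging $O(\minp)$ to each point in each neighboring box, which is exactly your summation-swap; your added correctness paragraph is a welcome explicit spell-out of what the paper leaves implicit in the algorithm description.
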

%------------------------------------------------------------------------------------
\begin{proof}
The total time spent to handle boxes~$b$ with $n_b\geq \minp$ is clearly $O(n)$.
The time needed to handle a box~$b$ with $n_b< \minp$ is
\[
O\left( n_b \cdot \sum_{b'\in\neps(b,\B)} n_{b'} \right)
   =
O\left( \minp \cdot \sum_{b'\in\neps(b,\B)} n_{b'} \right).
\]
Now charge $O(\minp)=O(1)$ time to each point
in every $b'\in\neps(b,\B)$. Because any box~$b'$
is the neighbor of $O(1)$ other boxes by Lemma~\ref{le:box-graph-degree},
each point is charged $O(1)$ times, proving the lemma.
\end{proof}
%------------------------------------------------------------------------------------

%------------------------------------------------------------------------------------
\mypara{Step~3: Compute the cluster cores.}
%------------------------------------------------------------------------------------
In the previous step, we determined the core points. Next we wish to determine
the clusters or, more precisely the cluster cores.
The \emph{core of a cluster} is the set of core points in that cluster.
In Step~3 we assign to each core point a \emph{cluster-id}
so that core points in the same cluster have the same cluster-id.
Again, this can be done in an efficient manner using~$\gbox$.
To this end, we first remove certain boxes and edges from~$\gbox$ to obtain
a reduced box graph~$\gbox^*$. More precisely, we keep only
the boxes with at least one core point, and we keep only the edges $(b,b')$
for which there are core points $p\in b$, $p'\in b'$ with $|pp'|\leq \eps$.
Because any two core points in a given box~$b$ are connected in $\gcore$,
we have the following lemma.
%------------------------------------------------------------------------------------
\begin{lemma}
The connected components in $\gbox^*$ correspond one-to-one to the connected
components in the core graph $\gcore$ and, hence, to the \dbss clusters.
\end{lemma}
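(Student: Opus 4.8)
The plan is to establish a bijection between the connected components of $\gbox^*$ and those of $\gcore$ by lifting paths in both directions. First I would record the crucial structural property behind the construction: by property~(i) of the boxes, any two points lying in the same box are within distance~$\eps$, so any two \emph{core} points in the same box are joined by an edge in $\gcore$. Consequently all core points inside a single box belong to the same connected component of $\gcore$. Combined with the facts that the boxes partition~$D$, that every node of $\gbox^*$ contains at least one core point by definition, and that every core point lies in exactly one box (which is therefore a node of $\gbox^*$), this gives the underlying vertex correspondence on which the two maps will be defined.

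For the first direction I would show that if two core points $p$ and $p'$ lie in the same component of $\gcore$, then their boxes lie in the same component of $\gbox^*$. Take a path $p=q_0,q_1,\dots,q_k=p'$ in $\gcore$ and let $b_i$ be the box containing $q_i$; each $b_i$ is a node of $\gbox^*$ since it holds the core point $q_i$. For consecutive vertices either $b_i=b_{i+1}$, or $q_i$ and $q_{i+1}$ are core points in distinct boxes with $|q_iq_{i+1}|\leq\eps$, which is exactly the condition for $(b_i,b_{i+1})$ to be an edge of $\gbox^*$. Hence $b_0,\dots,b_k$ is a walk in $\gbox^*$ and all these boxes lie in one component, so the map sending a $\gcore$-component to the $\gbox^*$-component containing the boxes of its core points is well defined.

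The second direction is where the real work lies, and it is the step I expect to be the main obstacle. Suppose boxes $b$ and $b'$ lie in the same component of $\gbox^*$ via a path $b=c_0,\dots,c_k=b'$. Each edge $(c_i,c_{i+1})$ only guarantees the existence of \emph{some} pair of core points $r_i\in c_i$ and $s_{i+1}\in c_{i+1}$ with $|r_is_{i+1}|\leq\eps$, so $r_i$ and $s_{i+1}$ are adjacent in $\gcore$; these witnesses need not be the core points one ultimately cares about. The trick is to bridge them using within-box connectivity: inside each intermediate box $c_i$ the entry witness $s_i$ (from edge $(c_{i-1},c_i)$) and the exit witness $r_i$ (toward $c_{i+1}$) are both core points of $c_i$, hence connected in $\gcore$ by the structural property above. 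Chaining these within-box $\gcore$-paths with the cross-box $\gcore$-edges yields a single $\gcore$-path, so every core point of $b$ is $\gcore$-connected to every core point of $b'$; this defines the reverse map and shows it is well defined. The two maps are then mutually inverse, giving the claimed one-to-one correspondence, and since the connected components of $\gcore$ are exactly the \dbss clusters (as noted in Section~\ref{se:preli}), the lemma follows.
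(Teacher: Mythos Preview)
Your proof is correct and follows exactly the approach the paper intends: the key observation is that any two core points in the same box are adjacent in $\gcore$ (since boxes have diameter at most $\eps$), and from this the bijection between components follows by lifting paths in both directions as you describe. The paper in fact states only this key observation in the sentence preceding the lemma and leaves the remainder implicit, so your write-up simply spells out what the paper treats as immediate.
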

%------------------------------------------------------------------------------------
Thus the cluster cores can be computed by
computing the connected components in $\gbox^*$. The latter can be
done in $O(n)$ time using~DFS~\cite{clrs-ia-09}.
After computing the connected components we give every core point~$p$
a cluster-id corresponding to the connected component of the box~$b$ that contains~$p$.

To construct $\gbox^*$, we need to decide for two given boxes
$b,b'$ whether there are core points $p\in D(b)$, $p'\in D(b')$ with $|p p'|\leq \eps$.
If  $n_b<\minp$ or $n_{b'}<\minp$
then we simply check all pairs of core points.
We can argue as in the proof of Lemma~\ref{le:find-core}
that this takes $O(\minp \cdot n) = O(n)$ time in total.
If $n_b\geq\minp$ and $n_{b'}\geq\minp$ we
have to be more careful,since checking all pairs of points can lead to a
quadratic running time. For example, if both cells contain $n/2$ points,
then checking all pairs would take $\Omega(n^2)$ time.
When $n_b\geq\minp$ and $n_{b'}\geq\minp$ we therefore
compute the Delaunay triangulation of $D(b)\cup D(b')$ in
$O((n_{b}+n_{b'})\log (n_{b}+n_{b'}))$ time~\cite{bcko-cgaa-08},
and check whether it has an edge $(p,p')$
of length at most $\eps$ between points $p\in D(b), p'\in D(b')$.
This is sufficient because the pair of points defining the closest distance
between $D(b)$ and $D(b')$
must define an edge in the Delaunay triangulation.
This leads to the following lemma.
%------------------------------------------------------------------------------------
\begin{lemma}
Computing the cluster cores can be done in $O(n\log n)$ time.
\end{lemma}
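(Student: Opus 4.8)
The plan is to charge the two phases of Step~3 separately: building the reduced box graph $\gbox^*$, and then extracting its connected components. For the second phase, I would observe that $\gbox^*$ has at most $n$ nodes and, by Lemma~\ref{le:box-graph-degree}, every node has $O(1)$ neighbors, so $\gbox^*$ has only $O(n)$ edges. A single depth-first search therefore labels all connected components in $O(n)$ time, after which each core point inherits the cluster-id of the box that contains it. Hence the entire cost of Step~3 is dominated by the cost of deciding, for each candidate edge $(b,b')$ of $\gbox$, whether $b$ and $b'$ contain core points $p\in D(b)$, $p'\in D(b')$ with $|pp'|\le\eps$.

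I would split these edge tests into two cases. If $n_b<\minp$ or $n_{b'}<\minp$, I test all pairs of core points in $D(b)\times D(b')$ directly; exactly as in the proof of Lemma~\ref{le:find-core}, charging $O(\minp)=O(1)$ to each point of the smaller box and using that each box is a neighbor of only $O(1)$ other boxes, this sums to $O(\minp\cdot n)=O(n)$ over all such edges. The difficult case is an edge $(b,b')$ with $n_b\ge\minp$ and $n_{b'}\ge\minp$: here a brute-force all-pairs test may cost $\Theta(n_b n_{b'})$, which is $\Omega(n^2)$ in the worst case (e.g.\ two boxes each holding $n/2$ points). To avoid this I would compute the Delaunay triangulation of $D(b)\cup D(b')$ and scan only its edges: since the closest bichromatic pair between $D(b)$ and $D(b')$ realises a Delaunay edge, $\gbox^*$ contains $(b,b')$ if and only if this triangulation has an edge between a point of $D(b)$ and a point of $D(b')$ of length at most~$\eps$.

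The main obstacle is to show that the Delaunay computations do not, in aggregate, exceed $O(n\log n)$, even though each individual triangulation already costs almost a full logarithmic factor. A single triangulation of $D(b)\cup D(b')$ takes $O((n_b+n_{b'})\log(n_b+n_{b'}))=O((n_b+n_{b'})\log n)$ time. The key point is that, by Lemma~\ref{le:box-graph-degree}, each box~$b$ is an endpoint of only $O(1)$ edges of $\gbox$, so in the summation over edges the term $n_b$ appears only $O(1)$ times. Therefore
\[
\sum_{(b,b')\in\ebox} O\bigl((n_b+n_{b'})\log n\bigr)
   \;=\;
O\!\left(\log n\cdot\sum_{b\in\B} n_b\right)
   \;=\;
O(n\log n),
\]
using $\sum_{b\in\B} n_b=n$. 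Combining the $O(n)$ cost of edges incident to a box with fewer than $\minp$ points, the $O(n\log n)$ cost of edges between two dense boxes, and the $O(n)$ connected-components step yields the claimed $O(n\log n)$ bound.
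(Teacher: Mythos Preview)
Your proposal is correct and follows essentially the same argument as the paper: handle edges incident to a sparse box by brute force in $O(n)$ total (via the charging argument of Lemma~\ref{le:find-core}), handle edges between two dense boxes via the Delaunay triangulation of $D(b)\cup D(b')$, and use the $O(1)$ degree bound from Lemma~\ref{le:box-graph-degree} together with $\sum_b n_b = n$ to sum the $O((n_b+n_{b'})\log n)$ costs to $O(n\log n)$. Your write-up is slightly more explicit about the connected-components DFS and the summation, but the underlying decomposition and key lemma are identical.
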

%------------------------------------------------------------------------------------
\begin{proof}
The most time consuming part of the construction of $\gbox^*$ is
to determine for each pair of neighboring boxes in $\B$
whether there are core points $p\in b$, $p'\in b'$ with $|p p'|\leq \eps$.
As mentioned, the total time spent on pairs with $n_b<\minp$
or $n_{b'}<\minp$ is $O(n)$. Let $\B^*$ be the set of boxes containing
at least $\minp$~points. Then the total time spent on the pairs of boxes from~$\B^*$
is
\[
\sum_{b\in {\B^*}}\sum_{b'\in \neps(b,\B^*)} O((n_{b}+n_{b'})\log (n_{b}+n_{b'})),
\]
which is $O(n\log n)$ because $|\neps(b,\B^*)|=O(1)$ for any box $b$
and $\sum_{b\in \B^*} n_b \leq n$.
\end{proof}
%------------------------------------------------------------------------------------

\myitpara{Remark.} In practice we can also
use a brute-force algorithm when $n_b\geq\minp$ and $n_{b'}\geq\minp$, because
the number of points in boxes with more than $\minp$ points
is typically still not very large. Moreover, if both $b$ and $b'$ contain many points, then
there are often many pairs of points within distance~$\eps$ from each other,
and we can stop when we find such a pair.

%------------------------------------------------------------------------------------
\mypara{Step 4: Assigning border points to clusters.}
%------------------------------------------------------------------------------------
It remains to decide for non-core points~$p$ whether $p$ is a border point or noise.
(For \dbss we can skip Step~4, since in \dbss border points are considered noise.)
If $p$ is a border point, it has to be assigned to the
nearest cluster. Again, a brute-force method suffices:
for each box~$b\in B$ and each non-core point $p\in b$,
we check all points in $b$ and its neighboring boxes to find
$p$'s nearest core point, $p'$. If $|pp'|\leq\eps$,
then $p$ is a border point in the same cluster as~$p'$,
otherwise $p$ is noise. We only need to consider boxes $b$ with
$n_b <\minp$---otherwise all points in $b$ are core points---so the
argument from the proof of Lemma~\ref{le:find-core} shows
that this takes $O(n)$ time.

\mypara{Putting it all together.}
Steps~1 and~3  take $O(n\log n)$ time
and Steps~2 and~4 take $O(n)$ time. We thus obtain the following theorem.
%------------------------------------------------------------------------------------
\begin{theorem}\label{th:dbs-main}
Let $D$ be a set of $n$ points in $\Reals^2$, and $\eps$ and $\minp$ be given
constants. Then we can compute a \dbs clustering on $D$ according to $\eps$ and
$\minp$ for the Euclidean metric in $O(n \log n)$ time.
\end{theorem}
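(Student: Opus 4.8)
The plan is to verify that the four steps of the algorithm, whose correctness and running time have already been analyzed individually, combine to produce a valid \dbs clustering within the claimed time bound. Since the theorem is essentially a composition of the preceding lemmas, the work is to check that the pieces fit together and that no step exceeds the $O(n\log n)$ budget.

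First I would confirm correctness. Step~2 labels a point as a core point exactly when its $\eps$-neighborhood contains at least $\minp$ points: for a box $b$ with $n_b\geq\minp$ all its points are core points because any two points in the same box lie within distance~$\eps$, and for a box with $n_b<\minp$ the brute-force count over the $O(1)$ neighboring boxes inspects every point within distance~$\eps$. Next, the lemma that the connected components of $\gbox^*$ correspond one-to-one to those of the core graph~$\gcore$ shows that the cluster-ids assigned in Step~3 correctly identify the cluster cores, that is, the \dbss clusters. Finally, Step~4 classifies each non-core point by locating its nearest core point among the $O(1)$ neighboring boxes: the point becomes a border point of that core point's cluster if the distance is at most~$\eps$, and noise otherwise, which is exactly the \dbs rule together with our convention of assigning a border point to the cluster of its nearest core point.

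Second I would add up the running times. By Lemma~\ref{le:box-graph-comp} Step~1 builds $\gbox$ in $O(n\log n)$ time; by Lemma~\ref{le:find-core} Step~2 finds all core points in $O(n)$ time; the cluster-core lemma gives $O(n\log n)$ for Step~3; and Step~4 runs in $O(n)$ time by the charging argument from the proof of Lemma~\ref{le:find-core}. Summing these contributions yields $O(n\log n)$ overall.

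The main obstacle lies not in this final assembly but in the supporting structural fact of Lemma~\ref{le:box-graph-degree}: the bounded degree of $\gbox$ is what keeps every brute-force neighbor scan linear and, crucially, bounds the number of Delaunay triangulations computed in Step~3 so that their total cost stays $O(n\log n)$ rather than quadratic. Thus the substantive content resides in the earlier lemmas, and the theorem follows by straightforward composition.
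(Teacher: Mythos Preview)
Your proposal is correct and matches the paper's approach exactly: the theorem is obtained by summing the running times of the four steps (Steps~1 and~3 in $O(n\log n)$, Steps~2 and~4 in $O(n)$), with correctness following from the preceding lemmas. If anything, you are more explicit about correctness than the paper, which simply states the time bounds and moves on.
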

%------------------------------------------------------------------------------------

%------------------------------------------------------------------------------------
\myitpara{Remark: extension to higher dimensions.}
%------------------------------------------------------------------------------------
The algorithm just described can easily be extended to $\Reals^d$ for~$d>2$,
as already observed by Gan and Tao~\cite{gt-dbs-15}. For completeness we
describe the extension and the resulting bound. One trivial modification is
that in $\Reals^d$ we need to use boxes in~$\gbox$ of width at most $\eps/\sqrt{d}$ along each axis
to ensure their diameter is at most~$\eps$. The only other difference is
in Step~3, where we have to decide for all pairs~$b,b'$ of neighboring
boxes whether there are core points $p\in D(b)$ and $p'\in D(b')$
with $|p p'|\leq \eps$. When $n_b\geq\minp$ and $n_{b'}\geq\minp$
we can no longer use the Delaunay triangulation for this, as it may have quadratic size.
Instead we can use a known algorithm for bichromatic closest pair~\cite{aes-emstbcp-91},
which gives a running time of $O(n^{2-\frac{2}{\lceil d/2\rceil +1} +\gamma})$,
where $\gamma>0$ is an arbitrarily small constant. (For $d=3$ the $n^{\gamma}$
factor can be replaced by a polylogarithmic factor.)

%------------------------------------------------------------------------------------
\section{A fast algorithm for \hdbs in the plane} \label{se:hdbs}
%------------------------------------------------------------------------------------
Campello~\etal~\cite{cms-dbchde-13} introduced \hdbs, a hierarchical
version of \dbss (similar to \optics~\cite{abks-optics-99}).
The algorithm described by Campello~\etal to compute the \hdbs hierarchy runs in quadratic time.
We show that in $\Reals^2$ and under the Euclidean metric, the \hdbs hierarchy can be computed
in $O(n \log n)$ time.

%---------------------------------------------------------------------------------
\mypara{Preliminaries on \hdbs.}
%---------------------------------------------------------------------------------
Recall that \dbss is the version of \dbs in which border points are considered noise.
The \hdbs hierarchy is a tree structure encoding the clusterings of \dbss
(for a fixed~\minp) that arise as $\eps$ increases from $\eps=0$ to
$\eps=\infty$.
Initially, when $\eps=0$, all points are noise.
As~$\eps$ increases, three types of events can happen to the \dbss clustering:
\myitemize{
\item \emph{Type~(i): the status of a point changes.} In this event,
       a point changes from being noise to being a core point.
       The value of $\eps$ at which this happens
       for a point~$p$ is called the \emph{core distance} of $p$; we denote it by~$\cdist(p)$.
\item \emph{Type~(ii): a new cluster starts.}
            This event is triggered by a type~(i) event, when
            a point becoming a core point forms a new (singleton) cluster. 
\item \emph{Type~(iii): two clusters merge.}
            This event can be triggered by a type~(i) event or it can
            happen when $\eps = |pq|$ for core points $p,q$ from different
            clusters.
}
Note that all events happen at values of $\eps$
such that $\eps=|pq|$ for some pair of points $p,q\in D$.
Also note that several events may happen simultaneously, not only
when two pairwise distances are equal, but also when an event triggers other events.
This process can be modeled as a \emph{dendrogram}:
a tree whose leaves correspond to
the points in~$D$ and whose nodes correspond to clusters arising during the process.
This dendrogram, where each node stores the value of $\eps$ at which the corresponding
cluster was created, is the \hdbs hierarchy.
 Notice that we can easily extract the \dbss clustering for any desired value of~$\eps$
 (with respect to the given $\minp$) from the dendrogram in linear time.
Campello~\etal compute the \hdbs hierarchy as follows.

For two points $p,q\in D$, define
$
\mdist(p,q) := \max \left(\cdist(p),\cdist(q), |pq| \right)
$
to be the \emph{mutual reachability distance} of $p$ and $q$.
The \emph{mutual reachability graph} $\Gmr$ is defined as the complete
graph with node set~$D$ in which each edge $(p,q)$
has weight~$\mdist(p,q)$. Campello~\etal observe that \hdbs hierarchy
can easily be computed from a minimum spanning tree  (\mst) on $\Gmr$.
(Indeed, the cluster-growing process
corresponds to the process of computing an \mst on $\Gmr$
using Kruskal's algorithm~\cite{clrs-ia-09}.)
Hence, they compute the \hdbs hierarchy as follows.
\myenumerate{
\item \label{step:coredist} Compute the core distances $\cdist(p)$ for all points $p\in D$.
\item \label{step:mst} Compute an \mst $\tree$ of the mutual reachability graph~$\Gmr$.
\item \label{step:convert}
      Convert~$\tree$ into a dendrogram where each internal node stores the
      value of~$\eps$ at which the corresponding cluster is formed.
}
%----------------------------------------------------------------------------------
\mypara{Our planar algorithm.}
%-----------------------------------------------------------------------------------
The most time-consuming parts in the algorithm above
are Steps~\ref{step:coredist}~and~\ref{step:mst}; Step~\ref{step:convert}
takes $O(n)$ time after sorting the edges of $\tree$ by weight.

For Step~\ref{step:coredist} we observe that $\cdist(p)$ is the distance of point~$p$ to
its $\ell$-th nearest neighbor for $\ell=\minp-1$. Hence, to compute all
core distances it suffices to compute for each point its $k$ nearest
neighbors. This can be done (in any fixed dimension) in $O(n\ell\log n)$ time~\cite{v-annp-89}.
Since $\ell=\minp-1=O(1)$ this implies that Step~\ref{step:coredist}
takes $O(n\log n)$ time.

Step~\ref{step:mst} is more difficult to do in subquadratic time.
The main problem is that we cannot afford to look at all edges of~$\Gmr$
when computing~$\tree$. To overcome this problem we need the following generalization
of Delaunay triangulations, introduced by Gudmundsson~\etal~\cite{ghk-hodt-02}.
Recall that a pair of points $p,q\in D$ forms an edge
in the Delaunay triangulation of~$D$ if and only if there is a circle with $p$ and
$q$ on its boundary and no points from~$D$ in its interior~\cite{bcko-cgaa-08}.
We say that the pair $p,q\in D$ forms a
\emph{$k$-th order Delaunay edge}, or \emph{$k$-OD edge} for short,
if and only if there exists a circle with $p$ and $q$ on its boundary
and at most $k$ points from $D$ in its interior~\cite{ghk-hodt-02}.
(Thus the 0-OD edges are precisely the edges of the Delaunay triangulation.)
The $k$-OD edges are useful for us because of the following lemma.
%---------------------------------------------------------------------
\begin{lemma}\label{le:kod-lemma}
Let $\GmrRed$ be the subgraph of $\Gmr$ that contains only the $k$-OD edges,
where $k=\max(\minp-3,0)$. Then an \mst of $\GmrRed$ is also an \mst of $\Gmr$.
\end{lemma}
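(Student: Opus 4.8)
The plan is to prove the lemma through the standard cycle-property characterisation of minimum spanning trees: an edge never needs to be used in any \mst if it is a heaviest edge of some cycle. Concretely, it suffices to establish the following claim $(\ast)$: for every edge $(p,q)$ of $\Gmr$ that is \emph{not} a $k$-OD edge, there is a path from $p$ to $q$ in $\GmrRed$ all of whose edges have weight at most $\mdist(p,q)$. Given $(\ast)$, I would run Kruskal's algorithm on $\Gmr$, breaking ties so that $k$-OD edges are processed before non-$k$-OD edges of equal weight. When a removed (non-$k$-OD) edge $(p,q)$ is considered, the path guaranteed by $(\ast)$ consists of retained edges of weight at most $\mdist(p,q)$, all already processed, so $p$ and $q$ are already connected and Kruskal skips $(p,q)$. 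Hence the tree Kruskal builds lies entirely in $\GmrRed$ and is an \mst of both $\Gmr$ and $\GmrRed$; in particular the two graphs have the same \mst weight, so every \mst of $\GmrRed$ is an \mst of $\Gmr$, as claimed.

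To prove $(\ast)$ I would argue by induction on the Euclidean length $|pq|$, which takes only finitely many values. Fix a non-$k$-OD edge $(p,q)$ and write $w := \mdist(p,q)$. The key geometric object is the \emph{diameter circle} of $p$ and $q$, the circle having $pq$ as a diameter. Since $(p,q)$ is not a $k$-OD edge, every circle through $p$ and $q$ contains more than $k=\max(\minp-3,0)$ points of $D$ in its interior; in particular the diameter circle contains a set $I$ of at least $\minp-2$ interior points, and at least one since $k\geq 0$. Any $r\in I$ lies strictly inside the diameter circle, so the angle $\angle prq$ exceeds $\pi/2$ and $pq$ is the longest side of triangle $prq$; thus $|pr|<|pq|$ and $|rq|<|pq|$. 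I will route the detour through such an intermediate point $r$.

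The crux is to bound $\mdist(p,r)$ and $\mdist(r,q)$ by $w$. Two of the three terms in $\mdist(p,r)=\max(\cdist(p),\cdist(r),|pr|)$ are immediate: $\cdist(p)\leq w$ because it is one of the terms defining $w$, and $|pr|<|pq|\leq w$. The interesting term is $\cdist(r)$. Here I use that the diameter circle has diameter exactly $|pq|$, so all points of $I$ are pairwise within distance less than $|pq|$, and moreover each of $p$ and $q$ is within distance less than $|pq|$ of $r$ by the previous paragraph. Hence $r$ has at least $(|I|-1)+2\geq \minp-1$ other points of $D$ within distance strictly less than $|pq|$, which forces the $(\minp-1)$-th nearest neighbour of $r$, and therefore $\cdist(r)$, to be strictly less than $|pq|\leq w$. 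Combining the three bounds gives $\mdist(p,r)\leq w$, and symmetrically $\mdist(r,q)\leq w$.

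It then remains to assemble the path. Both $(p,r)$ and $(r,q)$ are strictly shorter than $(p,q)$ in Euclidean length. If such an edge is already a $k$-OD edge it lies in $\GmrRed$ and may be used directly; otherwise the induction hypothesis, applicable because its length is smaller, supplies a $\GmrRed$-path between its endpoints with all weights at most its mutual reachability distance, which is at most $w$. Concatenating the (sub)paths for $(p,r)$ and $(r,q)$ yields the desired $p$-to-$q$ path in $\GmrRed$ with all weights at most $w$, completing the induction and hence $(\ast)$. I expect the main obstacle to be the counting step that yields $\cdist(r)<|pq|$: getting its constant exactly right is what pins down the threshold $k=\minp-3$, and one must be careful to distinguish points strictly inside the diameter circle from points on its boundary (notably $p$ and $q$ themselves), as well as to check the degenerate cases $\minp\in\{1,2\}$, where the bound on $|I|$ degenerates but the argument still goes through because $p$ alone already supplies a sufficiently close neighbour of $r$.
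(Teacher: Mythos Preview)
Your proposal is correct and follows essentially the same approach as the paper: route a non-$k$-OD edge $(p,q)$ through an interior point $r$ of the diameter circle on $pq$, using the counting argument that this circle contains at least $\minp-2$ interior points to conclude $\cdist(r)<|pq|$, and hence that the detour edges have mutual-reachability weight at most $\mdist(p,q)$. The only cosmetic difference is the induction variable---you induct on the Euclidean length $|pq|$ and recurse on the diameter circles of $(p,r)$ and $(r,q)$, whereas the paper inducts on the number of interior points $|D(C)|$ and recurses on shrunk circles still passing through $p$ (resp.\ $q$)---but both schemes terminate and establish the same claim.
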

%---------------------------------------------------------------------
\begin{proof}
Imagine computing an \mst $\tree$ on $\Gmr$ using Kruskal's algorithm~\cite{clrs-ia-09}. This algorithm treats the edges $(p,q)$ of $\Gmr$
in order of increasing weight, that is, increasing values of $\mdist(p,q)$.
When it processes $(p,q)$ it checks if $p$ and $q$ are already in the
same connected component---in our application this component corresponds to a cluster
at the current value of $\eps$---and, if not, merges these components.
We will argue that whenever we process an edge~$(p,q)$ that is not in $\GmrRed$,
that is, an edge that is not a $k$-OD edge, then $p$ and $q$ are already in the same connected
component. Hence, there is no need to process $(p,q)$, which proves that
an \mst of $\GmrRed$ is also an \mst of $\Gmr$.

Let $C_{pq}$ be the circle such that $p$ and $q$ form a diametrical pair of~$C$,
and let $D(C_{pq})\subset D$ be the set of points lying in the interior of~$C_{pq}$.
If $|D(C_{pq})|\leq k$, then $(p,q)$ is a $k$-OD edge, so assume $|D(C_{pq})|\geq k+1$.
Note that $\cdist(r)<|pq|$ for all $r\in D(C_{pq})$.
Indeed, since $p,q$ is a diametrical pair of $C_{pq}$,
the distance from $r$ to any other point in~$C_{pq}$ (including $p$ and $q$)
is smaller than~$|pq|$. Hence, for $\eps=|pq|$ we have
$|\neps(r,D)| \geq |D(C_{pq})| + 2 = k+3 = \minp$. Thus all points
$r\in C_{pq}$ are core points when we process $(p,q)$.
Moreover, for all edges $(s,t)$ with $s,t \in D(C_{pq})\cup \{p,q\}$
we have $\mdist(s,t) \leq |pq|$.
Hence, it suffices to prove the following.
\\[2mm]
\noindent \emph{Claim:}
Let $C$ be a circle with two points $p,q$ on its boundary
and let $D(C)\subset D$ be the set of points from $D$ in the interior of~$C$.
Then there is a path from $p$ to $q$ in $\GmrRed$ that uses only
points in $D(C)\cup \{p,q\}$.
\\[2mm]
We prove this claim by induction on~$|D(C)|$. If $|D(C)|\leq k$
then $(p,q)$ is a $k$-OD edge itself and we are done. Otherwise,
pick any point $r\in D(C)$. Now shrink $C$, while keeping $p$
in its boundary, until we obtain a circle~$C_1$ that also has $r$ on its boundary, as shown in Figure~\ref{fig:k-delaunay}.
By induction, there is a path from $p$ to $r$ in $\GmrRed$ that
uses only points in $D(C_1)\cup \{p,r\} \subset D(C)\cup \{p,q\}$.
A similar argument shows that there is a path from $r$ to $q$ that
uses only points in $D(C)\cup \{p,q\}$.
This proves the claim and, hence, the lemma.
\end{proof}
%--------------------------------------------------------------------------
\begin{figure}
\centering
\includegraphics{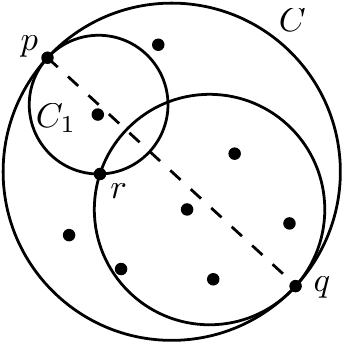}
\caption{Illustration of the recursive argument in the proof of Lemma~\ref{le:kod-lemma}.}
\label{fig:k-delaunay}
\end{figure}
%--------------------------------------------------------------------------
Gudmundsson~\etal showed that the number of $k$-OD edges is $O(n(k+1))$
and that the set of all $k$-OD edges can be computed
in $O(n(k+1)\log n)$ time with a randomized incremental algorithm.
Lemma~\ref{le:kod-lemma} implies that after computing the core distances and
the $k$-OD edges in $O(n\log n)$ time---recall that $k=\max(\minp-3,0)=O(1)$---we
can compute the \mst for $\Gmr$ by considering only $O(n)$ edges.
Thus computing the \mst can be done in $O(n\log n)$ time~\cite{clrs-ia-09}.
Since the rest of the algorithm takes linear time, we obtain the following theorem.
%--------------------------------------------------------------------------
\begin{theorem}
Let $D$ be a set of $n$ points in $\Reals^2$ and $\minp$ be a given
constant. We can compute the \hdbs hierarchy on $D$ for the Euclidean metric
with a randomized algorithm in $O(n \log n)$ expected time.
\end{theorem}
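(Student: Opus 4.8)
The plan is to follow the three-step recipe of Campello~\etal, which reduces computing the \hdbs hierarchy to (1)~finding all core distances $\cdist(p)$, (2)~building an \mst $\tree$ of the mutual reachability graph~$\Gmr$, and (3)~converting $\tree$ into a dendrogram. Steps~(1) and~(3) are comparatively routine, so I would dispose of them first and then concentrate all the effort on step~(2), which is the only place where achieving subquadratic time is nontrivial.

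For step~(1), I would observe that $\cdist(p)$ equals the distance from $p$ to its $\ell$-th nearest neighbor with $\ell=\minp-1$. Since $\minp$ is a fixed constant, computing the $\ell$ nearest neighbors of every point takes $O(n\ell\log n)=O(n\log n)$ time with a standard nearest-neighbor routine. For step~(3), once $\tree$ is available, sorting its $n-1$ edges by weight and sweeping them in increasing order recovers the merge events, producing the dendrogram in $O(n)$ time on top of the $O(n\log n)$ sort.

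The hard part is step~(2): $\Gmr$ is the complete graph on~$D$, so it has $\Theta(n^2)$ edges and we cannot afford even to inspect all of them. The key idea is to replace $\Gmr$ by a sparse subgraph that provably still contains an \mst of~$\Gmr$. Concretely, I would invoke Lemma~\ref{le:kod-lemma}, which guarantees that restricting to the $k$-OD edges with $k=\max(\minp-3,0)$ leaves the \mst unchanged. To exploit it, I would first compute the set of all $k$-OD edges; by the result of Gudmundsson~\etal this set has size $O(n(k+1))$ and can be constructed in $O(n(k+1)\log n)$ expected time by a randomized incremental algorithm. Because $k=O(1)$, both the size and the construction time collapse to $O(n)$ and $O(n\log n)$, respectively.

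Having the reduced graph $\GmrRed$ with only $O(n)$ edges, I would weight each surviving edge $(p,q)$ by $\mdist(p,q)=\max(\cdist(p),\cdist(q),|pq|)$, which is immediate once the core distances from step~(1) are known, and then run any standard \mst algorithm to obtain $\tree$ in $O(n\log n)$ time. By Lemma~\ref{le:kod-lemma}, $\tree$ is also an \mst of the full graph $\Gmr$, so feeding it into step~(3) yields the correct \hdbs hierarchy. Summing the three steps gives $O(n\log n)$ total, and the only randomization enters through the $k$-OD edge construction, which is why the bound is in expectation. I expect the single genuine obstacle to be justifying that the sparse graph suffices, but that is exactly what Lemma~\ref{le:kod-lemma} already establishes, so the remaining work is bookkeeping.
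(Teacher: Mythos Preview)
Your proposal is correct and follows essentially the same approach as the paper: compute core distances via $\ell$-nearest neighbors, invoke Lemma~\ref{le:kod-lemma} to replace $\Gmr$ by the sparse graph of $k$-OD edges (with $k=\max(\minp-3,0)$) computed via the randomized incremental algorithm of Gudmundsson~\etal, run a standard \mst on the resulting $O(n)$ edges, and convert to the dendrogram. The identification of the $k$-OD construction as the sole source of randomization and the accounting of each step's cost match the paper exactly.
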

%--------------------------------------------------------------------------

%--------------------------------------------------------------------------
\section{Approximate \dbss and \hdbs} \label{se:approx}
%--------------------------------------------------------------------------
 The approach from the previous section for computing the \hdbs hierarchy
 will not give a subquadratic bound in higher dimensions,
 since the number of Delaunay edges can be quadratic. The running
 time of the algorithm from Section~\ref{se:dbs} to compute a single
 \dbs clustering is subquadratic in any fixed dimension,
 but the running time quickly deteriorates as the dimension increases.
In this section we introduce an approximate version of \dbss and \hdbs, both of which can be computed in linear time. An approximation for \dbss that runs in expected linear time was already provided by Chen~\etal~\cite{csx-gadbc-05} Gan and Tao~\cite{gt-dbs-15}, however our approach runs has a slightly better dependency on the approximation factor $\delta$. To the best of our knowledge this is the first linear time approximation algorithm for $\hdbs$.

%--------------------------------------------------------------------------
\subsection{Approximate \dbss}\label{se:approx-dbscan}
%--------------------------------------------------------------------------
First we define what exactly is an approximate \dbss clustering.
Our definition of approximate \dbss is essentially the same as the definitions
by Chen~\etal~\cite{csx-gadbc-05} and Gan and Tao~\cite{gt-dbs-15}.
The main difference is that we base our definition on \dbss instead
of \dbs, which avoids some technical difficulties in the definition.

Let $\minp$ be a fixed constant. Let $\epsclusters{\eps}$
denote the set of clusters in the \dbss clustering for a given value of $\eps$.
We call a clustering $\C_1$ a \emph{refinement} of a clustering $\C_2$,
denoted by $\C_1\refine \C_2$, when
for every cluster $C_1\in\C_1$ there is a cluster $C_2\in \C_2$ with $C_1\subseteq C_2$.
Recall that, as $\eps$ increases, the \dbss clusters merge or expand and new singleton clusters may appear, but clusters do not shrink or disappear.
Hence, if $\eps<\eps'$ then\footnote{Here it is important that we consider
  \dbss and not \dbs. Indeed, in \dbs border points can ``flip'' between clusters
  as $\eps$ increases, and so we do not necessarily have $\epsclusters{\eps}\refine \epsclusters{\eps'}$.}
$\epsclusters{\eps}\refine \epsclusters{\eps'}$.
An approximate \dbss clustering is now defined as follows.
%--------------------------------------------------------------------------
\begin{definition}
A \emph{$\delta$-approximate \dbss clustering} of a data set $D$, for given
parameters $\eps$ and $\minp$, and a given error $\delta>0$, is now defined as
a clustering $\C^*$ of $D$ into clusters and noise such that
$\epsclusters{(1-\delta)\eps} \refine \C^* \refine \epsclusters{\eps}$.
\end{definition}
%--------------------------------------------------------------------------
Thus if we choose $\delta$ sufficiently small, then a $\delta$-approximate \dbss clustering
is very similar to the exact \dbss clustering for the given parameter values.

%--------------------------------------------------------------------------
\mypara{The algorithm.}
%--------------------------------------------------------------------------
As mentioned in the introduction, both Chen~\etal~\cite{csx-gadbc-05} and
Gan and Tao~\cite{gt-dbs-15} already presented algorithms for this.
We obtain a slightly better dependency on $\delta$ than Gan and Tao~\cite{gt-dbs-15}
by plugging in a better algorithm for approximate bichromatic closest pair.

Recall that the bottleneck in computing a \dbss clustering lies in checking,
for pairs $b,b'$ of neighboring boxes, whether there is a pair $(p,p')\in D(b)\times D(b')$
with $|p p'|\leq \eps$. We can perform this check
approximately by computing an \emph{approximate bichromatic closest pair}
$(p,p')\in D(b)\times D(b')$
such that $|p p'|\leq (1+\alpha) \cdot \dist(D(b),D(b'))$ for
$\alpha=\delta/(1-\delta)$, where
$\dist(D(b),D(b'))$ denotes the distance between the points of the true closest pair.
This can be done in $O((1/\alpha)^{d/3}(n_b+n_{b'}))=O((1/\delta)^{d/3}(n_b+n_{b'}))$
time~\cite{ac-bed-14}. We add the edge $(b,b')$ to the box graph~$\gbox$ when $|p p'|\leq \eps$.
This way we can obtain the following result.
%--------------------------------------------------------------------------
\begin{theorem}
Let $D$ be a set of $n$ points in $\Reals^d$, and let $\eps$ and $\minp$ be given
constants. Then, for any given $\delta>0$, we can compute a $\delta$-approximate
\dbss clustering on $D$ with respect to $\eps$ and $\minp$ for the Euclidean metric
in $O(n\log n + (1/\delta)^{d/3}n)$ time.
\end{theorem}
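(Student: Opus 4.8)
The plan is to reuse the \dbss pipeline of Section~\ref{se:dbs} at scale~$\eps$ and to weaken only the edge test of Step~3. First I would build the box graph $\gbox$ for parameter~$\eps$, now using boxes of side~$\eps/\sqrt d$ so that each box still has diameter at most~$\eps$ and only $O(1)$ neighbors; by the $\Reals^d$-analogues of Lemmas~\ref{le:box-graph-degree} and~\ref{le:box-graph-comp} this costs $O(n\log n)$. I would then label the $\eps$-core points exactly in $O(n)$ time using Lemma~\ref{le:find-core}. To build the reduced box graph $\gbox^*$ I process each neighboring pair $b,b'$ as follows: if $\min(n_b,n_{b'})<\minp$ I run the exact brute-force check over core pairs as in Step~3, and otherwise---when every point of $b$ and of $b'$ is a core point---I compute an approximate bichromatic closest pair $(p,p')\in D(b)\times D(b')$ with $|pp'|\le(1+\alpha)\,\dist(D(b),D(b'))$ and $\alpha=\delta/(1-\delta)$, in $O((1/\delta)^{d/3}(n_b+n_{b'}))$ time~\cite{ac-bed-14}, inserting the edge $(b,b')$ iff $|pp'|\le\eps$. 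The output $\C^*$ consists of the connected components of $\gbox^*$ (mapped to the core points they contain); every non-core point is declared noise. Summing the approximate-pair cost over all hard pairs gives $O((1/\delta)^{d/3}n)$, since each box has $O(1)$ neighbors and $\sum_b n_b\le n$; the brute-force checks add $O(n)$ exactly as in Lemma~\ref{le:find-core}, and together with the $O(n\log n)$ for $\gbox$ this is the stated bound.

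It then remains to prove the sandwich $\epsclusters{(1-\delta)\eps}\refine\C^*\refine\epsclusters{\eps}$, and here the identity $(1+\alpha)(1-\delta)=1$ is what makes the two inequalities meet. For $\C^*\refine\epsclusters{\eps}$ I would note that every inserted edge is a \emph{true} edge at scale~$\eps$: because the approximate pair length is at least $\dist(D(b),D(b'))$, the test $|pp'|\le\eps$ guarantees a genuine pair within distance~$\eps$, and in both the easy and the hard case this is a pair of core points, hence an edge of $\gcore$. Since any two core points inside one box lie within distance~$\eps$, a path in $\gbox^*$ translates into a path in $\gcore$, so two points merged in $\C^*$ are already together in $\epsclusters{\eps}$.

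For the other direction $\epsclusters{(1-\delta)\eps}\refine\C^*$ I would use that core-ness is monotone in~$\eps$: a point that is a core point at scale $(1-\delta)\eps$ is also a core point at scale~$\eps$, and is therefore present and correctly labeled in the structure built for~$\eps$. Now take two such points $r,s$ adjacent in the core graph at scale $(1-\delta)\eps$, say $r\in b$ and $s\in b'$, so $\dist(D(b),D(b'))\le|rs|\le(1-\delta)\eps$. If $b=b'$ they share a node of $\gbox^*$; otherwise the approximate pair satisfies $|pp'|\le(1+\alpha)(1-\delta)\eps=\eps$, so the edge $(b,b')$ is inserted (in the easy case the exact test finds $r,s$ directly). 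Walking along a path in the scale-$(1-\delta)\eps$ core graph, every such cluster therefore stays inside a single component of $\C^*$, which gives the refinement.

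The only real difficulty is the bookkeeping of two scales at once: the boxes and the core labels are computed a single time, at scale~$\eps$, whereas the lower half of the sandwich concerns the clustering at the smaller scale $(1-\delta)\eps$. What makes this harmless is precisely the monotonicity just used together with the fact that the box side~$\eps/\sqrt d$ is tuned to the \emph{larger} scale, so that no $(1-\delta)\eps$-core point is missed or mislabeled in the structure built for~$\eps$; once this is observed, the argument collapses to the same connected-components reasoning as in the exact algorithm, with the single quantitative ingredient being $(1+\alpha)(1-\delta)=1$.
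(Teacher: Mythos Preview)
Your proposal is correct and follows essentially the same route as the paper: replace the exact bichromatic closest-pair test in Step~3 by an $(1+\alpha)$-approximate one with $\alpha=\delta/(1-\delta)$, then verify the sandwich $\epsclusters{(1-\delta)\eps}\refine\C^*\refine\epsclusters{\eps}$ via the identity $(1+\alpha)(1-\delta)=1$ and monotonicity of core status. Your write-up is in fact slightly more careful than the paper's in two respects: you make the easy/hard case split explicit (so that the approximate test is only applied when both boxes contain $\geq\minp$ points and hence only core points), and your justification of $\C^*\refine\epsclusters{\eps}$ correctly uses that the returned pair is an actual pair, hence $|pp'|\ge\dist(D(b),D(b'))$, whereas the paper's corresponding sentence is somewhat garbled.
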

%--------------------------------------------------------------------------
\begin{proof}
Let $\C$ be the clustering computed using the approximate bichromatic closest pairs.
When $\dist(D(b),D(b')\leq \eps$, then the reported approximate bichromatic closest pair
$(p,p')$ has $|p p'|\leq \eps$, so the set of edges added to the box graph
is a subset of the edge set of the actual box graph at the given value of~$\eps$.
Hence, $\C \refine \epsclusters{\eps}$. On the other
hand, when $\dist(D(b),D(b'))\leq (1-\delta)\eps$ then the approximate closest
pair $(p,p')$ has $|p p'|\leq (1+\alpha)(1-\delta)\eps = \eps$.
Hence, we are guaranteed to add all edges of the box graph for $(1-\delta)\eps$
and so $\epsclusters{(1-\delta)\eps}\refine \C$.
Thus $\C$ is a $\delta$-approximate \dbss clustering.

Recall that the running time of our algorithm is $O(n\log n)$, plus
the time needed to compute the edges of the box graph. In the approximate
version the latter step takes time
\[
\sum_{b\in {\B^*}}\sum_{b'\in \neps(b,\B^*)} O((1/\delta)^{d/3}\cdot(n_{b}+n_{b'}))
= O((1/\delta)^{d/3}n).
\]
\end{proof}

\mypara{Approximate \hdbs.}
%--------------------------------------------------------------------------
Our definition of an approximate \hdbs hierarchy is based on the definition
of $\delta$-approximate \dbss clusterings: we say that a hierarchy is
a \emph{$\delta$-approximate \hdbs hierarchy} if, for any value of $\eps$,
the clustering extracted from the hierarchy is a $\delta$-approximate \dbss clustering
for that value of~$\eps$. Next we show how to compute a $\delta$-approximate \hdbs hierarchy
in $O(n\log n)$ time, in any fixed dimension.
\medskip

As in Section~\ref{se:hdbs} we follow the algorithm by Campello~\etal~\cite{cms-dbchde-13},
and we speed up Step~\ref{step:mst} of the algorithm by computing an \mst on a subgraph of the
mutual reachability graph $\Gmr$ rather than on the whole graph.
(Steps~\ref{step:coredist} and~\ref{step:convert} can still be done
in $O(n\log n)$ and $O(n)$ time, respectively.) The difference with the
exact algorithm of Section~\ref{se:hdbs} is that we will select the edges of the subgraph
in a different manner, using ideas from so-called $\theta$-graphs~\cite{ns-gsn-07}.

Let $p\in D$ be a point. We partition $\Reals^d$ into simplicial
cones with apex~$p$ and whose angular diameter is~$\theta$,
where $\theta$ will be specified later.
(The angular diameter of a cone~$\cone$ with apex~$p$ is the maximum angle
between any two vectors emanating from $p$ and inside~$\cone$.)
Let $\Gamma_p$ be the resulting collection of cones and
consider a cone~$\cone\in\Gamma_p$. Let $D(\cone)\subseteq D$ denote the set of points
inside~$\cone$. (If a point lies on the boundaries of several cones we
can assign it to one of these cones arbitrarily.) Pick a half-line $\ell_{\cone}$
with endpoint~$p$ that lies inside~$\cone$. A $\theta$-graph would
now be obtained by projecting all points from $D(\cone)$ orthogonally onto $\ell_{\cone}$,
and adding an edge from $p$ to the point closest to $p$ in this
projection, with ties broken arbitrarily. We do the same,
except that we add edges to the $k$ closest points for~$k:=2\cdot\minp-3$.
% See Fig.~\ref{fi:theta-graph}. \mdb{Adapt figure to show $\ell_c$.}
If $\cone$ contains fewer than $k$ points, we simply connect $p$
to all points in $D(\cone)$.
Doing this for all the cones $\cone\in\Gamma_p$ gives us a set $E_p$ of $O(k/\theta)=O(1/\theta)$
edges for point~$p$. Let $E(\theta) := \bigcup_{p\in D} E_p$.
The set $E(\theta)$ can be computed
by making a straightforward adaptation to the algorithm to compute
a $\theta$-graph in $\Reals^d$~\cite[Chapter 5]{ns-gsn-07},
leading to the following result.
%--------------------------------------------------------------------------
\begin{lemma}
$E(\theta)$ has $O(n/\theta^{d-1})$ edges and can be computed in $O((n/\theta^{d-1})\log^{d-1}n)$ time.
\end{lemma}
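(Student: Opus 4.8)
The plan is to establish the two claims of the lemma—the bound $|E(\theta)|=O(n/\theta^{d-1})$ and the construction time $O((n/\theta^{d-1})\log^{d-1}n)$—separately, in each case reducing to the standard theory of $\theta$-graphs and then absorbing the constant factor $k=2\minp-3=O(1)$.

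For the edge count I would begin with the size of each partition $\Gamma_p$. The cones in $\Gamma_p$ have apex $p$ and angular diameter $\theta$ and together cover $\Reals^d$, so their directions partition the sphere of directions $S^{d-1}$ into regions of angular diameter $\theta$; a routine packing argument gives $|\Gamma_p|=\Theta(1/\theta^{d-1})$. Moreover the $\Gamma_p$ may all be taken as translates of one fixed partition, a point I will reuse for the running-time bound. Each cone $\cone\in\Gamma_p$ contributes at most $k$ edges to $E_p$, namely the edges to its $k$ projection-closest points (or to all of $D(\cone)$ when $|D(\cone)|<k$). Hence $|E_p|\leq k\cdot|\Gamma_p|=O(1/\theta^{d-1})$ since $k=O(1)$, and summing over the $n$ points of $D$ yields $|E(\theta)|=O(n/\theta^{d-1})$.

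For the running time I would invoke the $\theta$-graph construction of Narasimhan and Smid in $\Reals^d$~\cite[Chapter~5]{ns-gsn-07} and adapt it. That algorithm treats the cone orientations one at a time; there are $\Theta(1/\theta^{d-1})$ of them, shared by all points because the $\Gamma_p$ are translates of a common partition. For a single fixed orientation the task is, for every point $p$, to find the point of $D$ lying in the translated cone $\cone+p$ whose orthogonal projection onto the cone axis $\ell_{\cone}$ is nearest $p$. After the standard coordinate change turning cone membership into a dominance condition, this is solved simultaneously for all $p$ with a $(d-1)$-dimensional range/dominance structure in $O(n\log^{d-1}n)$ time, and multiplying by the number of orientations gives $O((n/\theta^{d-1})\log^{d-1}n)$.

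The only genuine modification—and the single step I would verify with care—is that here we need the $k$ projection-closest points in each cone rather than the single nearest one. I expect this to be asymptotically free: since $k=O(1)$, one can either augment the dominance structure so that a query reports the $k$ smallest points in its region in $O(\log^{d-1}n+k)$ time, or simply iterate the single-nearest procedure $k$ times while excluding the points already found. In either case the per-orientation cost remains $O(n\log^{d-1}n)$ and the stated overall bound is unaffected. I anticipate the main obstacle to be expository rather than mathematical, namely recalling precisely the dominance/range-tree machinery underlying the base $\theta$-graph algorithm; once that is in hand, the extension from one nearest neighbour to a constant number $k$ is immediate.
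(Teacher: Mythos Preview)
Your proposal is correct and matches the paper's approach exactly: the paper does not give an explicit proof of this lemma but simply states that it follows by ``a straightforward adaptation to the algorithm to compute a $\theta$-graph in $\Reals^d$~\cite[Chapter~5]{ns-gsn-07},'' and your argument spells out precisely that adaptation---counting $\Theta(1/\theta^{d-1})$ cones per apex, at most $k=O(1)$ edges per cone, and absorbing the change from one projection-nearest neighbour to $k$ into the per-orientation $O(n\log^{d-1}n)$ cost of the standard construction.
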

%--------------------------------------------------------------------------
The set $E(\theta)$, where $\theta$ is chosen such that $\cos \theta \geq 1-\delta$,
defines the subgraph $\GmrRed(\delta)$ on which we compute the \mst in
Step~\ref{step:mst}. Since $\cos \theta > 1-\theta^2/2$, we have
$\cos \theta \geq 1-\delta$ when $\theta :=\sqrt{2\delta}$.
% (Note that the weights of the edges in $\GmrRed(\delta)$ is still given by $\mdist(p,q)$.)
Next we show that an \mst on $\GmrRed(\delta)$ defines a $\delta$-approximate
\hdbs clustering.
%--------------------------------------------------------------------------
\begin{lemma}\label{le:approx-theta}
Let $\tree$ be an \mst of $\GmrRed(\delta)$ and let $\eps>0$.
Let $\C(\tree,\eps)$ be the clustering induced by $\tree$.
Then $\C$ is a $\delta$-approximate \dbss clustering for the given~$\eps$.
\end{lemma}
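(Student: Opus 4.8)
The goal is to establish the two refinements $\epsclusters{(1-\delta)\eps} \refine \C(\tree,\eps) \refine \epsclusters{\eps}$, where $\C(\tree,\eps)$ is the clustering obtained from $\tree$ by deleting every edge of weight exceeding $\eps$ and discarding the components that contain no core point. The plan rests on one elementary dictionary between the two metrics: for any pair $p,q$ we have $\mdist(p,q)\leq\eps$ if and only if $\cdist(p)\leq\eps$, $\cdist(q)\leq\eps$ and $|pq|\leq\eps$, i.e.\ if and only if $p$ and $q$ are both core points at scale~$\eps$ joined by an edge of the core graph $\gcore$. Hence the threshold graph of $\Gmr$ at level $\eps$ has exactly the \dbss clusters (together with isolated non-core points) as its connected components.

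For the easy inclusion $\C(\tree,\eps)\refine\epsclusters{\eps}$ I would use only that $\GmrRed(\delta)$ is a subgraph of $\Gmr$ with identical weights. Any two points placed in the same cluster of $\C(\tree,\eps)$ are joined in $\tree$ by a path of edges of weight $\leq\eps$; by the dictionary each such edge connects two core points within distance $\eps$, so the entire path lies in one component of $\gcore$, that is, in a single cluster of $\epsclusters{\eps}$. Singleton core points remain singleton clusters and isolated non-core points are noise in both clusterings, so the right-hand refinement follows.

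For the hard inclusion $\epsclusters{(1-\delta)\eps}\refine\C(\tree,\eps)$ I would argue by transitivity: it suffices to show that two core points $p,q$ at scale $(1-\delta)\eps$ with $|pq|\leq(1-\delta)\eps$ — equivalently $\mdist(p,q)\leq(1-\delta)\eps$ — fall in the same cluster of $\C(\tree,\eps)$. Since an \mst preserves threshold connectivity, it is enough to exhibit a path from $p$ to $q$ \emph{inside $\GmrRed(\delta)$ all of whose edges have weight at most $\eps$}. I would build such a path by greedy $\theta$-graph routing toward $q$. Starting at $u:=p$, let $\cone\in\Gamma_u$ be the cone containing $q$. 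If $u$ is joined to $q$ inside $\cone$ (in particular whenever $\cone$ holds fewer than $k$ points), take the edge $(u,q)$ and stop; otherwise route to the connected point $r$ of smallest projection and recurse from $r$. Two invariants must hold at every hop. First, the hop is short: in the routing step every connected point of $\cone$ has projection at most that of $q$, so $|ur|\leq \mathrm{proj}(q)/\cos\theta\leq |uq|/\cos\theta\leq (1-\delta)\eps/(1-\delta)=\eps$, using $\cos\theta\geq1-\delta$ and the fact that $|uq|\leq(1-\delta)\eps$ is maintained because the distance to $q$ strictly decreases. Second — and this is precisely what joining each apex to $k=2\minp-3$ neighbors per cone buys us — every routing vertex is a core point at scale $\eps$: two points of a cone of angular diameter $\theta$ that lie within $\eps$ of the apex also lie within $\eps$ of each other, so $u$ together with its connected points in $\cone$ form a set of at least $\minp$ points of pairwise distance $\leq\eps$, whence each of them has $\cdist\leq\eps$. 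Combining the two invariants, each hop $(u,r)$ satisfies $\mdist(u,r)=\max(\cdist(u),\cdist(r),|ur|)\leq\eps$, and the closing edge into $q$ is short with both endpoints core, so it too has weight $\leq\eps$. The standard $\theta$-graph progress lemma (valid for $\theta=\sqrt{2\delta}$ small) gives $|rq|<|uq|$, so the routing visits strictly decreasing distances to $q$ and terminates after finitely many hops, producing the desired path.

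The easy inclusion and the geometric progress are routine; the crux is the \emph{simultaneous} control of the two ingredients of the mutual-reachability weight along the path. Calibrating the cone aperture through $\cos\theta\geq1-\delta$ bounds the Euclidean part $|ur|$ by $\eps$, while choosing enough neighbors per cone bounds the core-distance part by certifying that every intermediate vertex is a core point at scale $\eps$; it is exactly this interaction that produces the $\eps$-versus-$(1-\delta)\eps$ gap demanded by the definition of $\delta$-approximate \dbss. I would close by dispatching the degenerate cases (ties in projection, and cones with fewer than $k$ points, where $u$ is simply joined to all of them and hence to $q$) and by observing that the constructed path lies in $\GmrRed(\delta)$ with all weights $\leq\eps$, so cutting $\tree$ at level $\eps$ cannot separate $p$ from $q$, which is the left-hand refinement.
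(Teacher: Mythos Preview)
Your proof is correct and in one respect cleaner than the paper's. Both arguments dispatch the easy refinement $\C(\tree,\eps)\refine\epsclusters{\eps}$ the same way and attack the hard refinement via $\theta$-graph routing toward $q$, but the invariants differ. The paper frames the argument as an induction on $|pq|$ over edges of $\Gmr[(1-\delta)\eps]$: to apply the induction hypothesis to the pair $(r,q)$ it must show the strong bound $\cdist(r)\leq(1-\delta)\eps$, not merely $\cdist(r)\leq\eps$. This forces it to route specifically to the $(\minp-1)$-th closest projected point and to run a two-case analysis (according to whether $|pr|\leq(1-\delta)\eps$), the second case requiring a somewhat delicate calculation and the extra smallness condition $2\sin\theta<\cos^2\theta$. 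Your direct routing keeps only the invariant $|uq|\leq(1-\delta)\eps$ and needs just the weak bound $\cdist(r)\leq\eps$, which you obtain in one line from the observation that the apex $u$ together with its $k$ connected neighbours in the cone form a mutual-$\eps$ clique of size $k+1\geq\minp$. This avoids the case split and the auxiliary angle condition entirely; in fact your clique argument would already go through with $k=\minp-1$ neighbours per cone, so the value $k=2\minp-3$ in the construction is slack for your proof (it is tight only for the paper's stronger intermediate claim). One small slip: the chain $|ur|\leq\mathrm{proj}(q)/\cos\theta\leq|uq|/\cos\theta$ is garbled---the cone lemma gives $|ur|\leq|uq|/\cos\theta$ directly whenever $r$ projects no farther than $q$---but the conclusion you draw from it is the intended one.
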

%--------------------------------------------------------------------------
\begin{proof}
For a weighted graph~$\graph$ and threshold weight $\tau$,
let $\graph[\tau]$ denote the subgraph obtained by removing all edges of weight
greater than~$\tau$. In order to show that
$\C(\tree,\eps)\refine \epsclusters{\eps}$ we must show that
any connected component of $\tree[\eps]$ is contained in a connected
component of $\Gmr[\eps]$.
    % (Note that singleton components $\{p\}$ with $\cdist(p)>\eps$ do not
    %  correspond to a cluster, but this does not invalidate the argument.)
Since $\tree$ is a subgraph of $\Gmr$ this is obviously the case.
\medskip

Next we prove that $\epsclusters{(1-\delta)\eps} \refine \C(\tree,\eps)$.
For this we must prove that any connected component of $\Gmr[(1-\delta)\eps]$
is contained in a connected component of $\tree[\eps]$. Since $\tree$
is an \mst of $\GmrRed(\delta)$, the connected components of $\tree[\eps]$
are the same as the connected components of $\GmrRed(\delta)[\eps]$.
It thus suffices to show the following: for any edge~$(p,q)\in \Gmr[(1-\delta)\eps]$,
there is a path from $p$ to $q$ in $\GmrRed(\delta)[\eps]$.
We show this by induction on $|pq|$, similarly to the way
in which it is shown that a $\theta$-graph has a small dilation.
%-----------------------------------------------------------------------------
\begin{figure}[t]
\begin{center}
\includegraphics{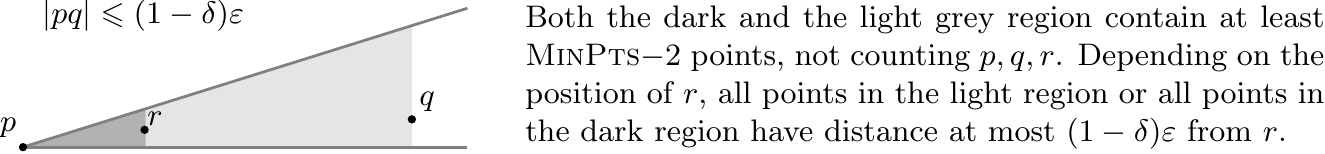}
\end{center}
\caption{Illustration for the proof of Lemma~\protect\ref{le:approx-theta}.}
\label{fi:approx-proof}
\end{figure}
%-----------------------------------------------------------------------------

Let $(p,q)$ be an edge in $\Gmr[(1-\delta)\eps]$.
Consider the set $\Gamma_p$ of cones with apex~$p$ that was used to define
the edge set $E_p$, and let $c\in \Gamma_p$ be the cone containing~$q$.
Recall that we added an edge from $p$ to the $k$ points in $c$ that
are closest to $p$ when projected onto the half-line~$\ell_c$, where
$k := 2\cdot\minp -3$. Hence, when $q$ is one of these $k$ closest
points we are done. Otherwise, let $r\in D(c)$ be the $(\minp-1)$-th closest point.
\\[2mm]
\emph{Claim:}
(i) $\cdist(r) \leq (1-\delta)\eps$,
(ii) $|pr| \leq \eps$, and
(iii) $|rq| < |pq|$.
\\[2mm]
Before we prove this claim, we first we argue that the claim allows us
to finish our inductive proof. Since $(p,q)$ is an edge
in $\Gmr[(1-\delta)\eps]$ we have $\mdist(p,q) \leq (1-\delta)\eps$.
Thus $|pq|\leq (1-\delta)\eps$ and $\cdist(q)\leq (1-\delta)\eps$.
Together with parts~(i)~and~(iii) of the claim this implies that $(r,q)$ is an edge in
$\Gmr[(1-\delta)\eps]$ with $|rq| < |pq|$.

In the base case of our inductive proof, where $(p,q)$ is the
shortest edge in~$\Gmr[(1-\delta)\eps]$, this cannot
occur. Thus $q$ must be one of the $k$ closest points in the
cone~$c$, and we have an edge between $p$ and $q$ in $\GmrRed(\delta)[\eps]$ by construction.

If we are not in the base case, then we have a path from
$r$ to $q$ in $\GmrRed(\delta)[\eps]$ by the induction hypothesis.
Moreover, $(p,r)$ is an edge in $\GmrRed(\delta)$ by construction.
Since $|pr|\leq \eps$ by part~(ii) of the claim, we have a path from $p$ to $q$
in $\GmrRed(\delta)[\eps]$.

It remains to prove the claim. For this we use the following
fact~\cite[Lemma 4.1.4]{ns-gsn-07}, which
is also used to prove that a $\theta$-graph has small dilation.
\\[2mm]
\emph{Fact:}
Let $s,t$ be any two points in a cone~$c\in \Gamma_p$ such that, when projected
onto the half-line~$\ell_c$, the distance from $p$ to $s$ is smaller than the
distance from $p$ to $t$. Then $|ps| \leq |pt|/\cos\theta$
and $|st| < |pt|$.
\\[2mm]
Part~(iii) of the claim immediately follows from this fact by taking $s:=r$ and $t:=q$.
Part~(ii) follows again by taking $s:=r$ and $t:=q$, using that
$|pq|\leq (1-\delta)\eps$ and that we have chosen $\delta$
such that $\cos\theta = 1-\delta$.
For part~(i) we must prove that there are at least $\minp-1$ points
within distance~$(1-\delta)\eps$ from~$r$.
Recall that $r$ is the $(\minp-1)$-th closest point to $p$
in the cone $c$, measured in the projection onto the half-line~$\ell_c$.
Let $r_1,\ldots,r_{k}$ be the $k$ closest points; thus $r=r_i$ for $i=\minp-1$.
We distinguish two cases:
$|pr|\leq (1-\delta)\eps$ and $|pr|> (1-\delta)\eps$.
See also Fig.~\ref{fi:approx-proof}.

In the former case we can conclude that $|r_i r|\leq (1-\delta)\eps$
for all $1\leq i\leq \minp-2$ by setting $s:= r_i$ and $t:=r$
and using $|pr|\leq (1-\delta)\eps$. Thus, including the
point~$p$, we know that $r$ has at least $\minp-1$ points
within distance~$(1-\delta)\eps$.

In the latter case we will argue that $|r_i r|\leq (1-\delta)\eps$
for all $\minp \leq i\leq 2\cdot\minp-3$. Since by part~(iii) of
the claim we have $|rq|\leq (1-\delta)\eps$, we conclude that
also in the latter case~$r$ has at least $\minp-1$ points
within distance~$(1-\delta)\eps$. To argue that $|r_i r|\leq (1-\delta)\eps$
we first note that for any point $s\in c$ we have
$|s s^*|\leq \sin\theta\cdot |p s|$, where
$s^*$ denotes the orthogonal projection of $s$ onto~$\ell_c$.
Thus
\[
\begin{array}{lll}
|r r_i| & \leq & |r r^*| + |r^* r_i^*| + |r_i r_i^*| \\[2mm]
             & \leq & \sin\theta\cdot |pr| + |r^* q^*| + \sin\theta\cdot |p r_i| \\[2mm]
             & \leq & 2\sin\theta \cdot |pq|/\cos \theta + |r^* q^*| \\[2mm]
             & = & 2\sin\theta \cdot |pq|/\cos \theta + |p q^*| - |p r^*| \\[2mm]
             & \leq & 2 \sin\theta\cdot|pq|/\cos \theta + |pq| - |pr|\cos\theta \\[2mm]
             & \leq & \left( 2\frac{\sin\theta}{\cos\theta} + 1-\cos \theta \right)\cdot(1-\delta)\eps
\end{array}
\]
where the last inequality uses $|pq|\leq (1-\delta)\eps$ and that
we are now considering the case $|pr|>(1-\delta)\eps$. Since we
can assume that $\theta$ is small enough to ensure $2\sin\theta<\cos^2\theta$,
we conclude that, indeed, $|r r_i| \leq (1-\delta)\eps$.
This finishes the proof for part~(i) of the claim and hence, of the lemma.
\end{proof}
%--------------------------------------------------------------------------
Combining the previous two lemmas we obtain the following theorem.
%--------------------------------------------------------------------------
\begin{theorem}\label{th:hdbs-approx}
Let $D$ be a set of $n$ points in $\Reals^d$, and let $\eps$ and $\minp$ be given
constants. Then, for any given $\delta>0$, we can compute a $\delta$-approximate
\hdbs clustering on $D$ with respect to $\eps$ and $\minp$ for the Euclidean metric
in $O((n/\delta^{(d-1)/2})\log^{d-1}n)$ time.
\end{theorem}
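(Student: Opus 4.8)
The plan is to instantiate the three-step framework of Campello~\etal from Section~\ref{se:hdbs}, but to replace the full mutual reachability graph~$\Gmr$ in Step~\ref{step:mst} by the sparse subgraph $\GmrRed(\delta)$ induced by the edge set $E(\theta)$. First I would fix the cone angle by setting $\theta := \sqrt{2\delta}$, so that $\cos\theta > 1-\theta^2/2 \geq 1-\delta$; this is exactly the hypothesis under which $\GmrRed(\delta)$ was defined and under which Lemma~\ref{le:approx-theta} applies.

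I would then carry out the three steps in order. Step~\ref{step:coredist} computes each core distance $\cdist(p)$ as the $(\minp-1)$-th nearest-neighbor distance of $p$; since $\minp=O(1)$ this takes $O(n\log n)$ time. Step~\ref{step:mst} builds $E(\theta)$ geometrically (choosing, in each cone with apex~$p$, the $k=2\minp-3$ points closest in the projection onto $\ell_c$), weights these edges by $\mdist$, and computes an \mst $\tree$ of the resulting graph $\GmrRed(\delta)$. Step~\ref{step:convert} converts $\tree$ into a dendrogram in which each internal node stores the weight~$\eps$ at which its cluster forms, which costs $O(n)$ time after sorting the edges of $\tree$ by weight.

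For correctness the key observation is that a single \mst $\tree$ serves all thresholds at once: extracting the clustering from the dendrogram at any level~$\eps$ recovers precisely the connected components of $\tree[\eps]$, i.e.\ the clustering $\C(\tree,\eps)$. By Lemma~\ref{le:approx-theta}, $\C(\tree,\eps)$ is a $\delta$-approximate \dbss clustering for \emph{every} $\eps>0$, so by the definition of a $\delta$-approximate \hdbs hierarchy the dendrogram is exactly such a hierarchy. I expect this to be the subtlest point, but it is immediate given the universally quantified statement of Lemma~\ref{le:approx-theta}: the only thing to verify is that the dendrogram extraction really returns the components of $\tree[\eps]$, which follows from the way each node records its formation weight.

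For the running time I would substitute $\theta=\sqrt{2\delta}$ into the edge bound $O(n/\theta^{d-1})$, giving $O(n/(2\delta)^{(d-1)/2}) = O(n/\delta^{(d-1)/2})$ edges, constructed in $O((n/\delta^{(d-1)/2})\log^{d-1}n)$ time. Computing the \mst on these $O(n/\delta^{(d-1)/2})$ edges adds $O((n/\delta^{(d-1)/2})\log n)$, while Steps~\ref{step:coredist} and~\ref{step:convert} contribute $O(n\log n)$ and $O(n)$. The dominant term is the construction of $E(\theta)$, yielding the claimed $O((n/\delta^{(d-1)/2})\log^{d-1}n)$ bound. The only mild obstacle is bookkeeping: confirming that for $d\geq2$ this edge-construction cost indeed dominates both the nearest-neighbor and the \mst costs; everything else reduces to assembling the two preceding lemmas.
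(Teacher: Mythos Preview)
Your proposal is correct and follows exactly the approach the paper takes: the paper's entire proof is the single sentence ``Combining the previous two lemmas we obtain the following theorem,'' and you have simply spelled out that combination---setting $\theta=\sqrt{2\delta}$, invoking the edge-count lemma for the time bound, and invoking Lemma~\ref{le:approx-theta} for correctness at every level of the dendrogram. Your bookkeeping on the running time (edge construction dominating the \mst and nearest-neighbor steps) is also in line with the paper.
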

%--------------------------------------------------------------------------

%------------------------------------------------------------------------------------
\section{Experimental evaluation} \label{se:experiments}
%------------------------------------------------------------------------------------
In this section we experimentally investigate the efficiency of our new algorithm
and compare it to the original algorithm. The only goal of these experiments is to serve as a proof of concept to illustrate that indeed for very basic point distributions the original algorithm has a bad running time, whereas the new algorithm performs much better.
We first describe some implementation details and we discuss our implementation of
the original algorithm. We then describe the data sets and parameters for the tests.
Finally we present the results and conclusions.

%-----------------------------------------------------------------------------------------------------
\mypara{Implementation details.}
%-----------------------------------------------------------------------------------------------------
We implemented two versions of our new algorithm: the strip-based approach
for Step~1 and the grid-based approach.
In Step~3, where we have to decide for all pairs~$b,b'$ of neighboring
boxes whether there are core points $p\in D(b)$ and $p'\in D(b')$
with $|p p'|\leq \eps$, we use the following randomized brute-force approach.
(This is instead
of the theoretically better Delaunay triangulations or spherical emptiness queries.)
Without loss of generality assume $n_{b} \leq n_{b'}$.
For each core point $p \in D(b)$ we first test if $\dist(p,b')\leq \eps$. If not,
no point of $D(b')$ can be within distance $\eps$ of~$p$. If so, we test each point
$p' \in D(b')$ to see if $|pp'|\leq\eps$. If during this procedure we find two
core points within distance $\eps$ from each other, we can stop.
The randomization is obtained by considering the points in each box in random order;
it ensures that if there are many pairs within distance~$\eps$, we expect
to find such a pair early.

The original \dbs algorithm performs a spherical range query for each point $p\in D$
to find all $q\in D$ with $|pq|\leq \eps$. To this end an
indexing structure such as an R-tree is typically used. In our implementation
we use the box-graph to answer the queries. Note that an R-tree also groups the points into boxes
at the leaf level; the tree structure is then used find the leaf boxes
intersecting the query range, after which the points inside these boxes are
tested. The boxes in our box-graph can be seen as being optimized for the
radius~$\eps$ of the query range, and so the box-graph should be at least as
efficient as a general-purpose R-tree.

\mypara{Experimental set-up.}
We ran the algorithms on several synthetic data sets in 2D and 4D,
each consisting of four clusters. (For other numbers of clusters the results are similar.)
The clusters either have a uniform or Gaussian distribution, and their centers
are placed roughly 700 units apart in a hypercube with edge length~1,000 units.
Uniform clusters
are generated within a ball of radius 300 around the cluster center, Gaussian clusters
are generated with a standard deviation of 100. For several data sets we added
5\% noise to the input, uniformly inside a slightly expanded bounding box around the clusters.

\mypara{Parameters.}
We analyse the efficiency of the algorithms with respect to two parameters:
the input size and the density within the clusters.
As a measure of the density we use $n (r/\eps)^d$, where $r$ is the cluster radius;
thus, for the uniform distribution, the density represents the expected number of points within distance
$\eps$ from a core point, and also the expected number of points in the boxes of the box-graph
within the clusters.

\mypara{Measurements.}
We compare the algorithms in two different ways: we measure the actual execution times,
and the number of pairs of points for which a distance computation is done.
The latter is the main operation for finding the clusters in both the new and the
original algorithm, so it provides a good implementation-independent measure.
For the original algorithm we also count the sum of neighborhood sizes of all points.
Following earlier work, we call this the number of \emph{seeds}.
This is a lower bound for the number of operations needed in the original algorithm,
independent of the indexing structure used to find the neighborhoods.

\mypara{Result for fixed input size.}
In these experiments we fix the input size and run the algorithms with different values of~$\eps$.
In 2D we ran the algorithms on a data set of each type---uniform or Gaussian and with or without noise---in which the clusters contain 500,000 points each. In 4D we use the same types of data sets, but with 200,000 points per cluster. The results are shown in Fig.~\ref{fig:2D-fixed-n}.

\begin{figure}[p]
\centering
\begin{tabular}{@{}c@{}c@{}c@{}c@{}}
\multicolumn{4}{c}{2D}\\
Gaussian & Gaussian + noise & uniform & uniform + noise\\
\includegraphics[trim=52 250 150 250 ,clip,scale=0.25]{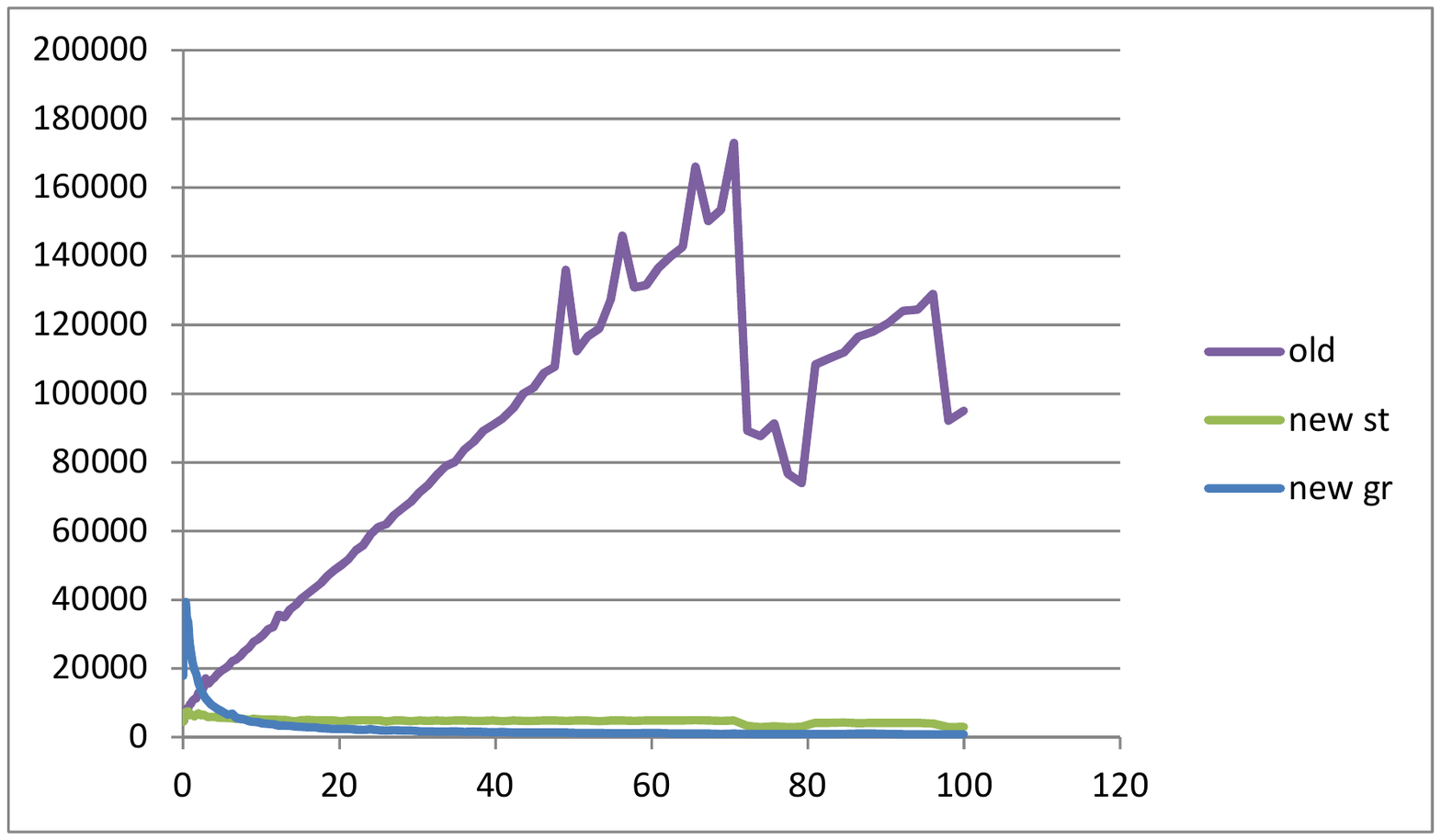} &
\includegraphics[trim=52 250 150 250 ,clip,scale=0.25]{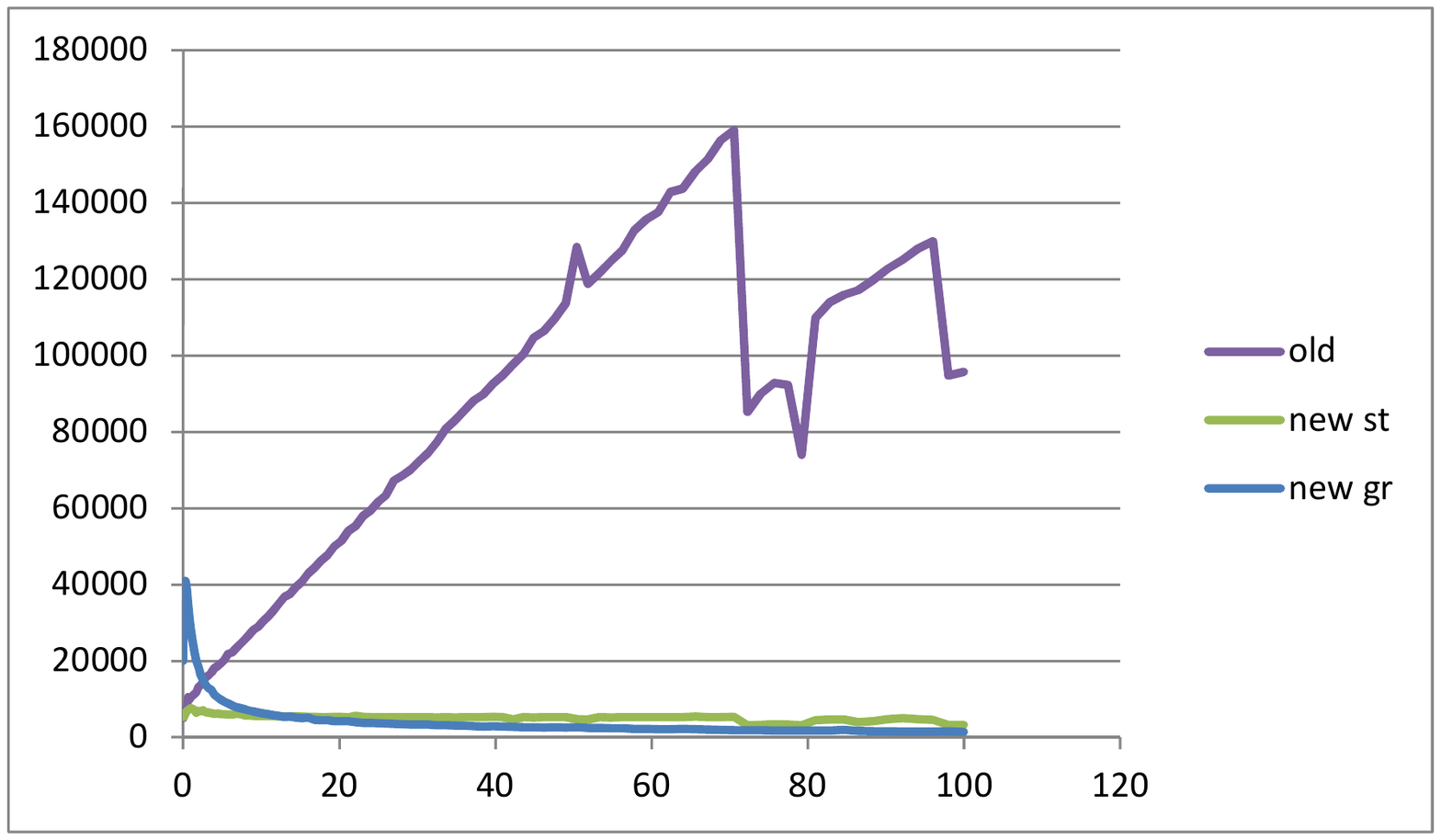} &
\includegraphics[trim=52 250 150 250 ,clip,scale=0.25]{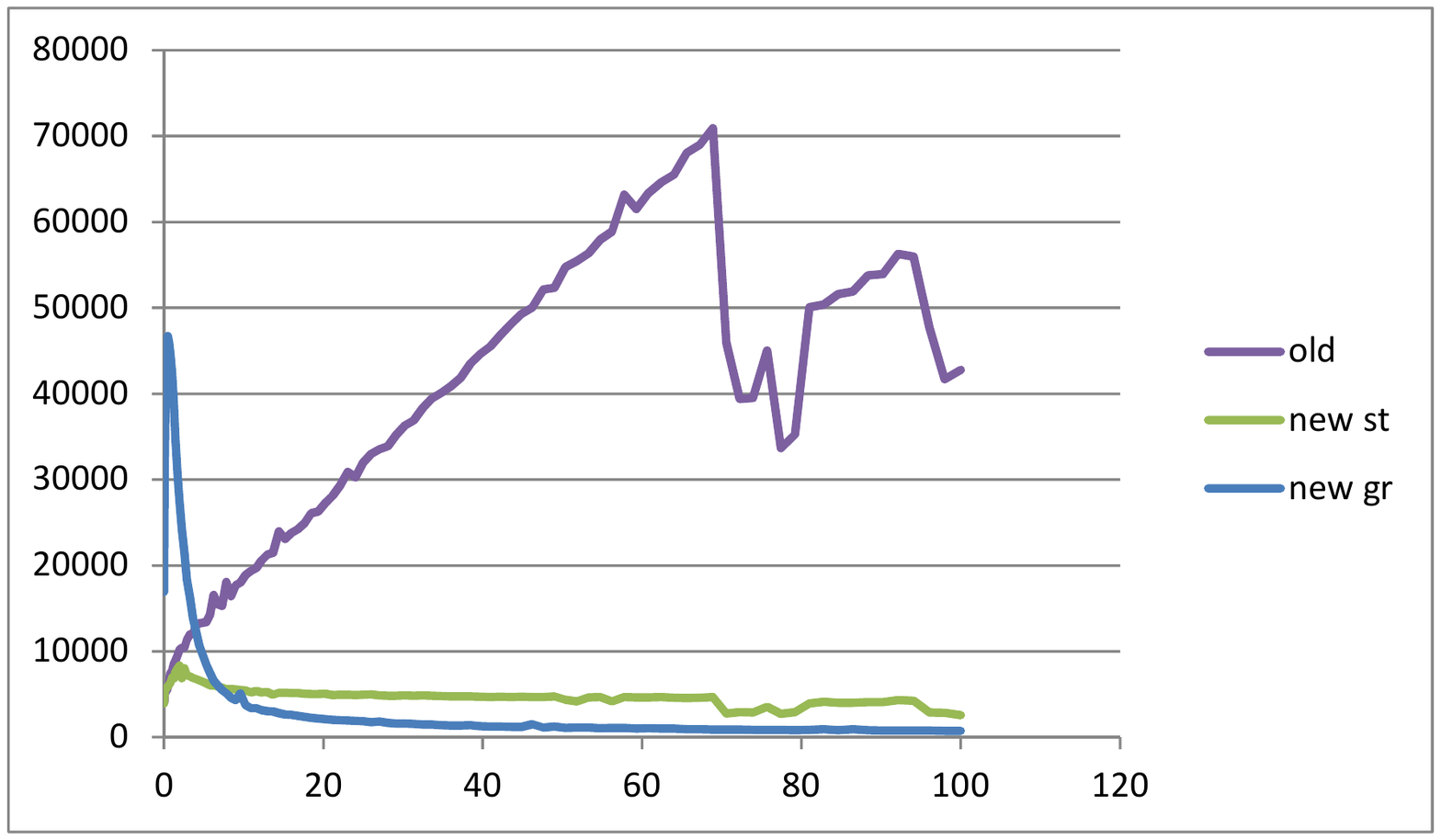} &
\includegraphics[trim=52 250 52 250 ,clip,scale=0.25]{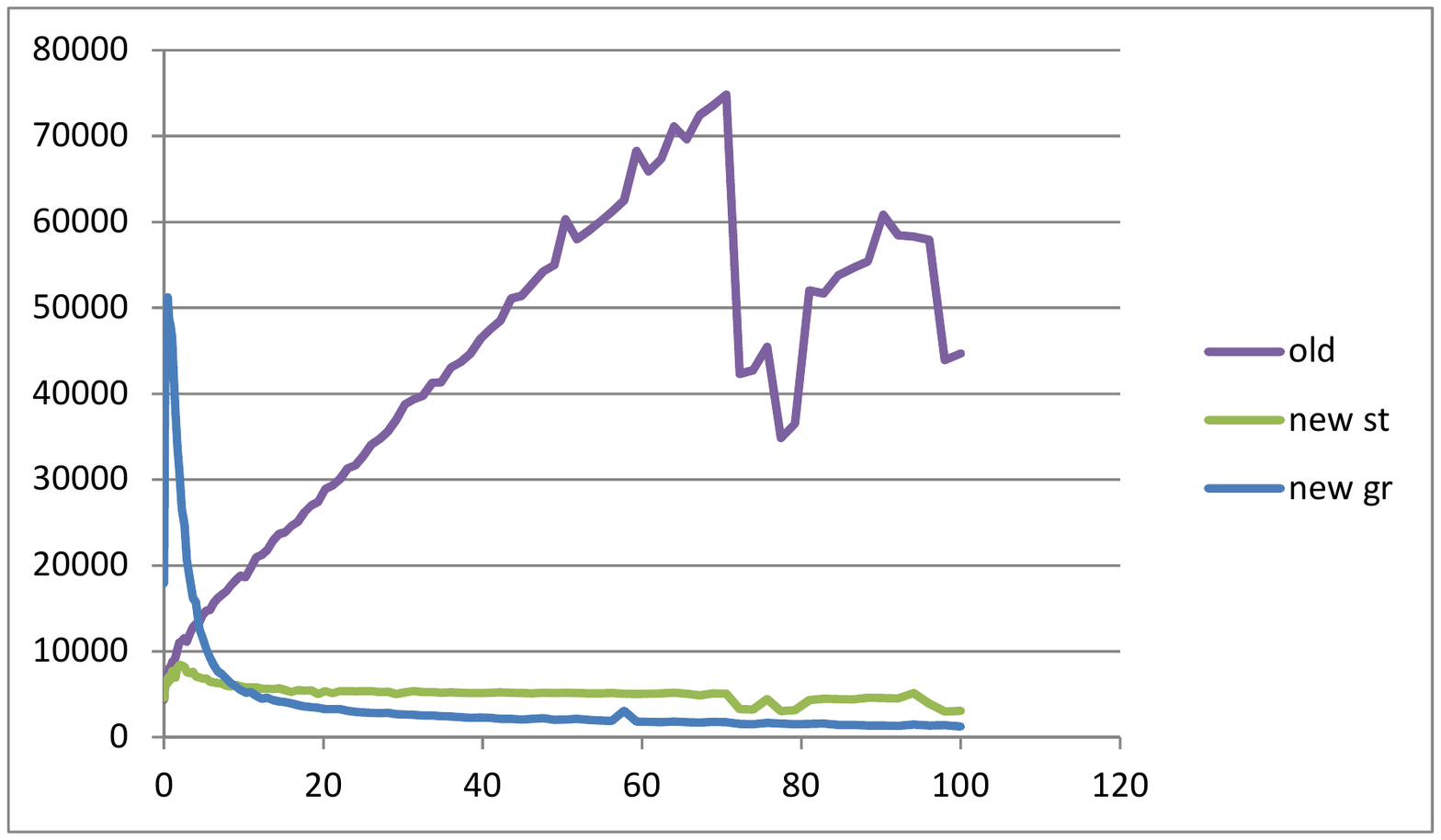}
\\
\includegraphics[trim=52 250 150 250 ,clip,scale=0.25]{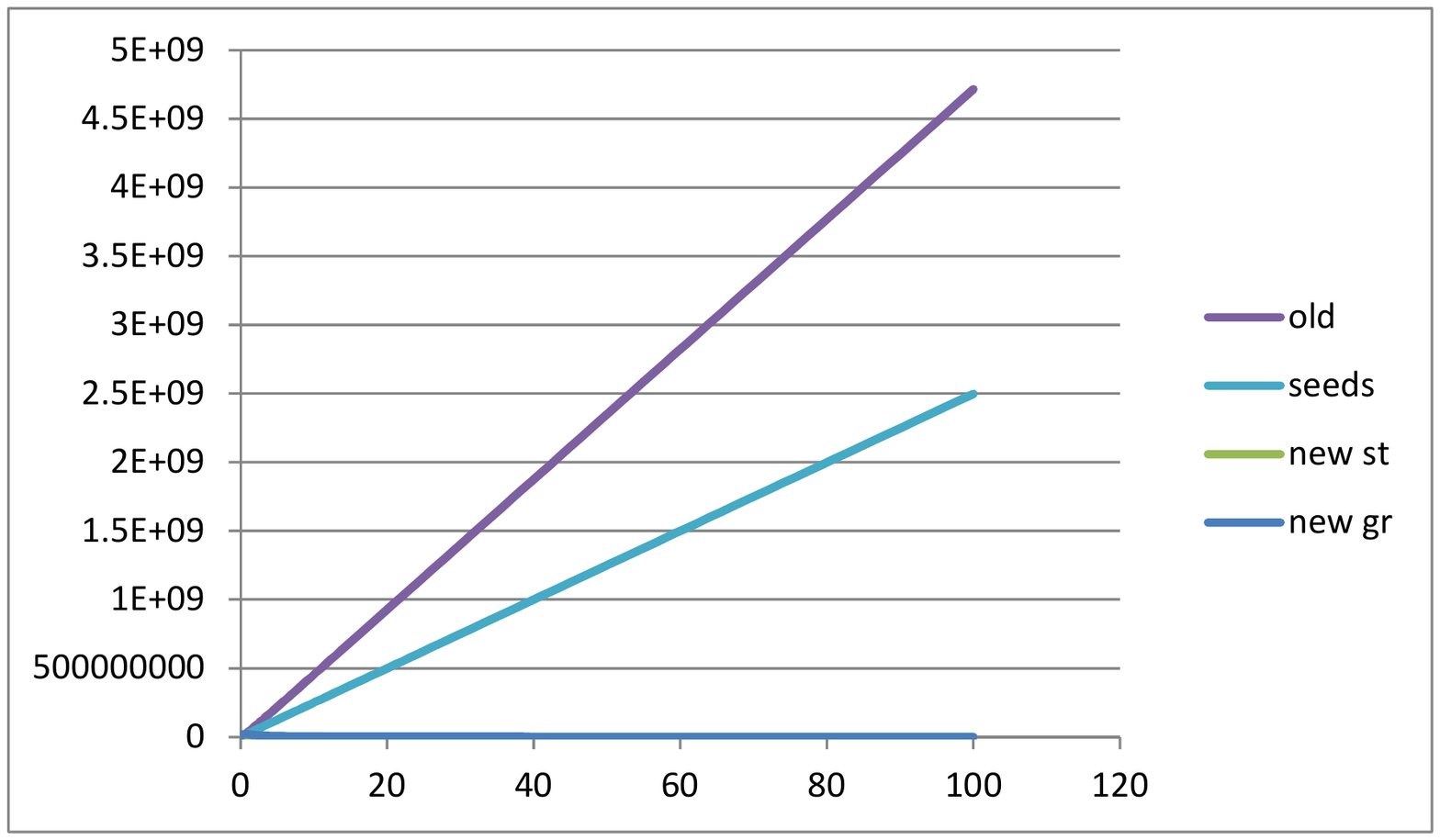} &
\includegraphics[trim=52 250 150 250 ,clip,scale=0.25]{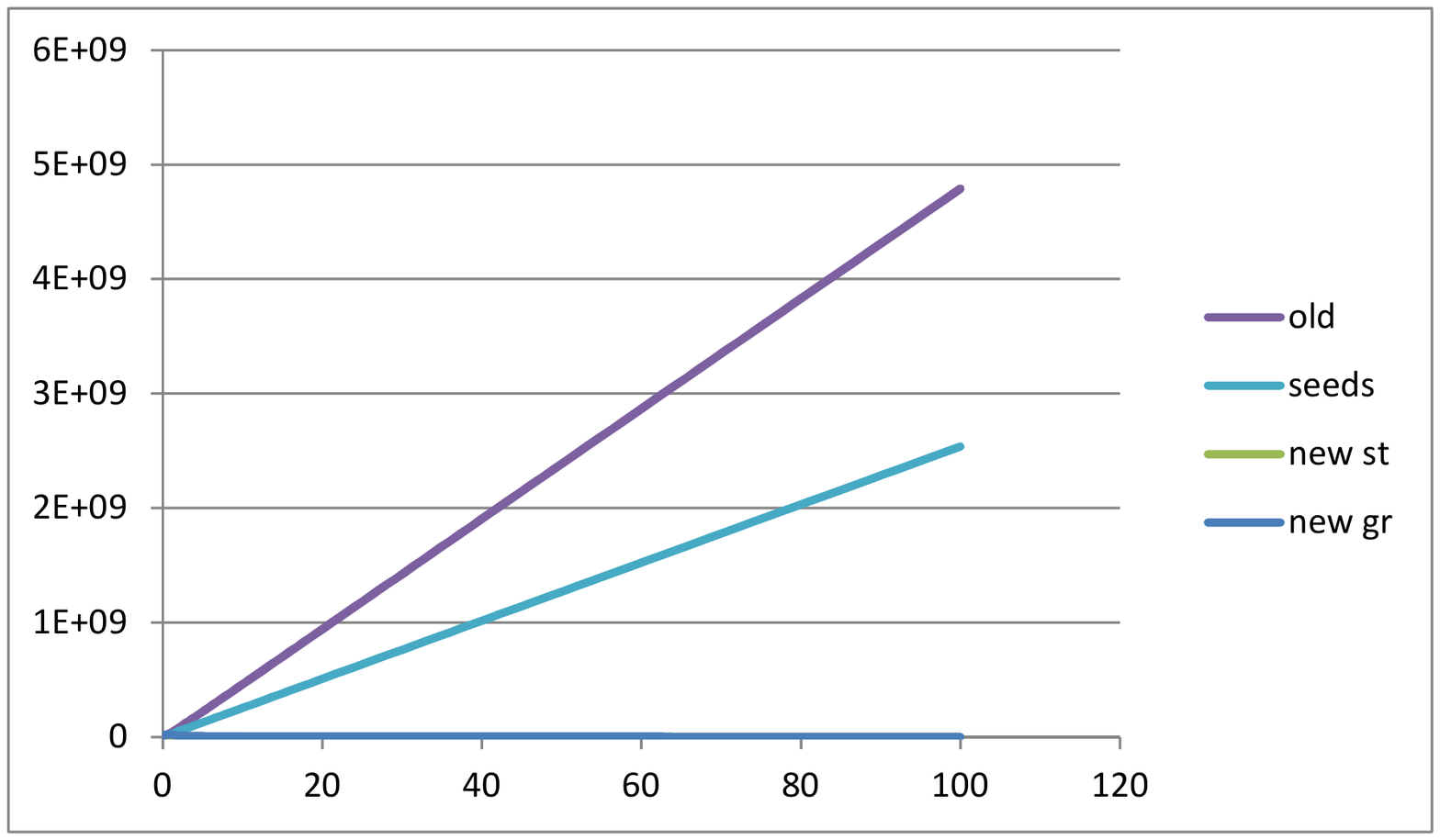} &
\includegraphics[trim=52 250 150 250 ,clip,scale=0.25]{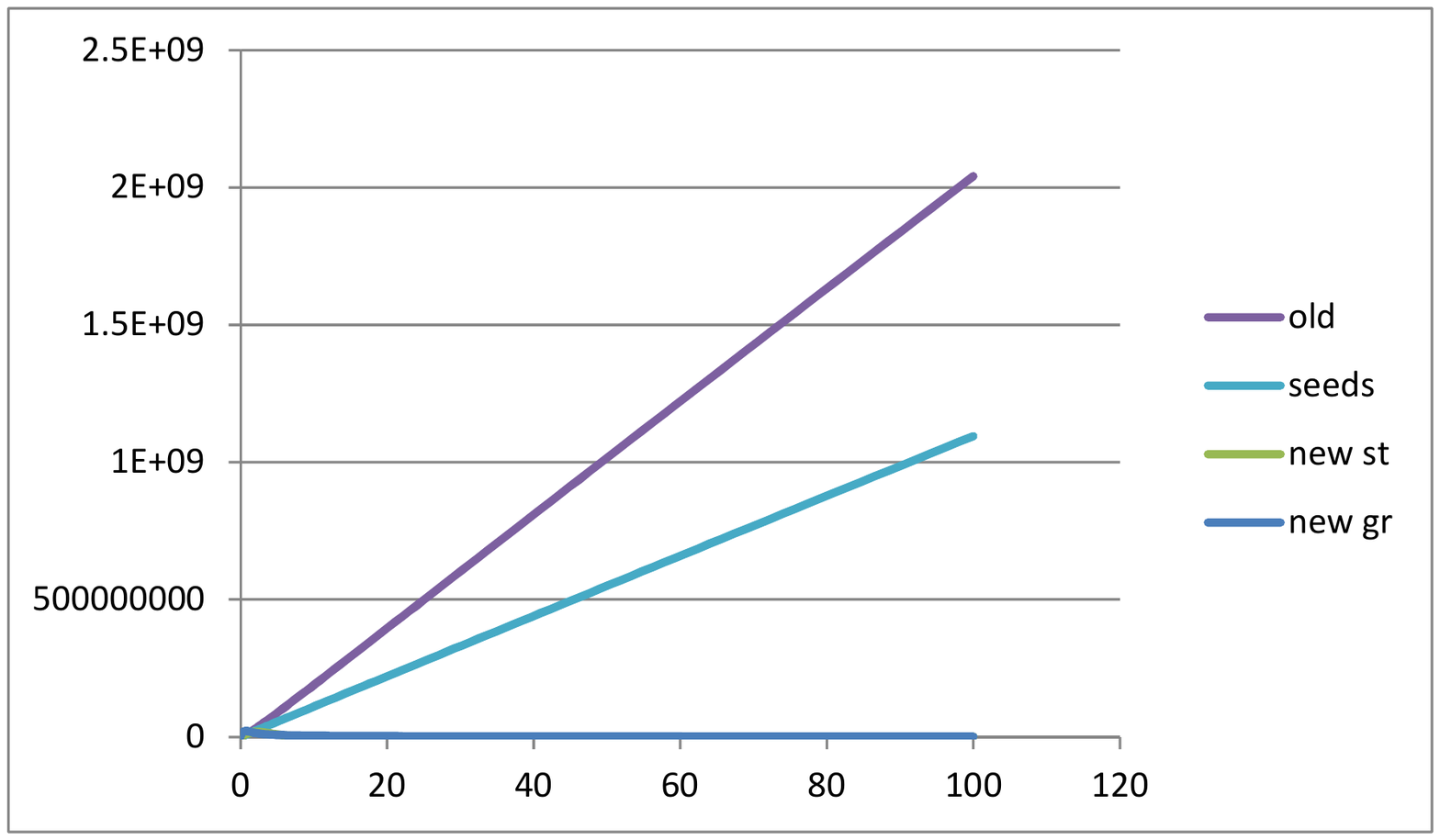} &
\includegraphics[trim=52 250 52 250 ,clip,scale=0.25]{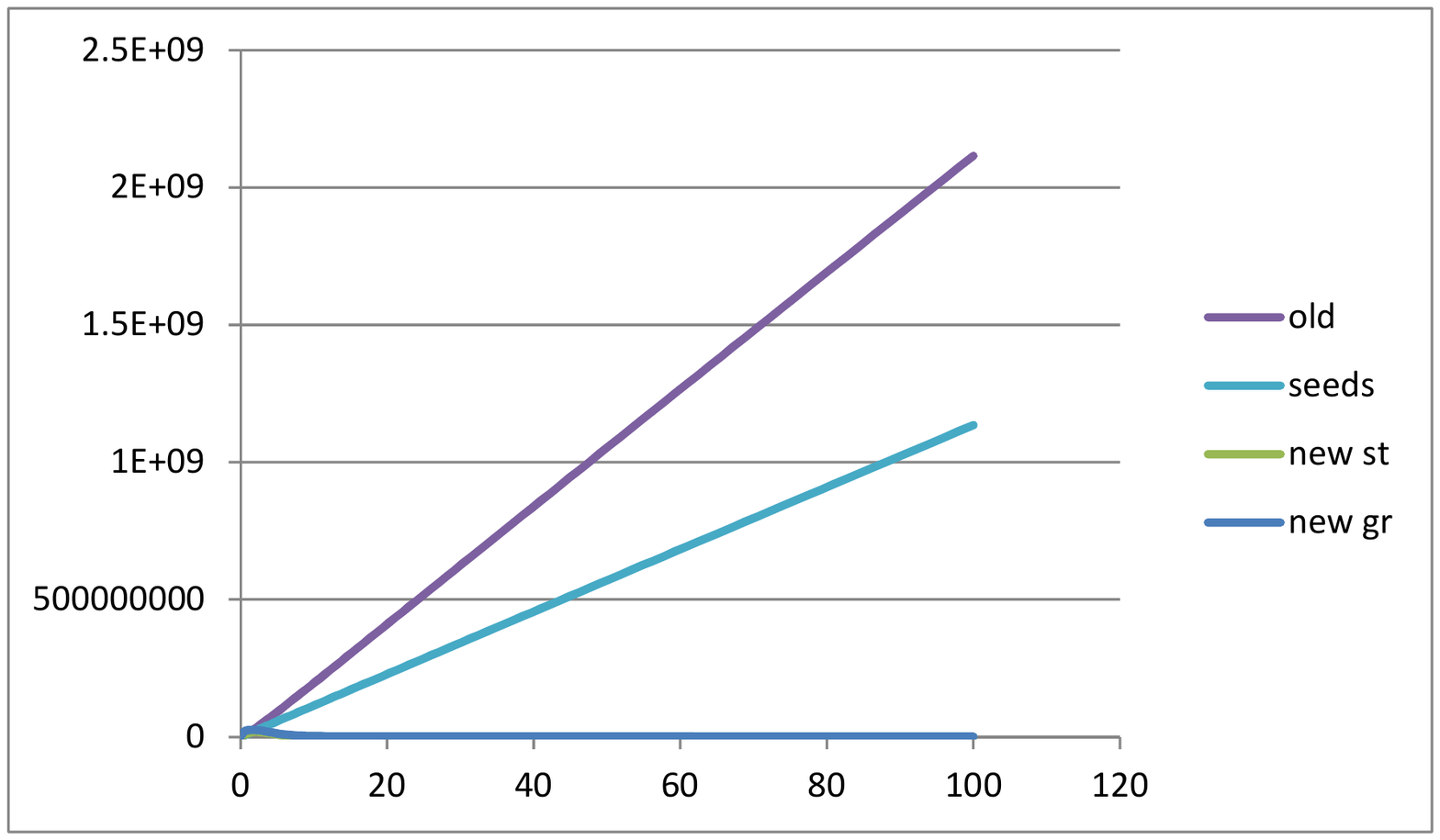}
\\
\includegraphics[trim=52 250 150 250 ,clip,scale=0.25]{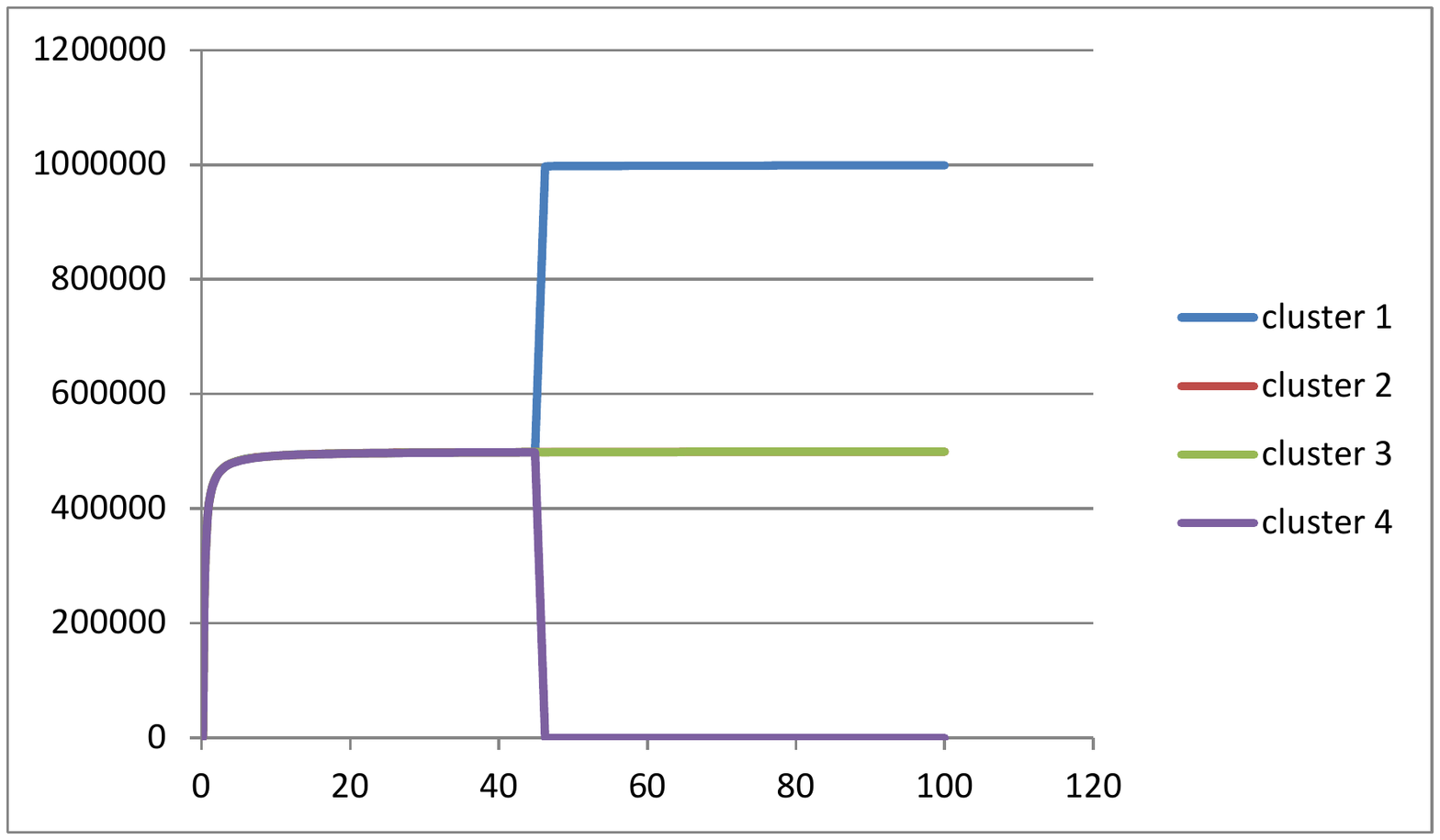} &
\includegraphics[trim=52 250 150 250 ,clip,scale=0.25]{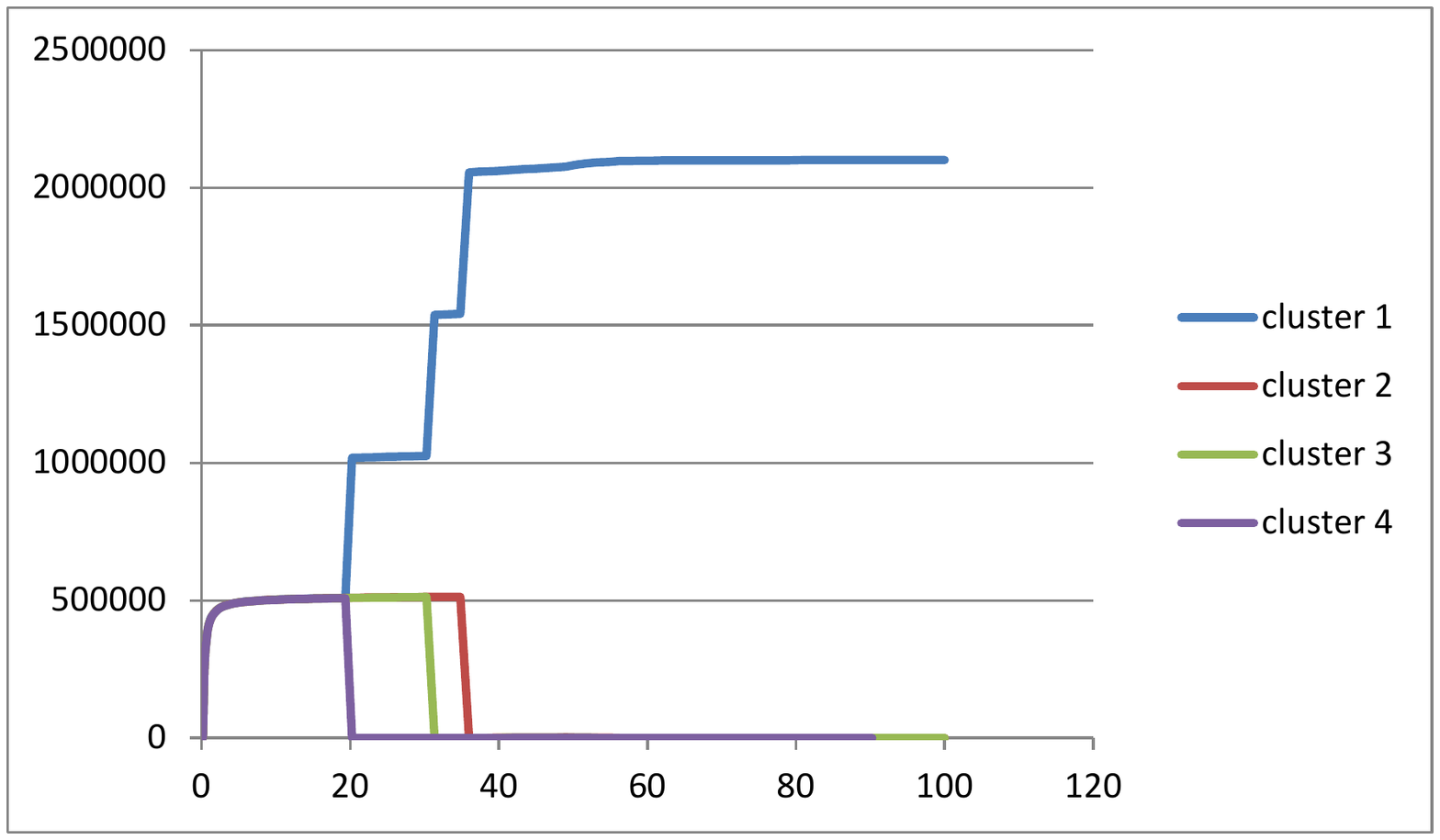} &
\includegraphics[trim=52 250 150 250 ,clip,scale=0.25]{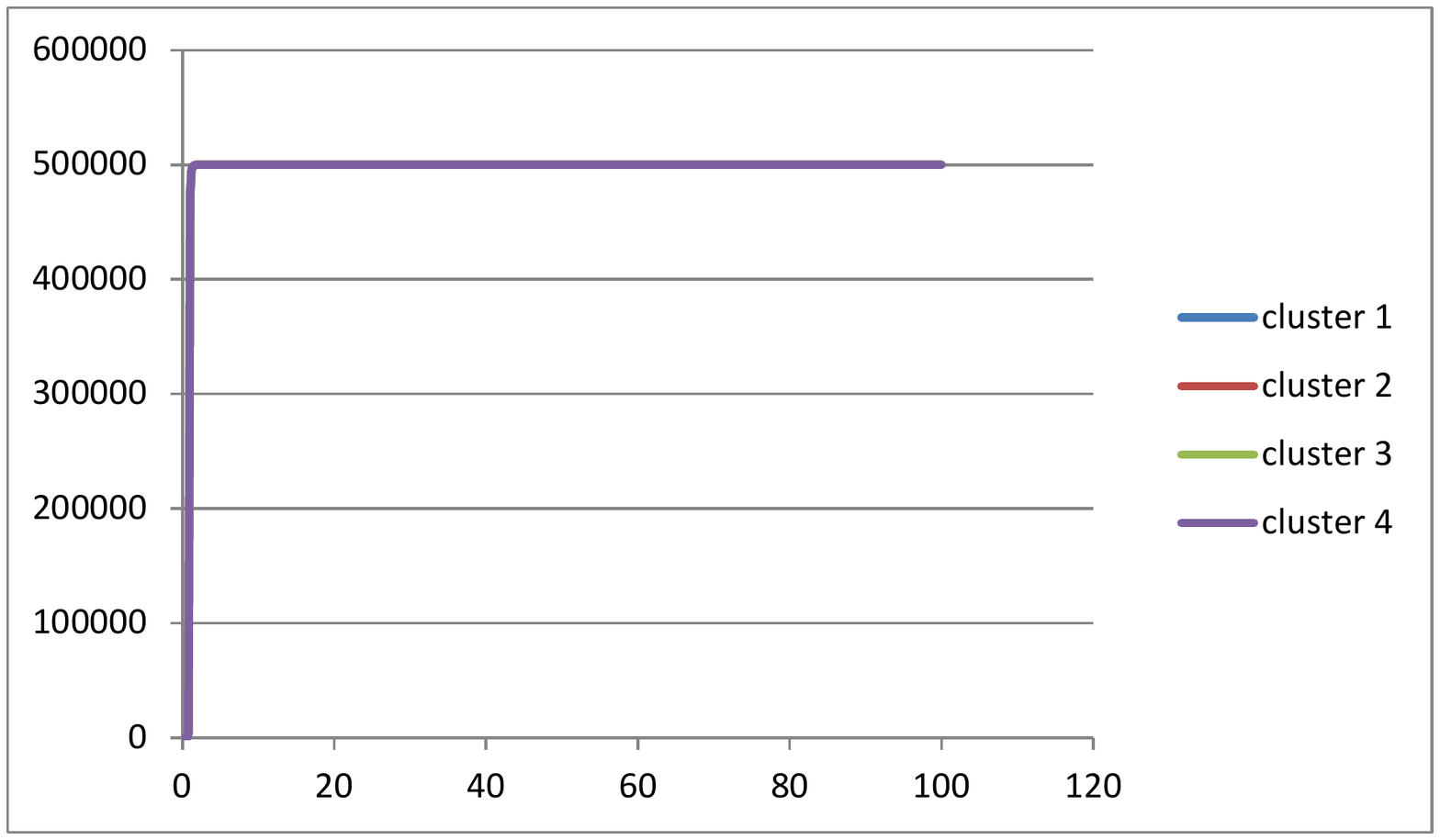} &
\includegraphics[trim=52 250 52 250 ,clip,scale=0.25]{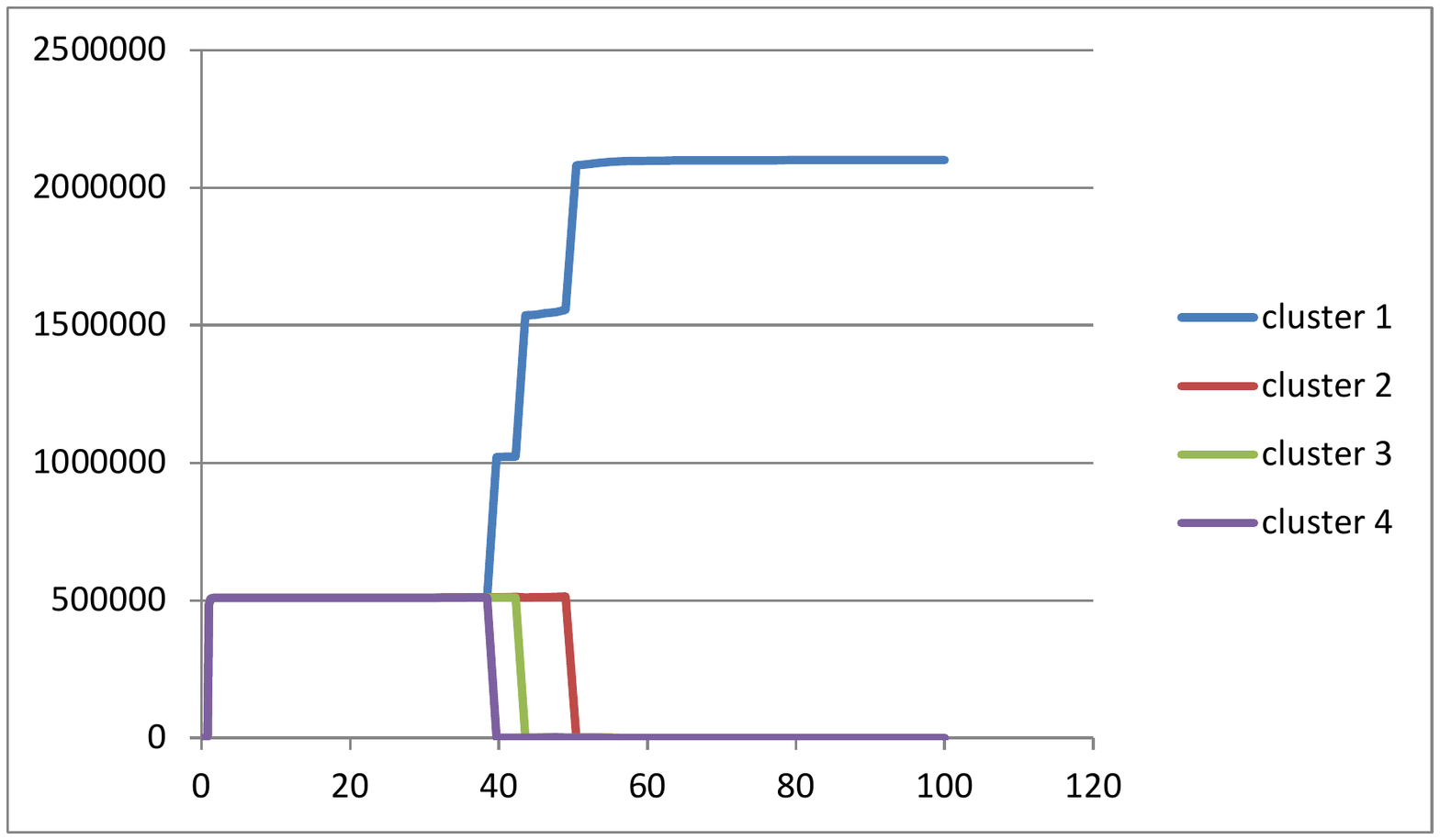}
\\ \\
\multicolumn{4}{c}{4D}\\
%Gaussian & Gaussian + noise & uniform & uniform + noise\\
\includegraphics[trim=52 250 150 250 ,clip,scale=0.25]{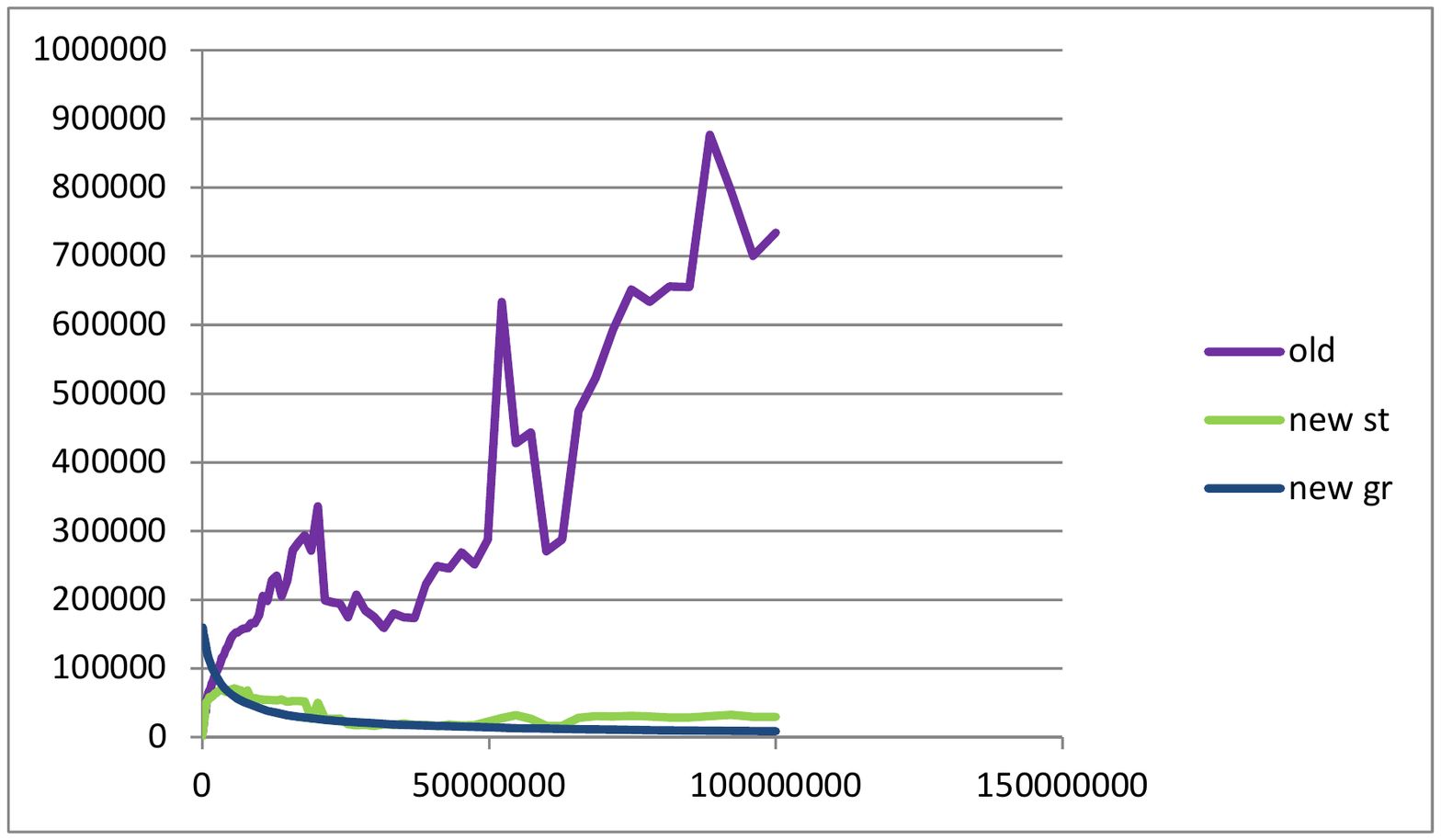} &
\includegraphics[trim=52 250 150 250 ,clip,scale=0.25]{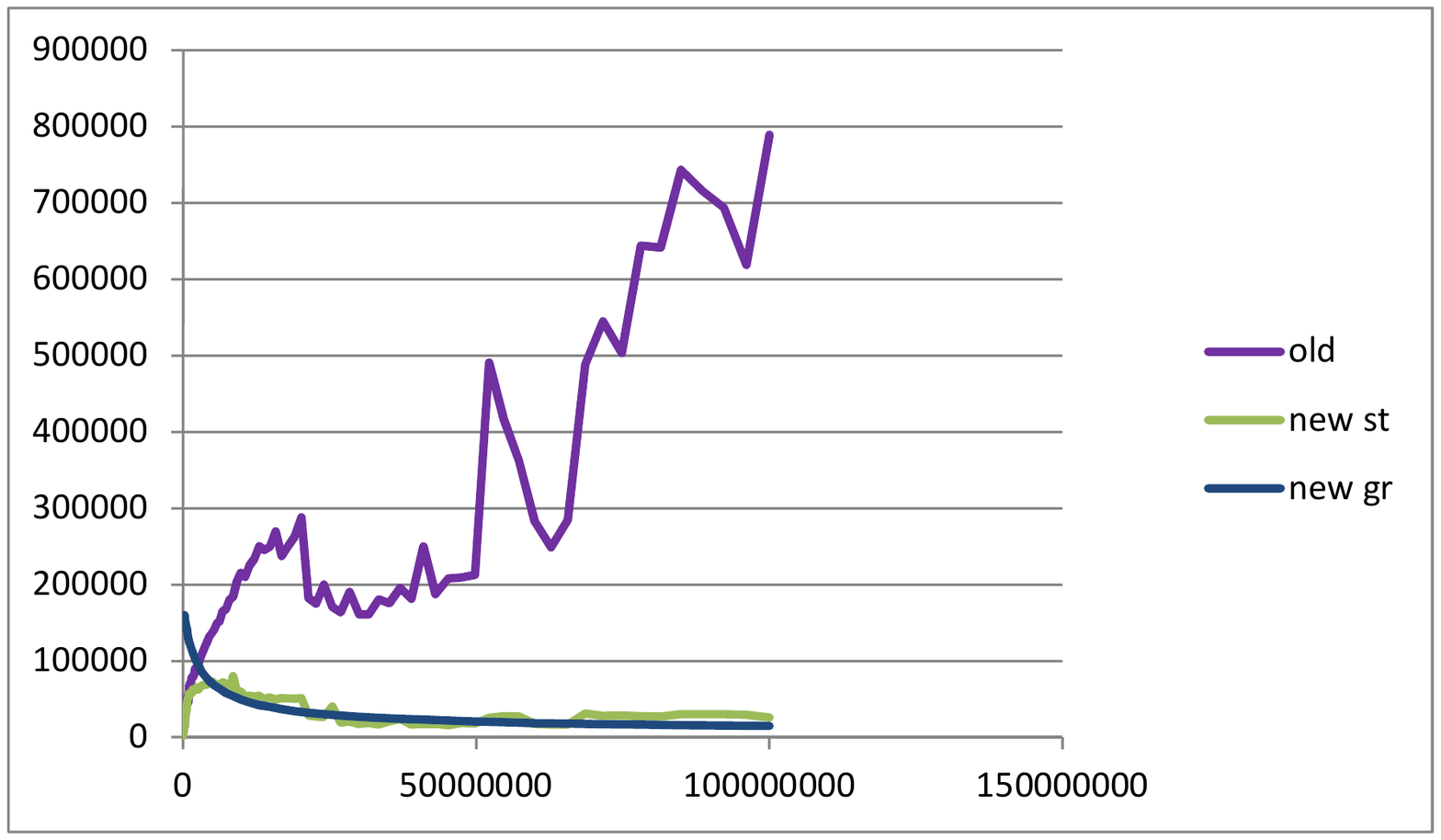} &
\includegraphics[trim=52 250 150 250 ,clip,scale=0.25]{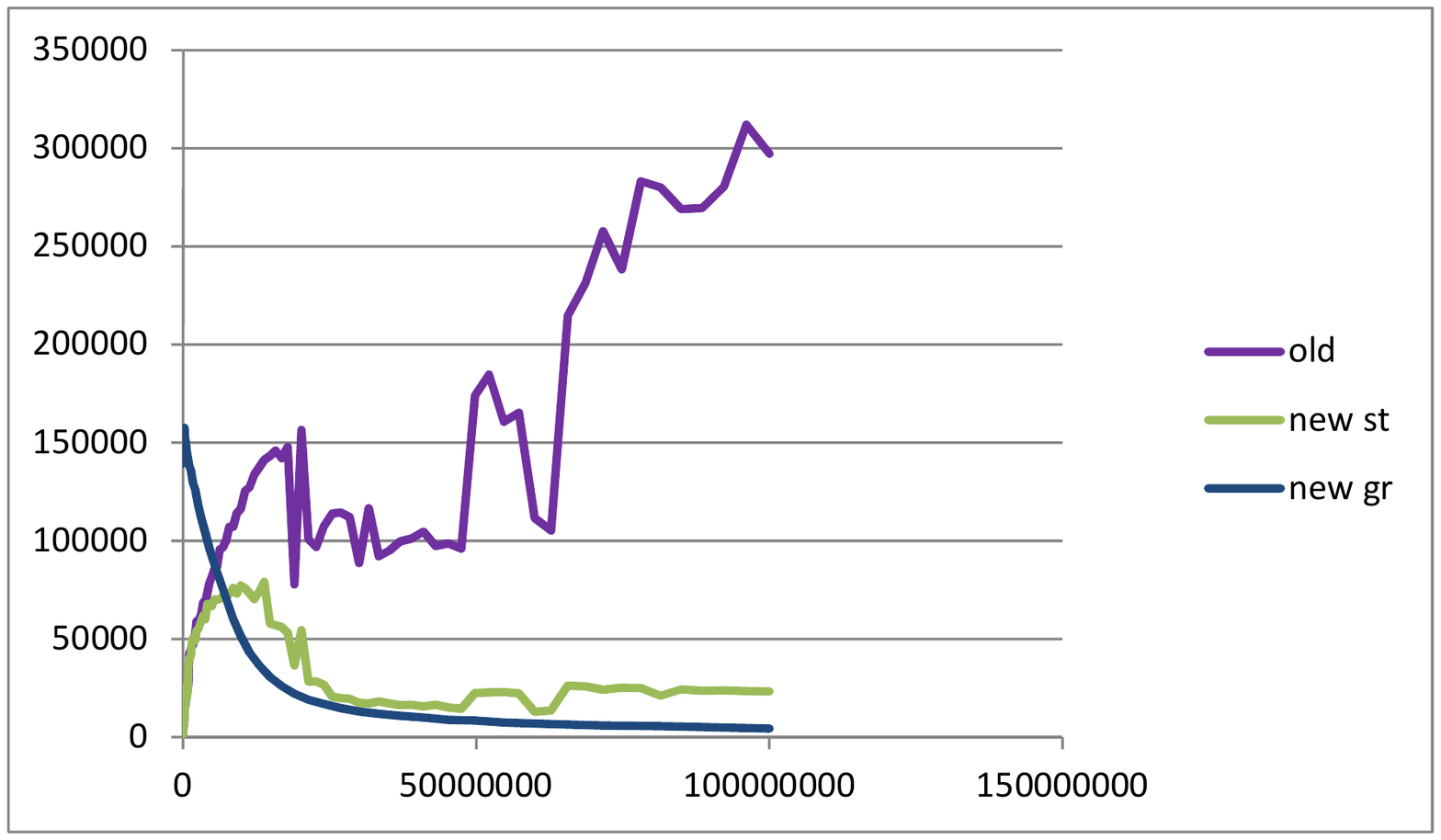} &
\includegraphics[trim=52 250 52 250 ,clip,scale=0.25]{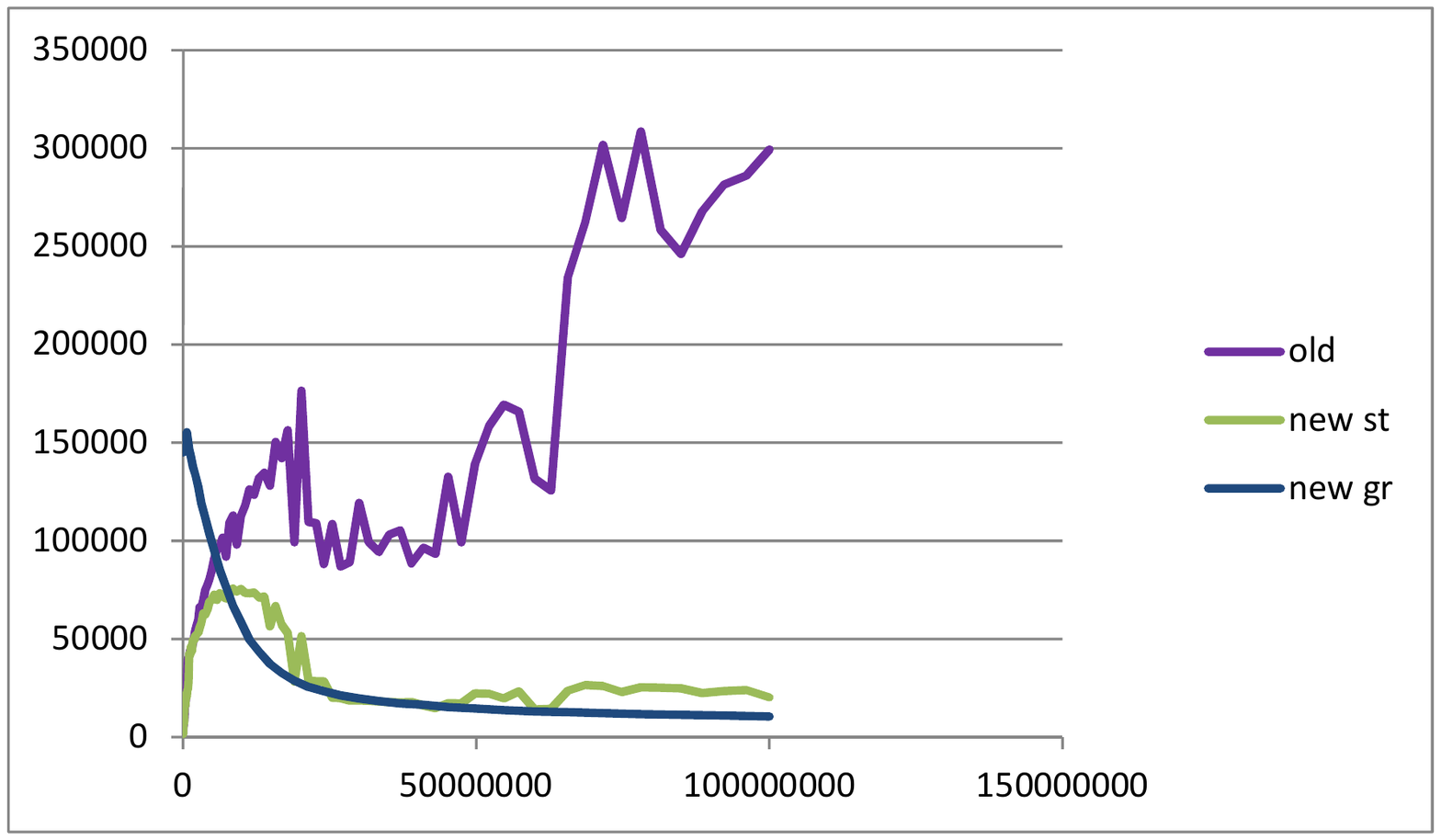}
\\
\includegraphics[trim=52 250 150 250 ,clip,scale=0.25]{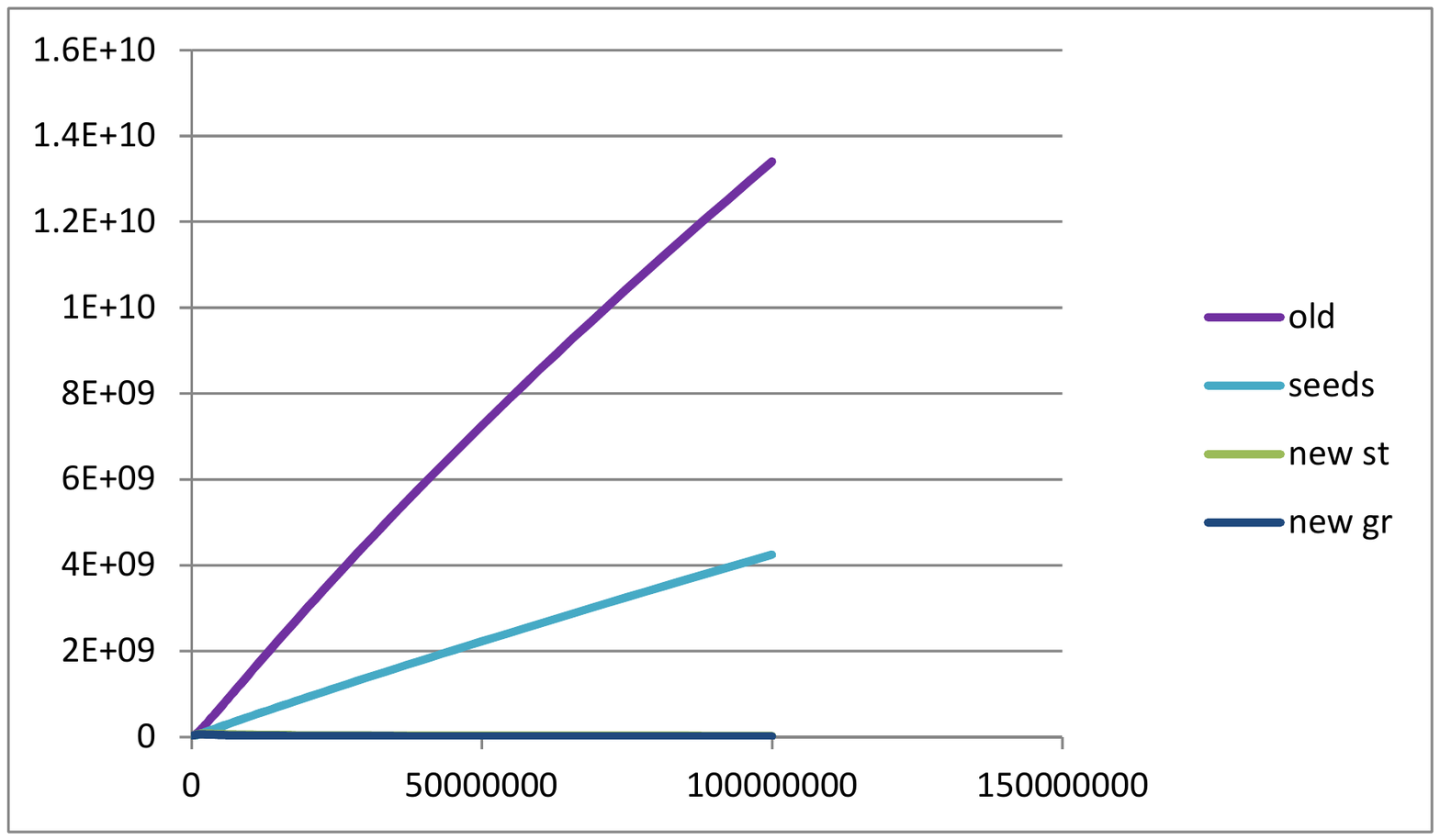} &
\includegraphics[trim=52 250 150 250 ,clip,scale=0.25]{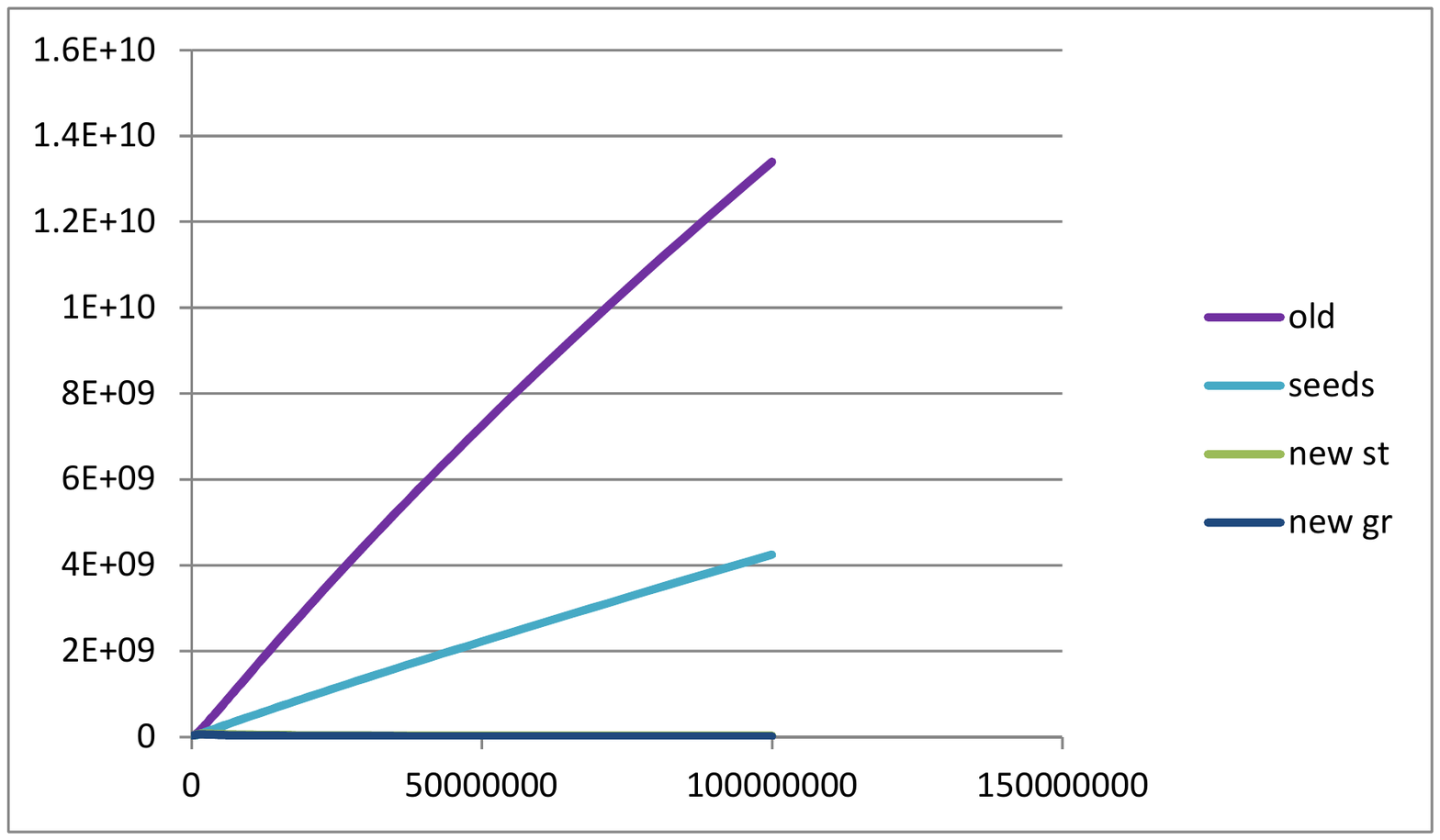} &
\includegraphics[trim=52 250 150 250 ,clip,scale=0.25]{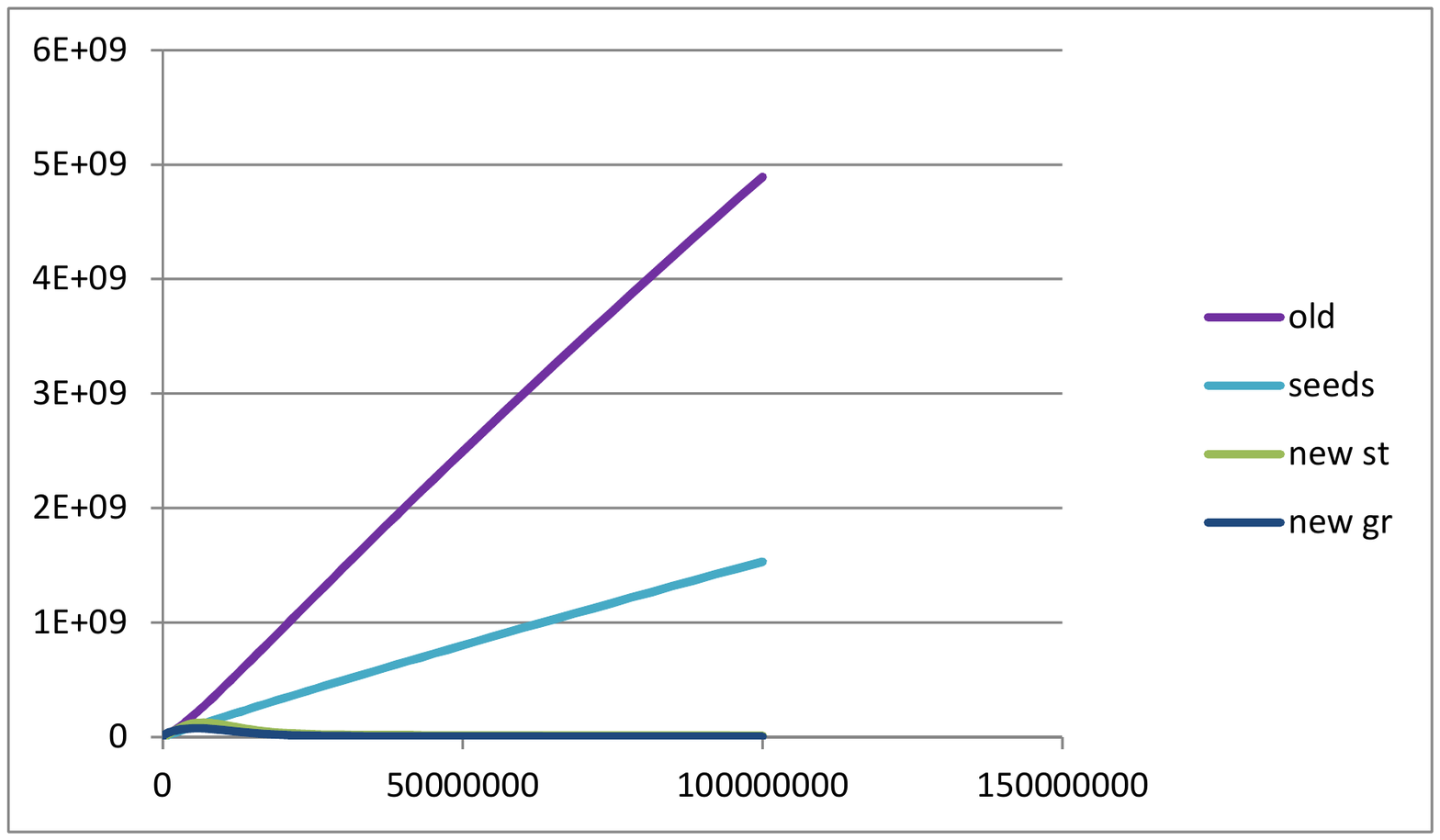} &
\includegraphics[trim=52 250 52 250 ,clip,scale=0.25]{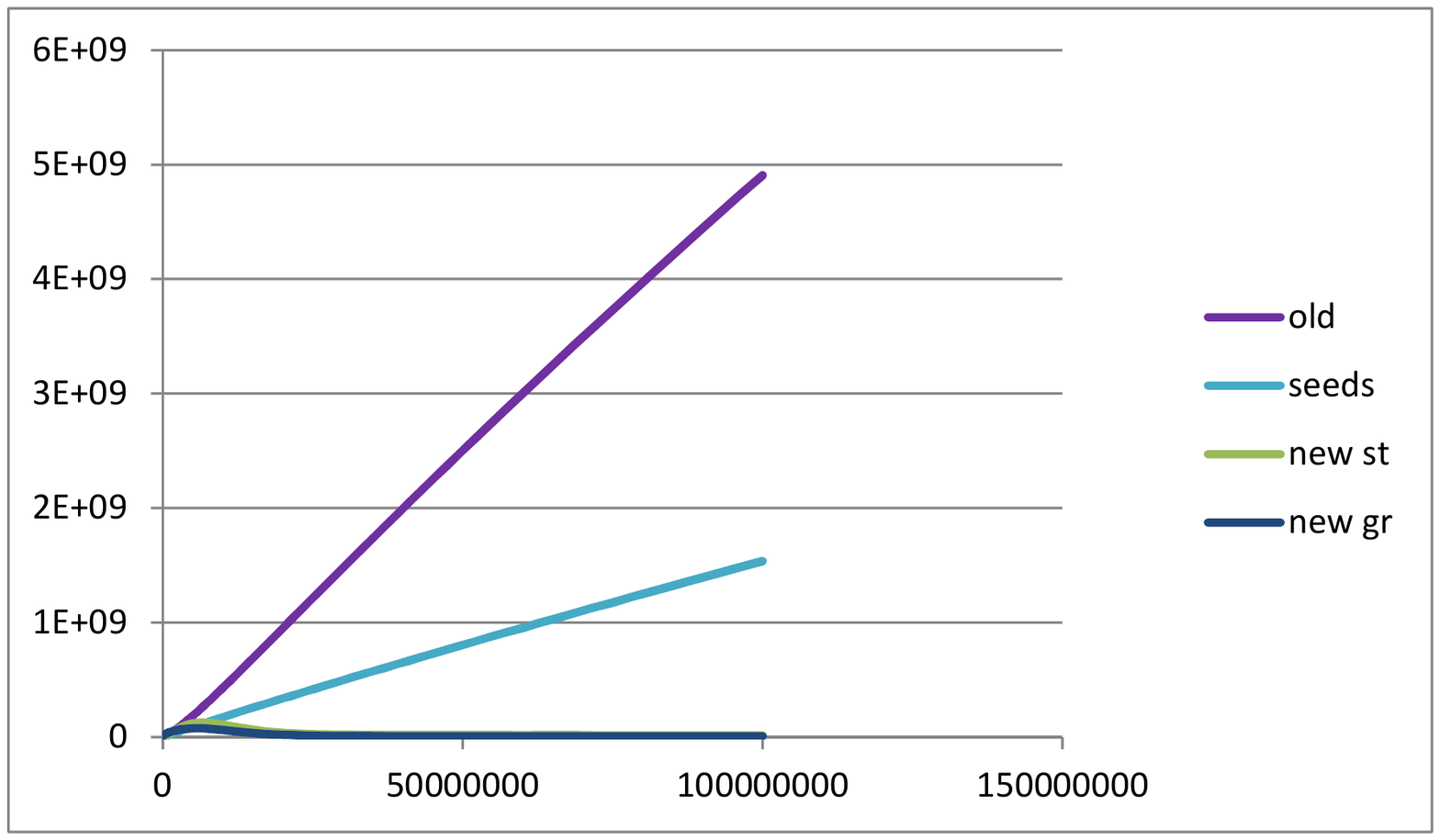}
\\
\includegraphics[trim=52 250 150 250 ,clip,scale=0.25]{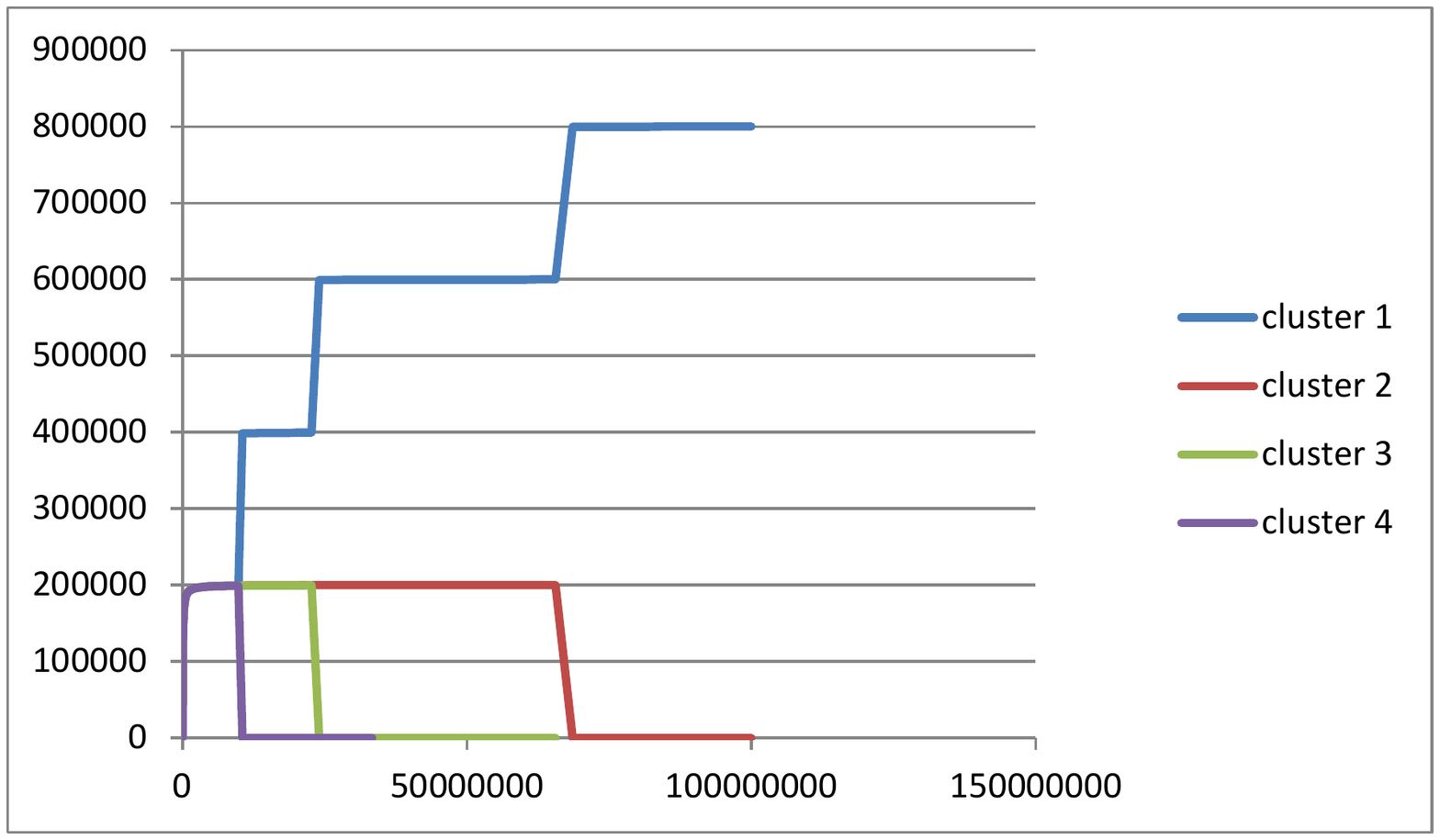} &
\includegraphics[trim=52 250 150 250 ,clip,scale=0.25]{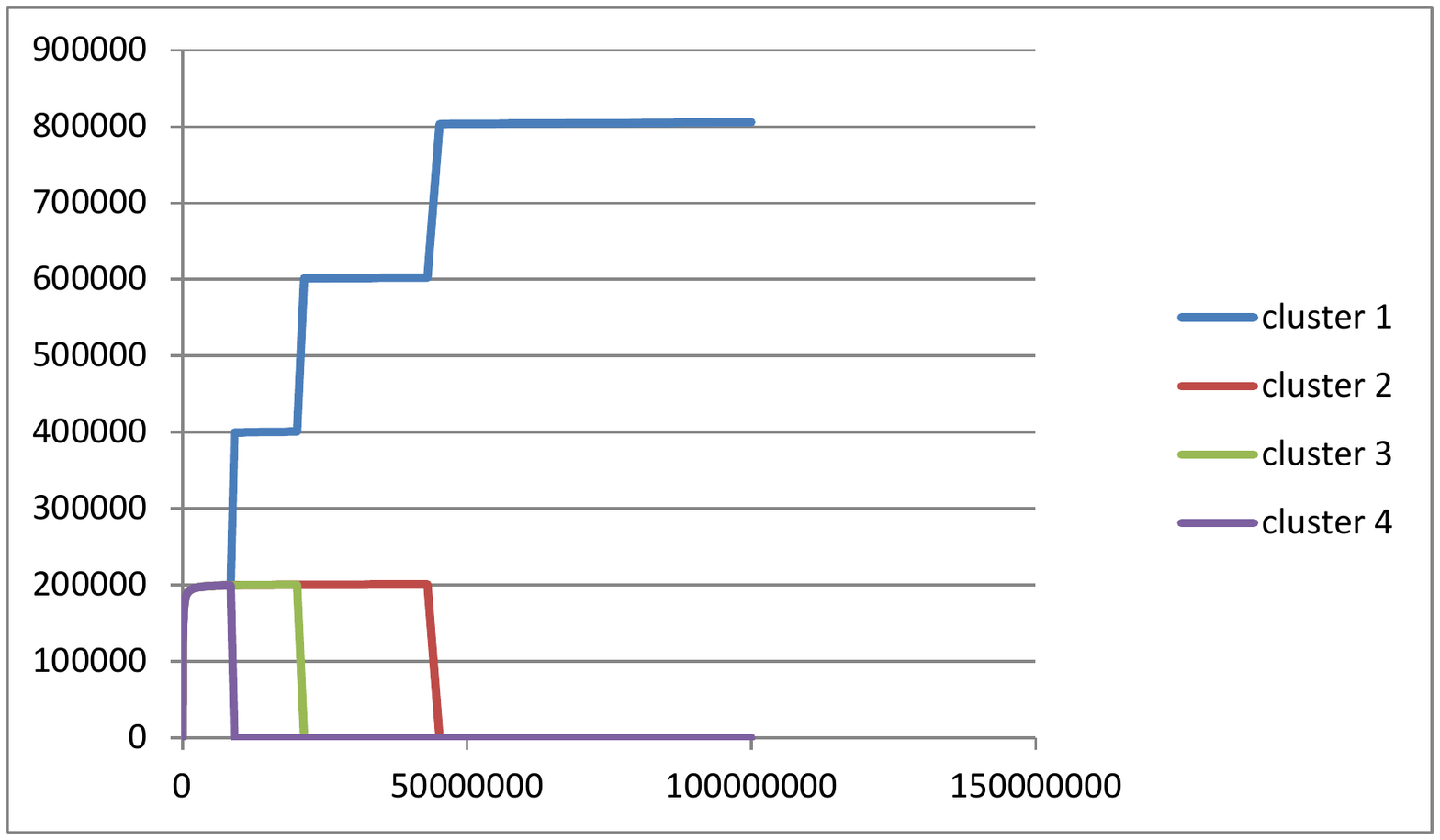} &
\includegraphics[trim=52 250 150 250 ,clip,scale=0.25]{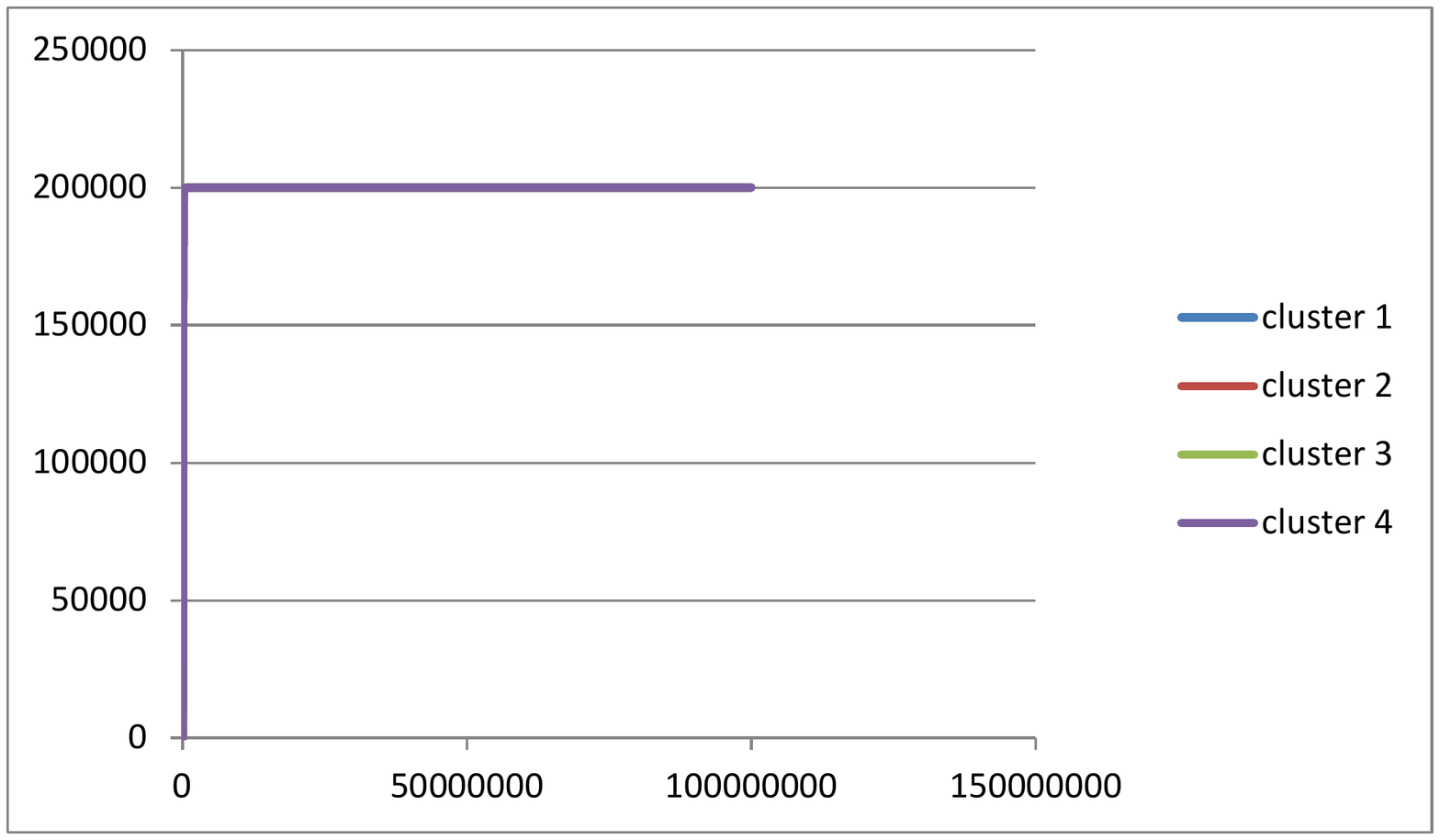} &
\includegraphics[trim=52 250 52 250 ,clip,scale=0.25]{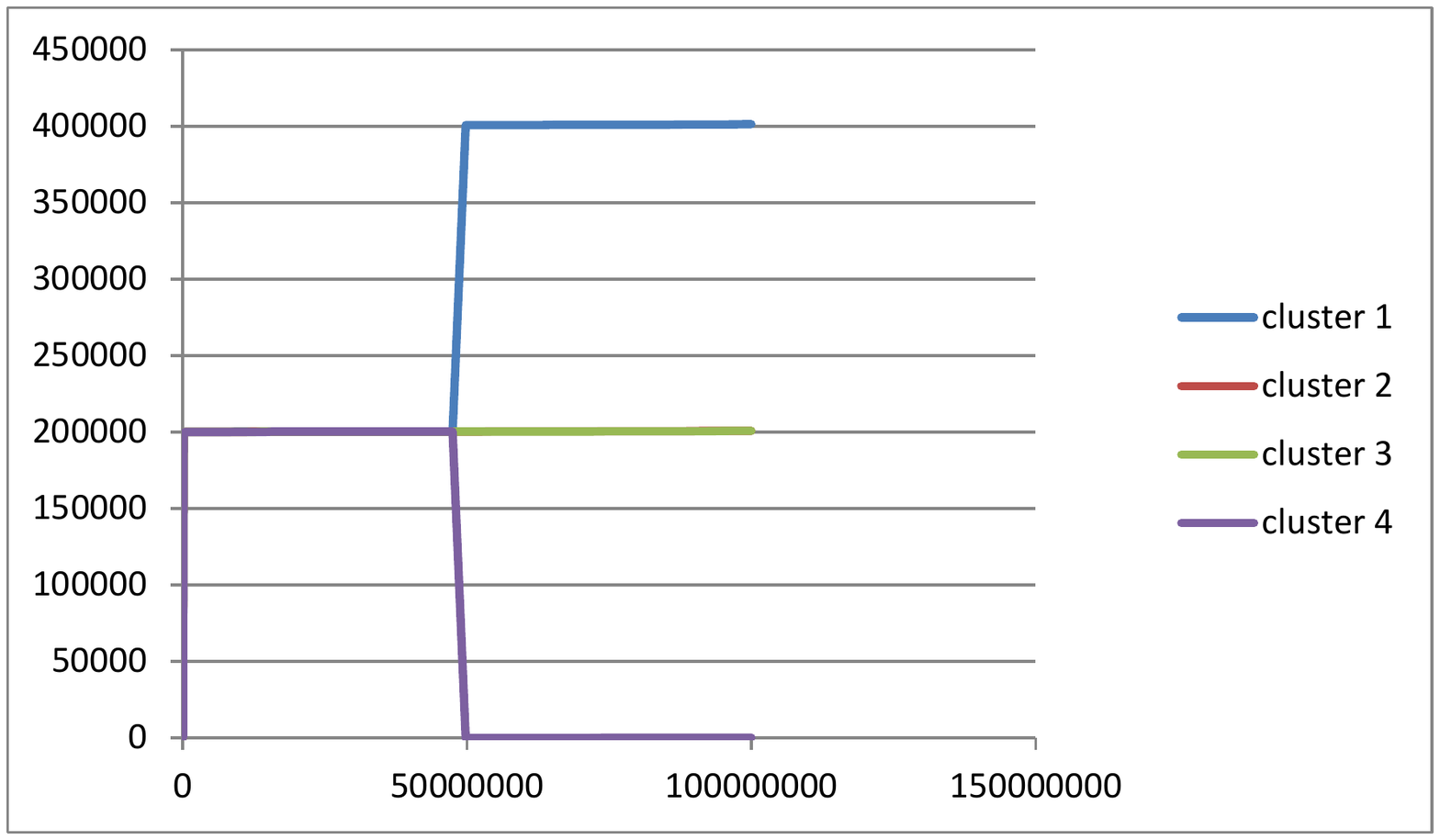}
\end{tabular}
\caption{Results when varying~$\eps$. The $x$-axis denotes $\eps^2$ and the $y$-axis denotes (from top to bottom row for both 2D and 4D) processing time in ms, number of distance calculations, and the number of points in the four largest clusters. The third row thus show for which values of $\eps$ the ``correct''
clustering (with four clusters of 500,000 points for 2D and 200,000 points in 4D) is found. The new method with a strip-based or grid-based construction is denoted by \emph{new st} and \emph{new gr} respectively. Note that in the second row these lines overlap.}
\label{fig:2D-fixed-n}
\end{figure}

\mypara{Result for fixed density.}
In our second set of experiments we use data sets of different sizes, where for
each data set we pick $\eps$ such that the density (as defined above) remains constant.
For each type of data we generated data sets ranging from 20,000 to 500,000 points per cluster
in 2D, and from 20,000 to 300,000 points per cluster for~4D. We ran these experiments for two different values of~$\eps$: one that is roughly the smallest value needed to find the four clusters,
and one that is roughly the highest value for which the four clusters do not start to merge. The results are given in Fig.~\ref{fig:2D-fixed-density}.

\begin{figure}[p]
\centering
\begin{tabular}{@{}c@{}c@{}c@{}c@{}}
\multicolumn{4}{c}{2D}\\
Gaussian & Gaussian + noise & uniform & uniform + noise\\
\includegraphics[trim=52 250 150 250 ,clip,scale=0.25]{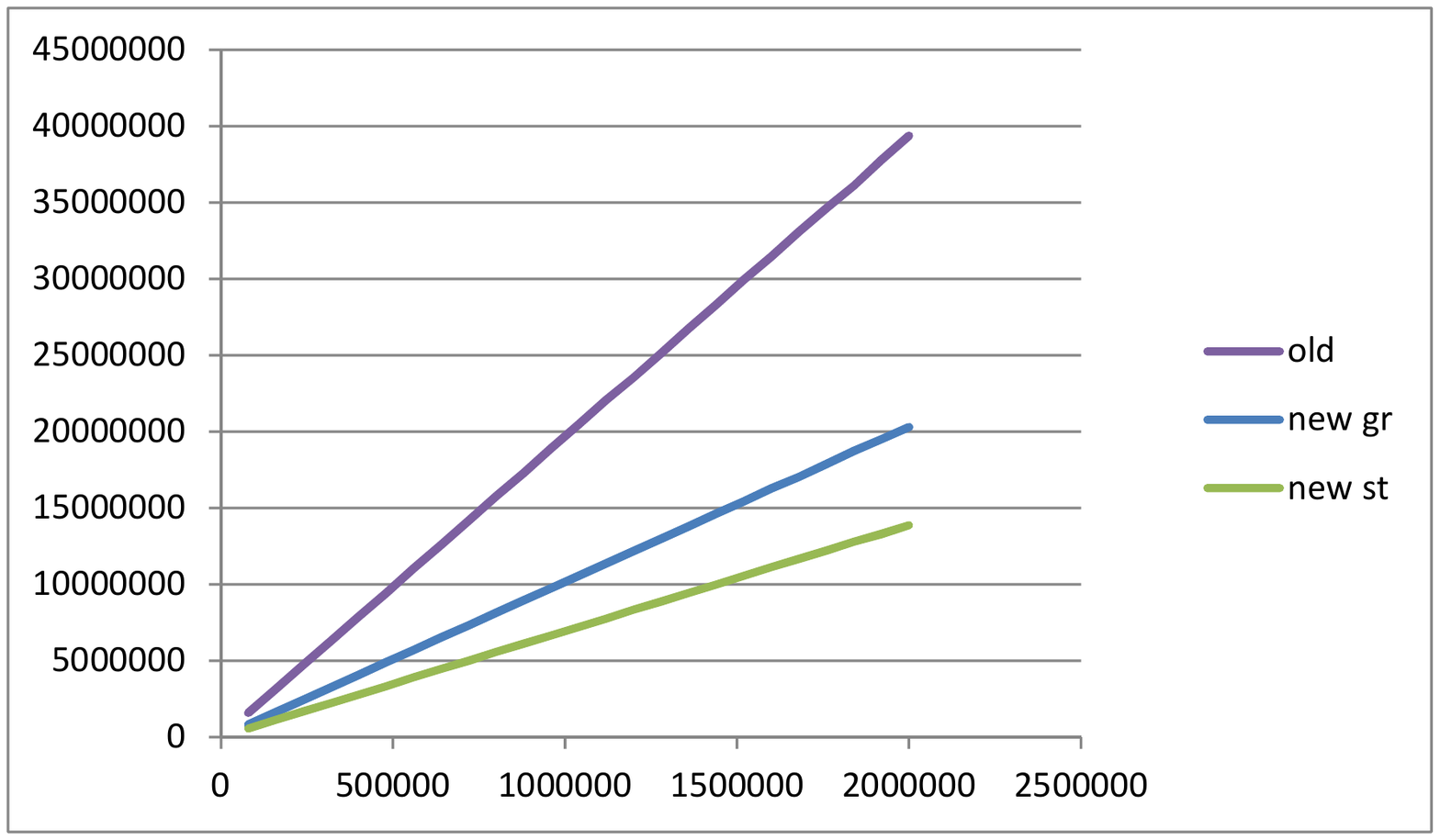} &
\includegraphics[trim=52 250 150 250 ,clip,scale=0.25]{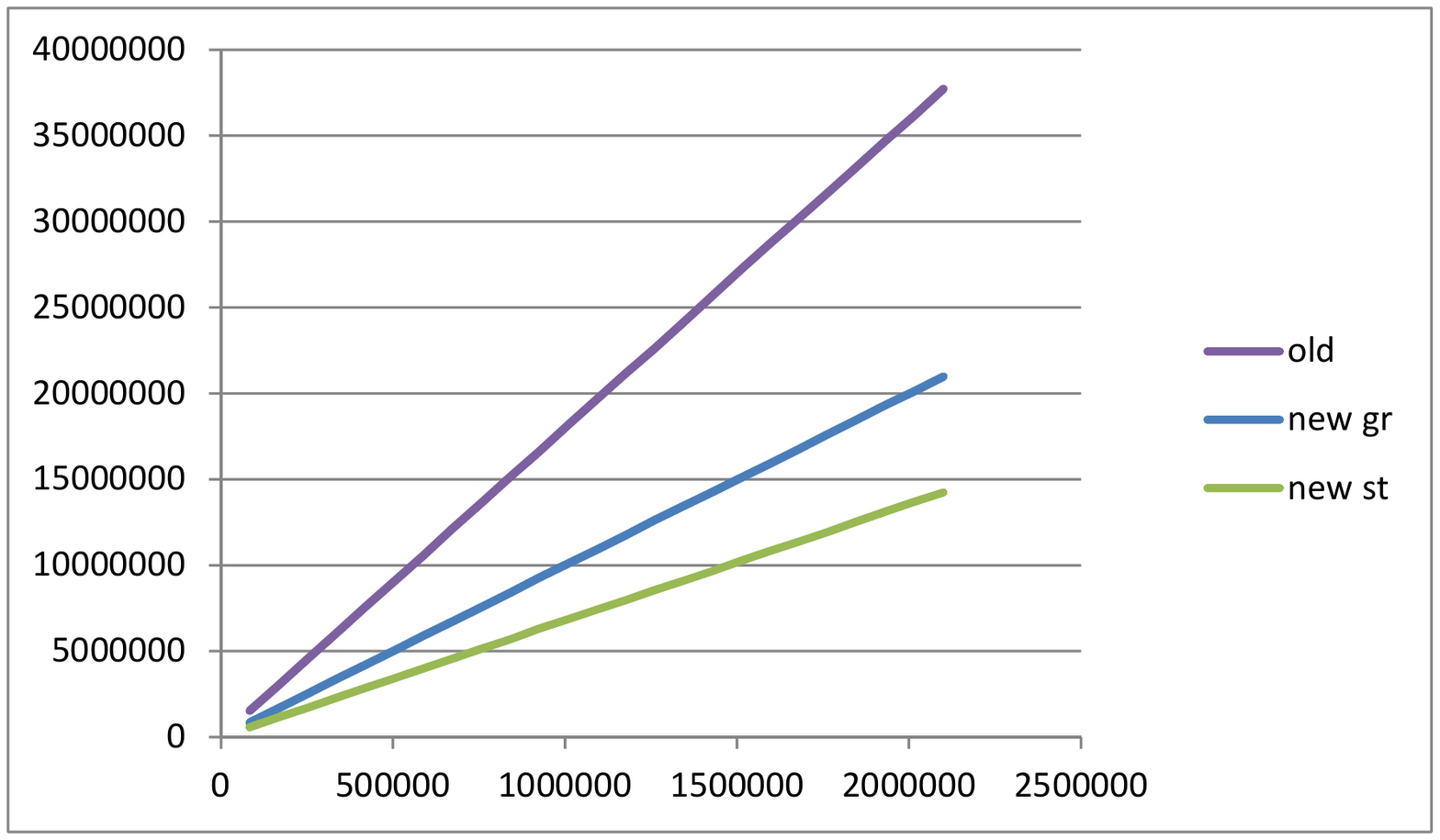} &
\includegraphics[trim=52 250 150 250 ,clip,scale=0.25]{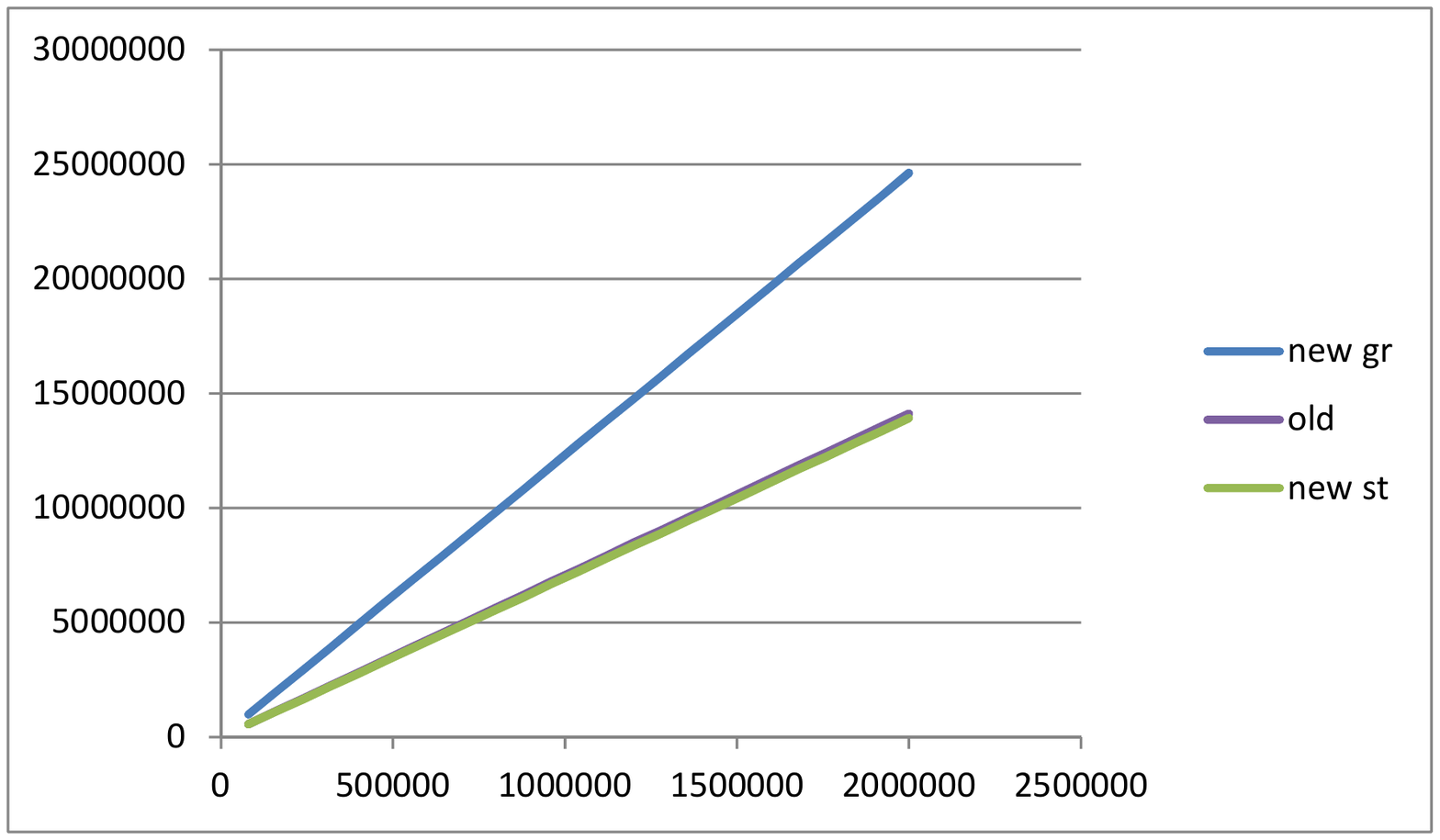} &
\includegraphics[trim=52 250 52 250 ,clip,scale=0.25]{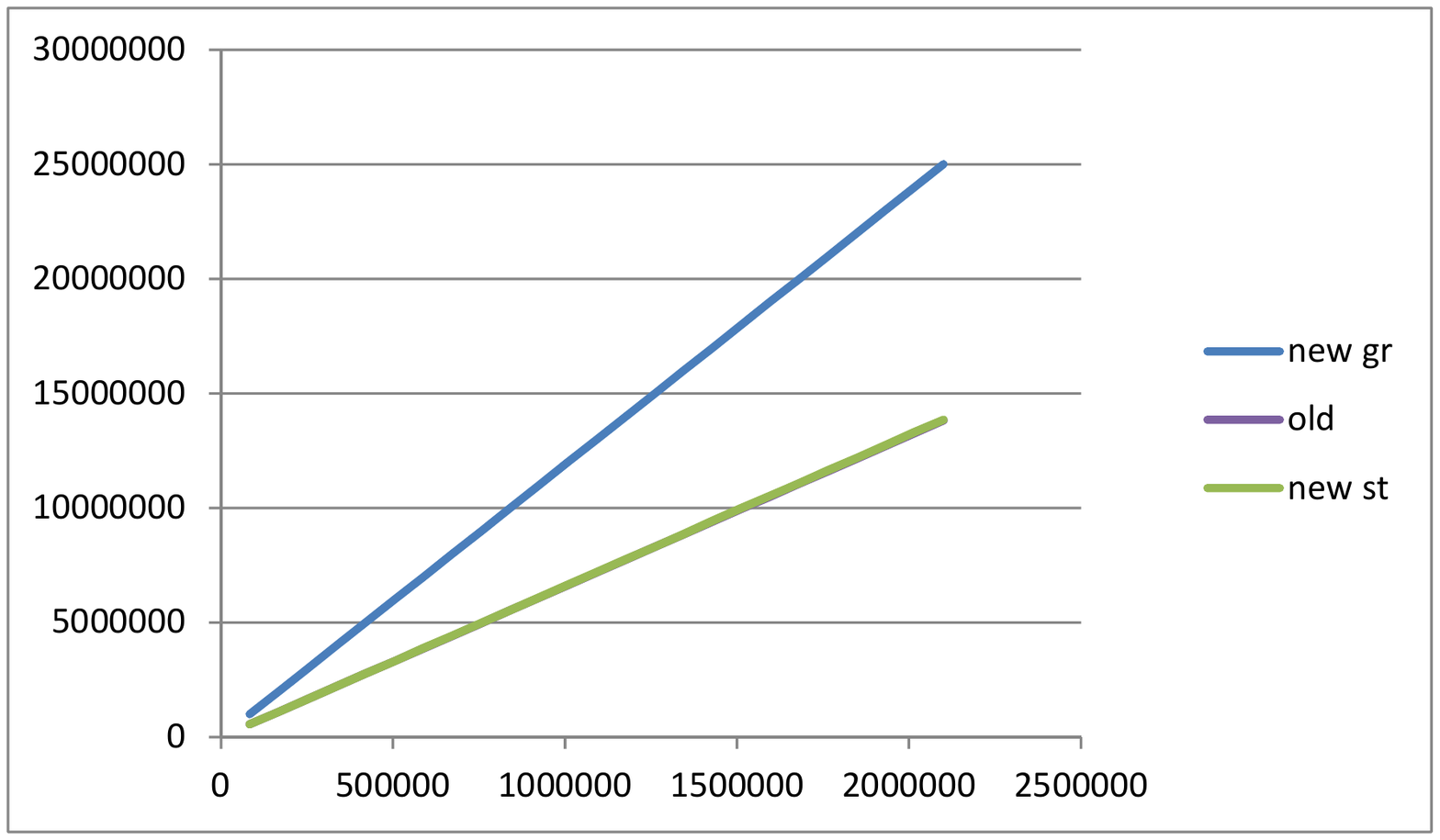}
\\
\includegraphics[trim=52 250 150 250 ,clip,scale=0.25]{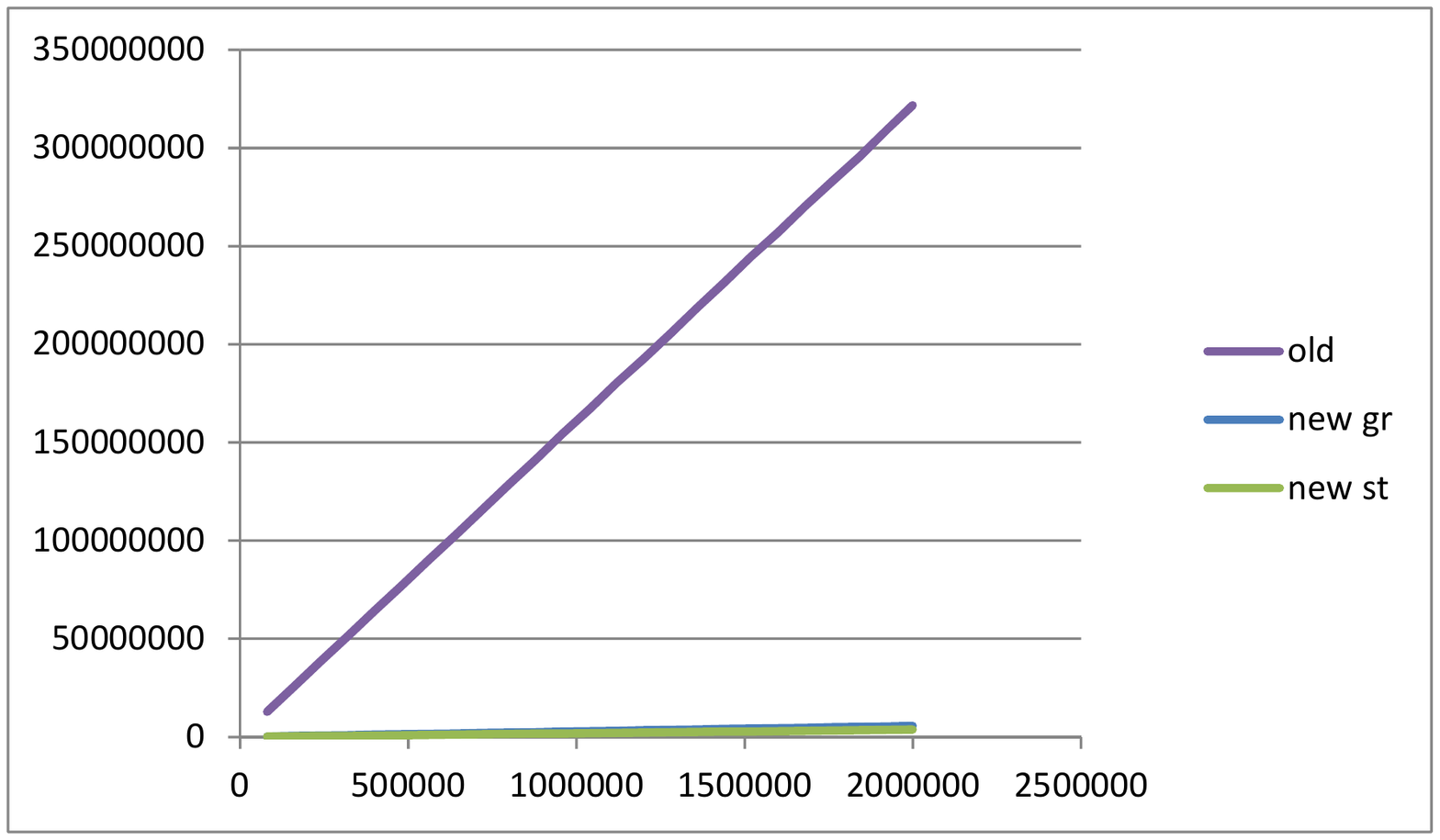} &
\includegraphics[trim=52 250 150 250 ,clip,scale=0.25]{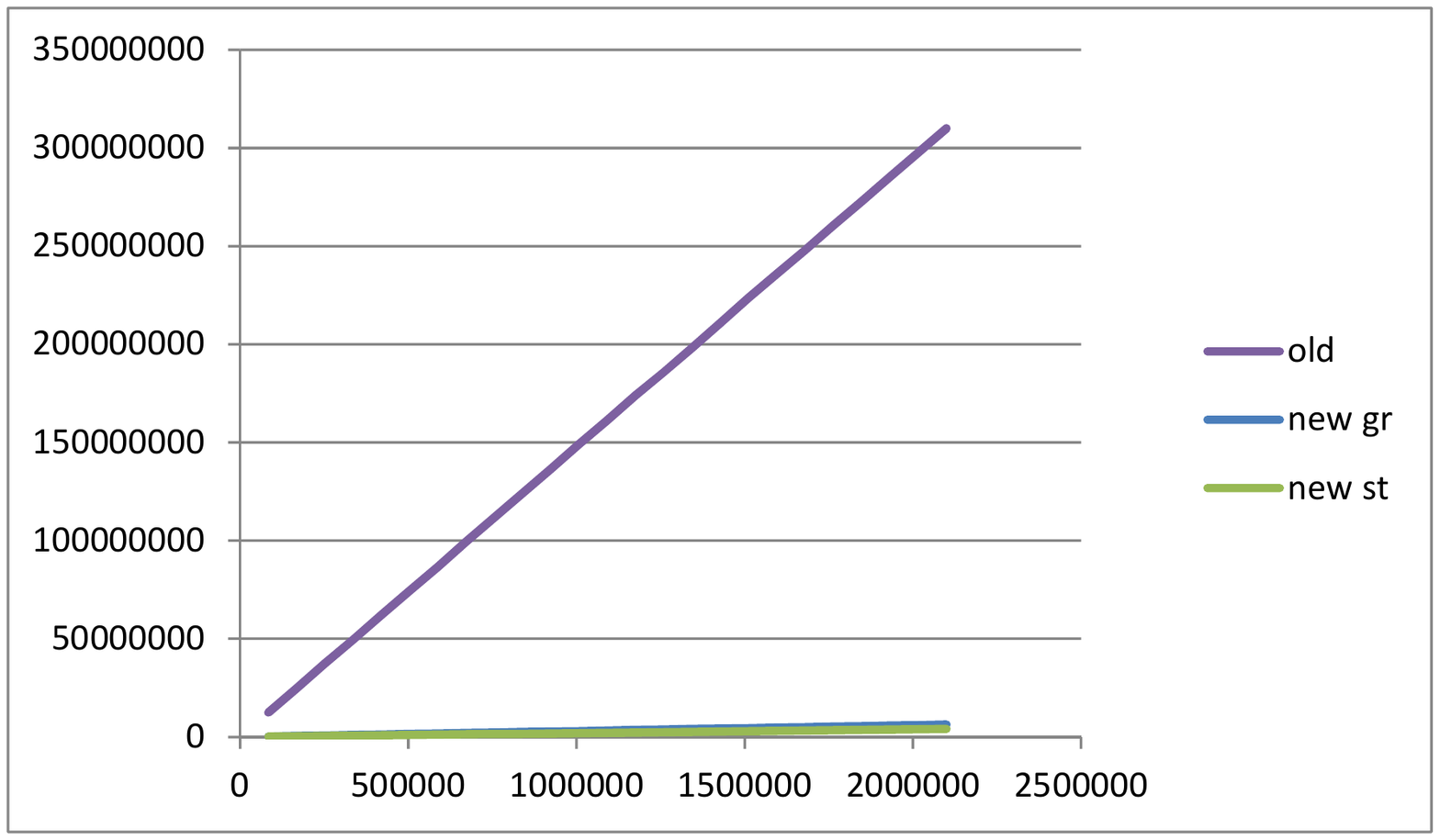} &
\includegraphics[trim=52 250 150 250 ,clip,scale=0.25]{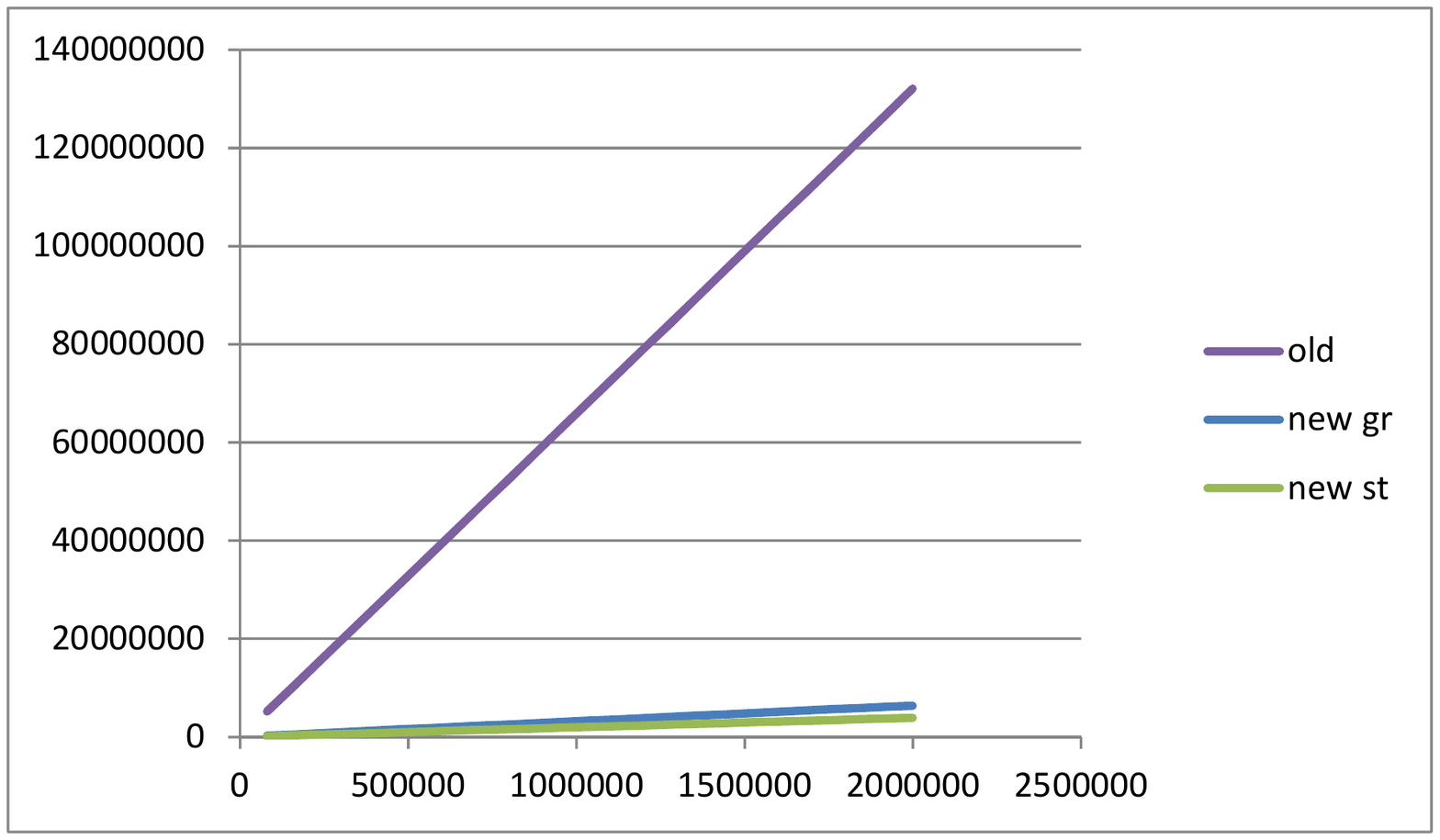} &
\includegraphics[trim=52 250 52 250 ,clip,scale=0.25]{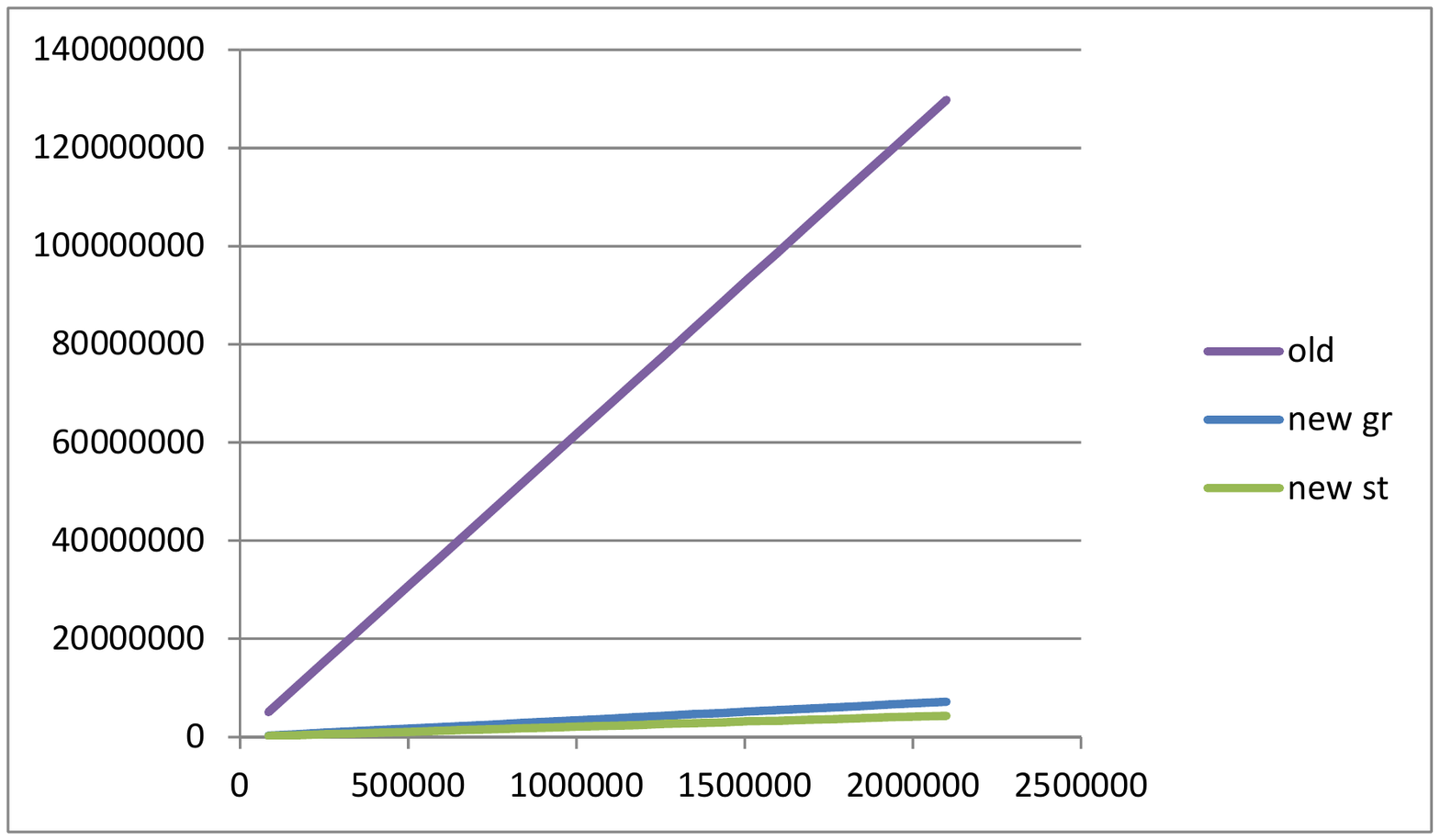}
\\ \\
\multicolumn{4}{c}{4D} \\
%Gaussian & Gaussian + noise & uniform & uniform + noise\\
\includegraphics[trim=52 250 150 250 ,clip,scale=0.25]{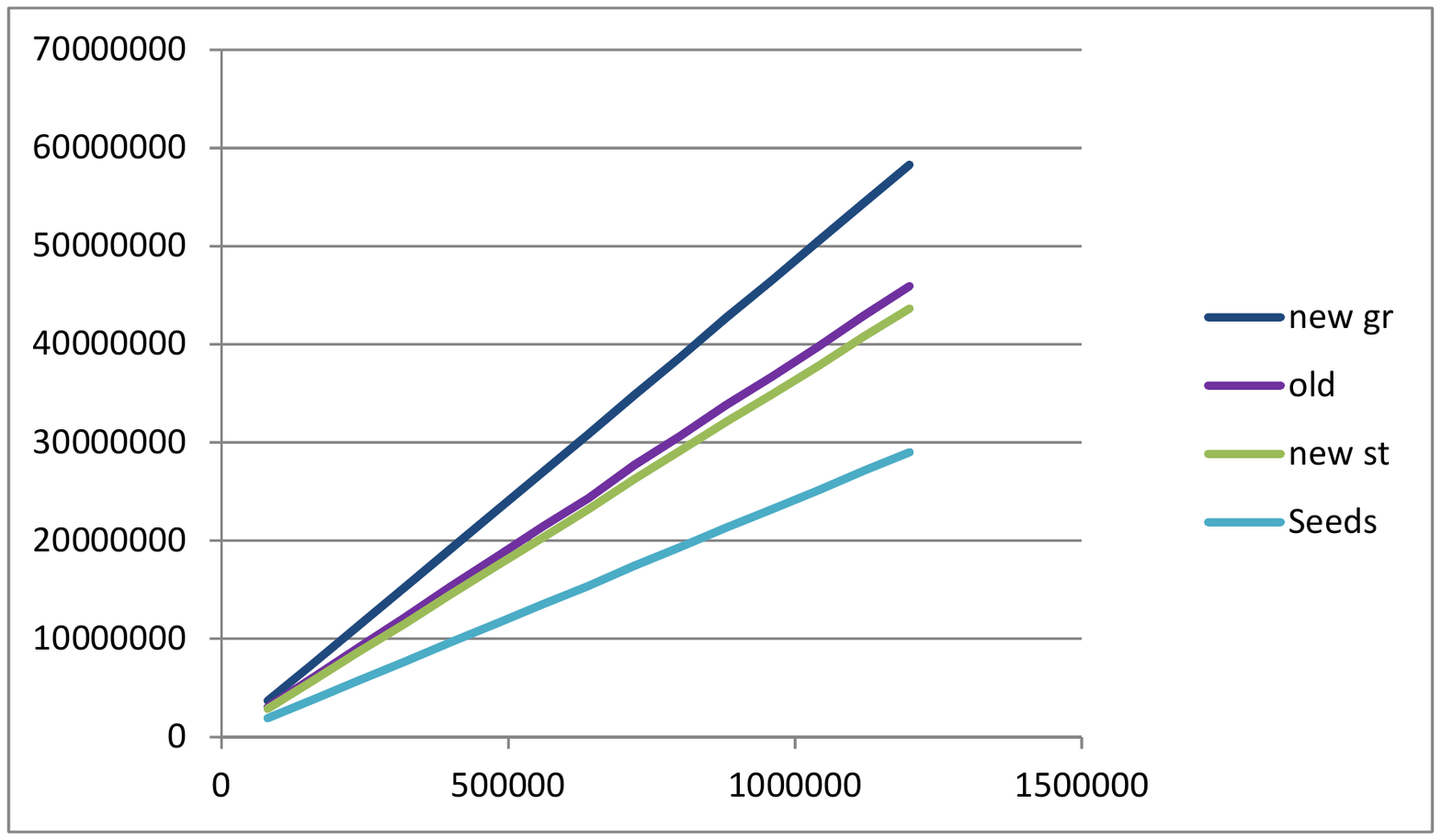} &
\includegraphics[trim=52 250 150 250 ,clip,scale=0.25]{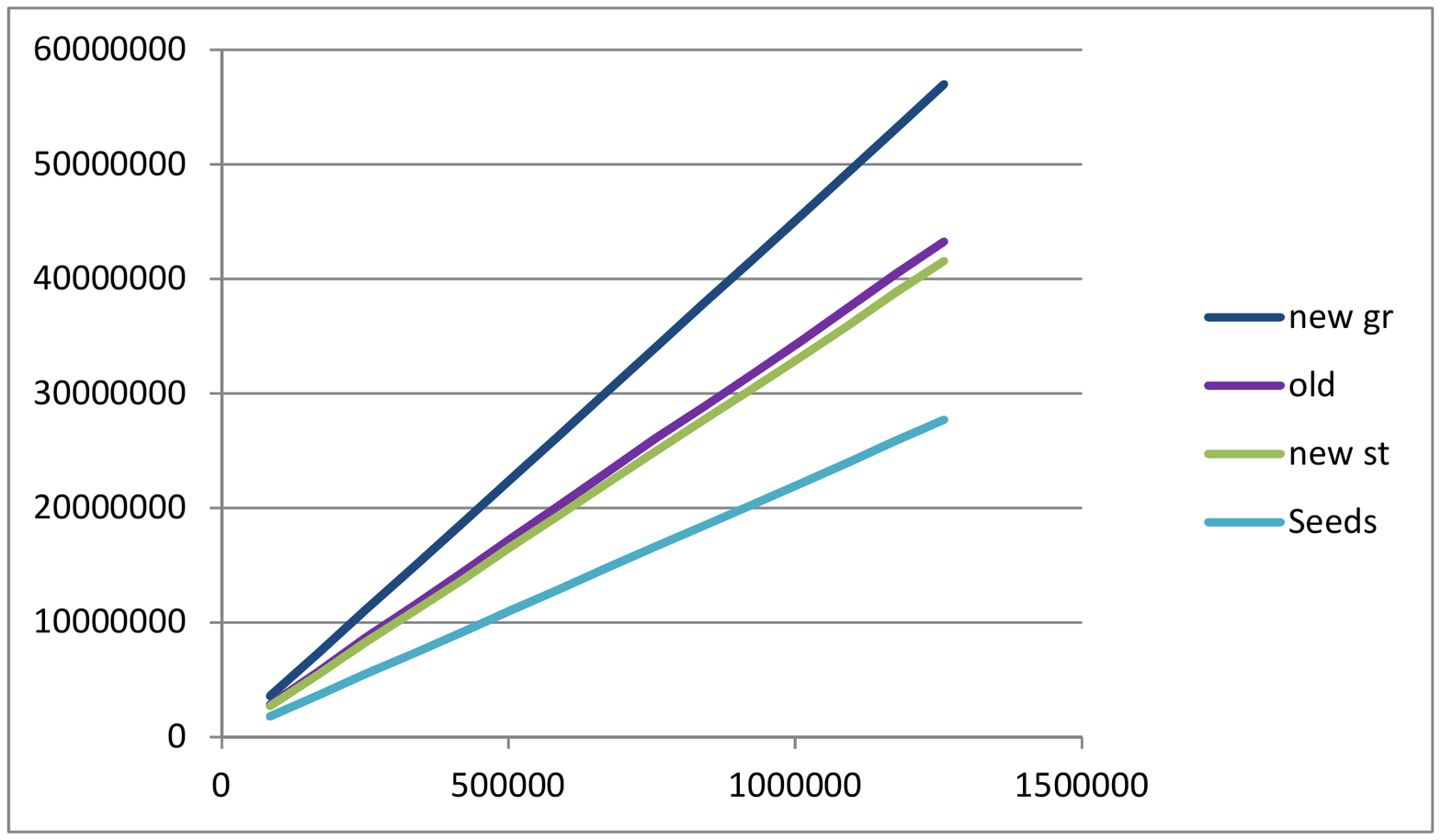} &
\includegraphics[trim=52 250 150 250 ,clip,scale=0.25]{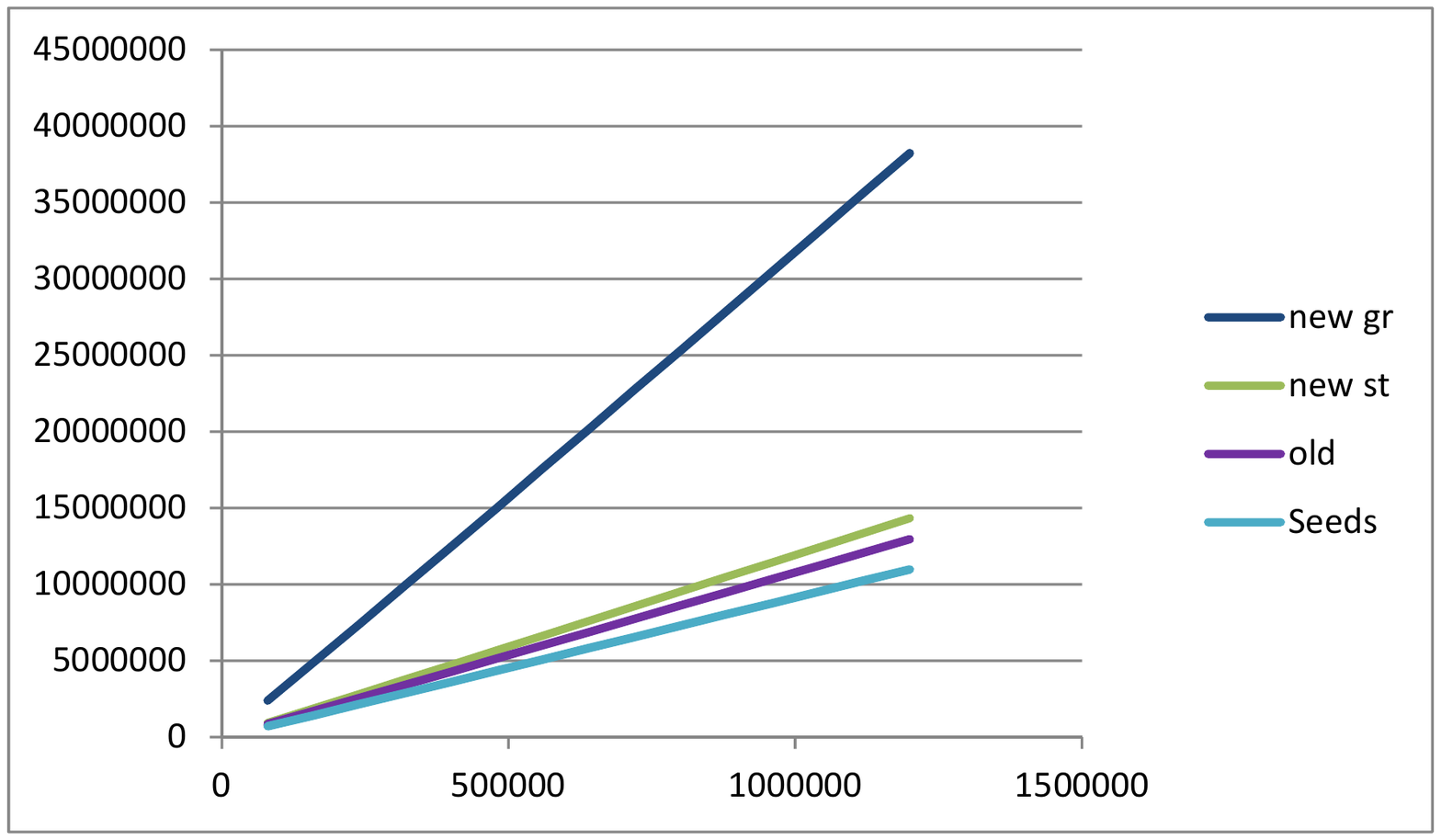} &
\includegraphics[trim=52 250 52 250 ,clip,scale=0.25]{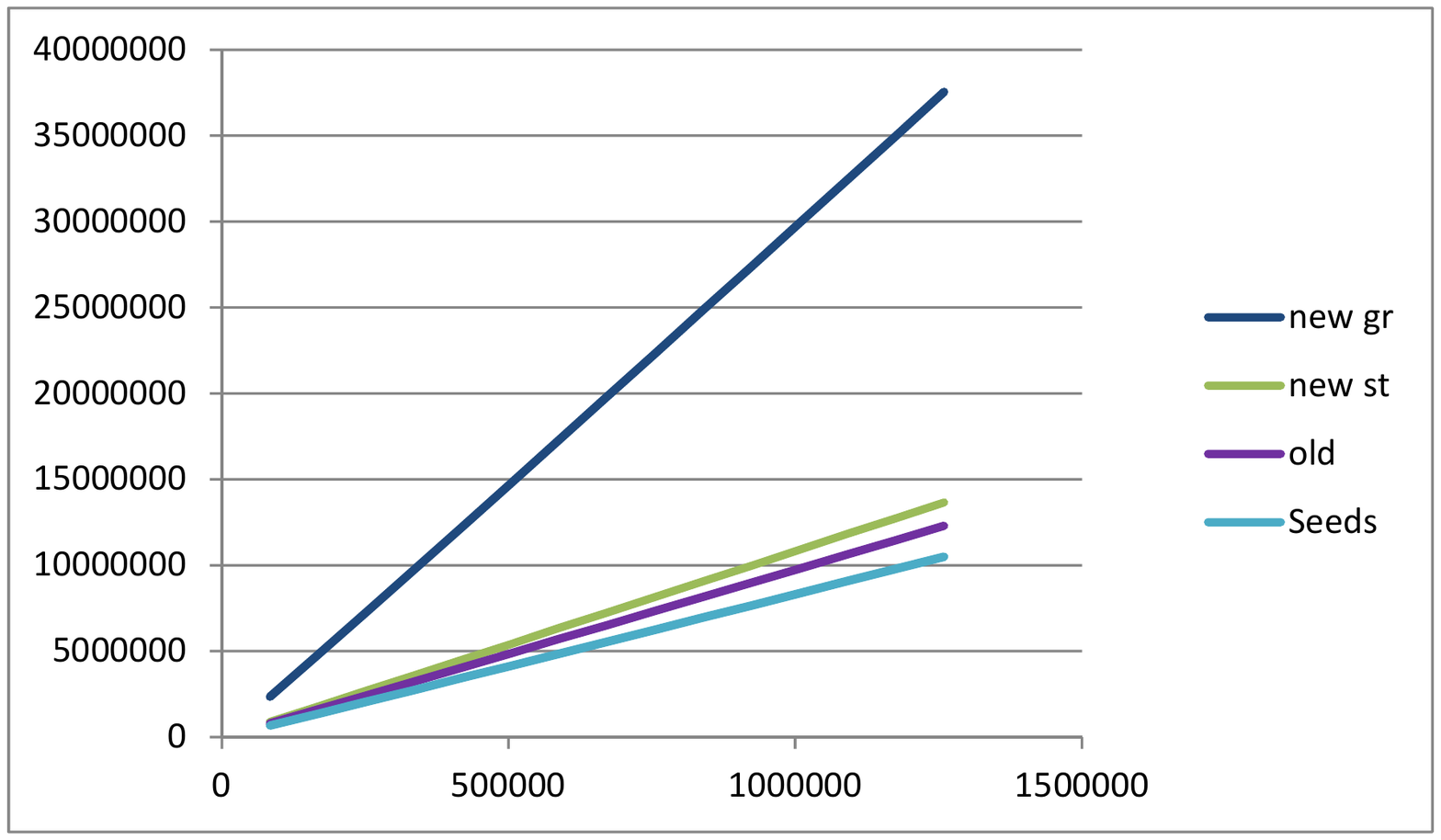}
\\
\includegraphics[trim=52 250 150 250 ,clip,scale=0.25]{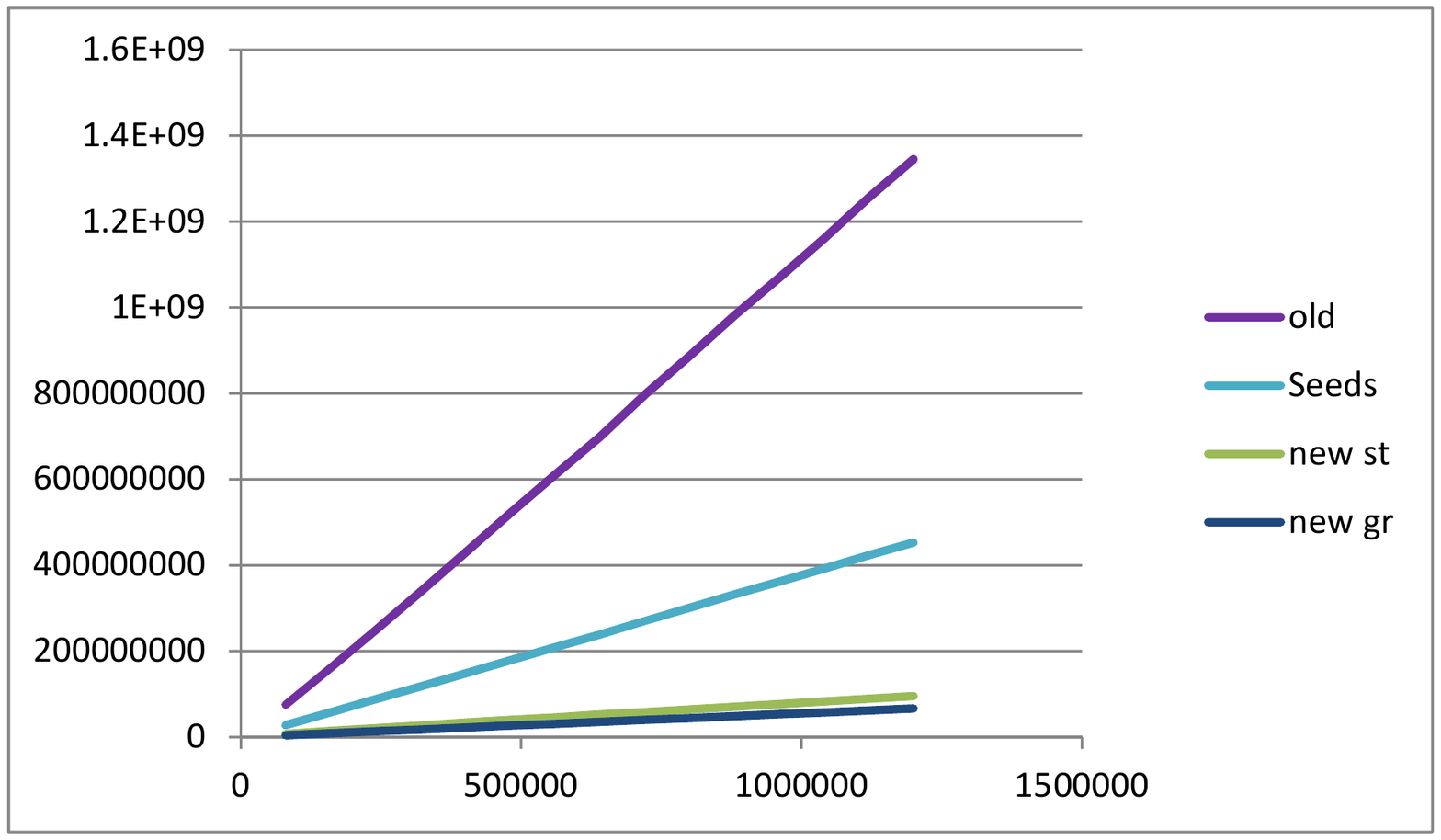} &
\includegraphics[trim=52 250 150 250 ,clip,scale=0.25]{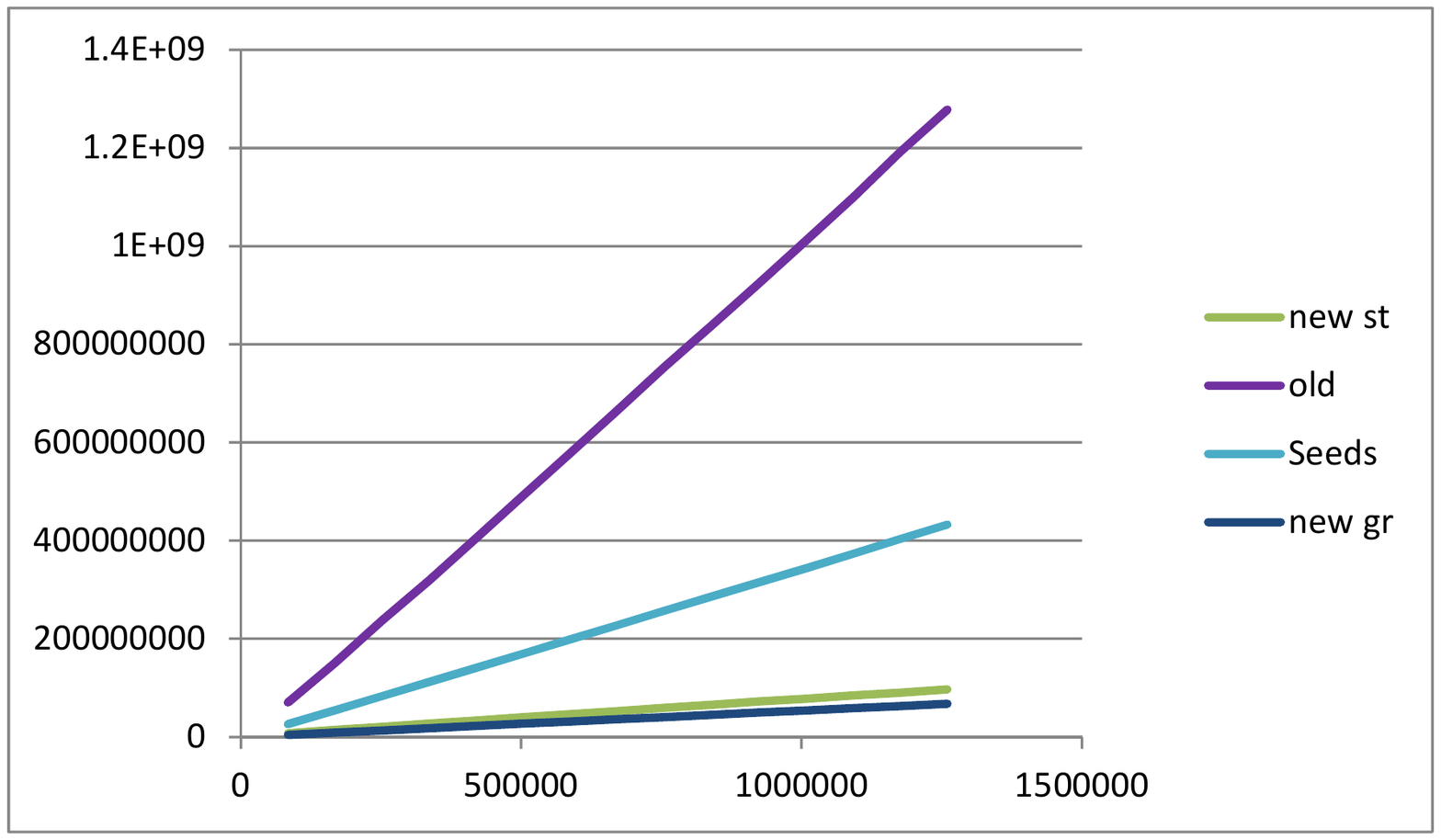} &
\includegraphics[trim=52 250 150 250 ,clip,scale=0.25]{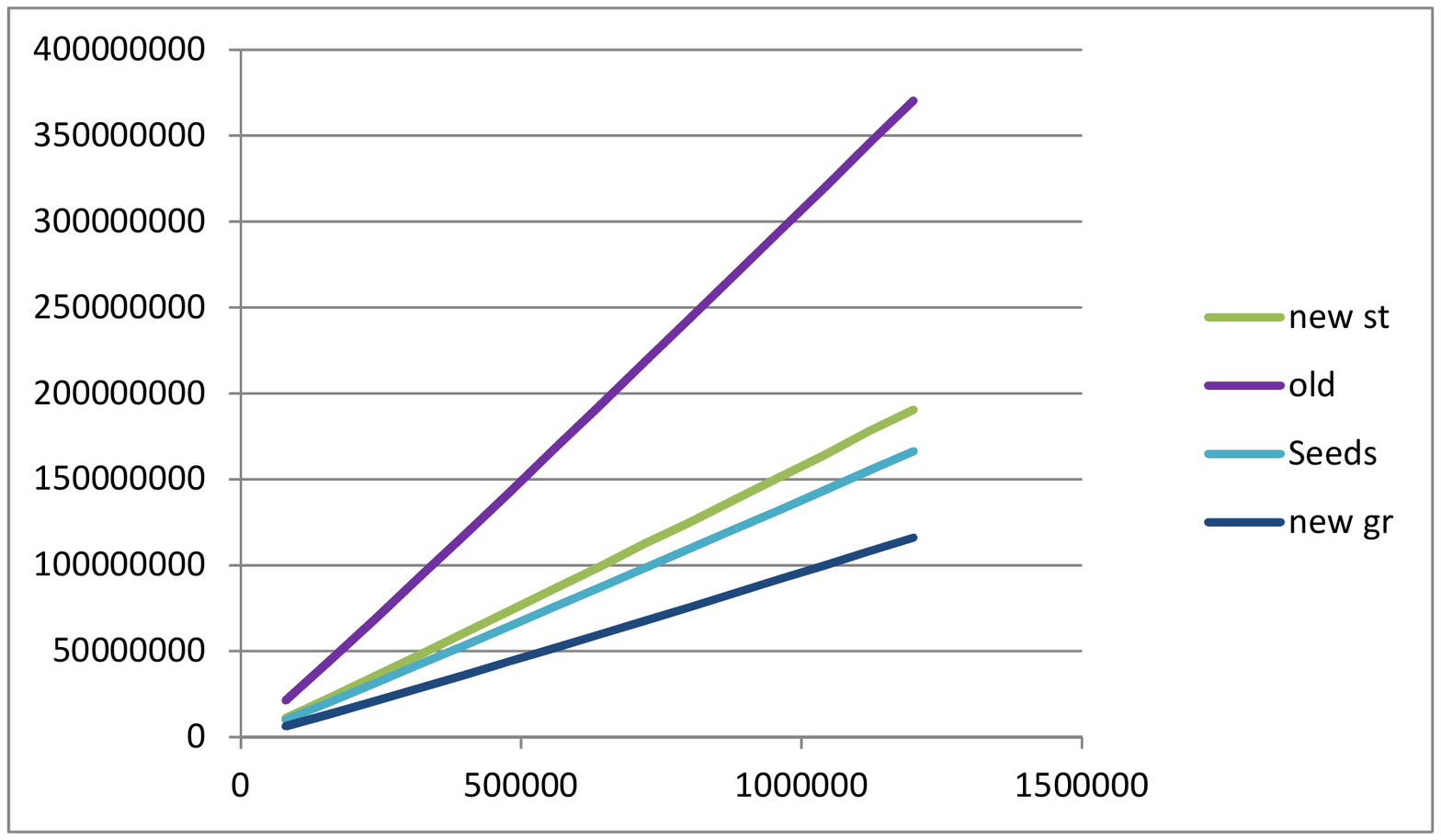} &
\includegraphics[trim=52 250 52 250 ,clip,scale=0.25]{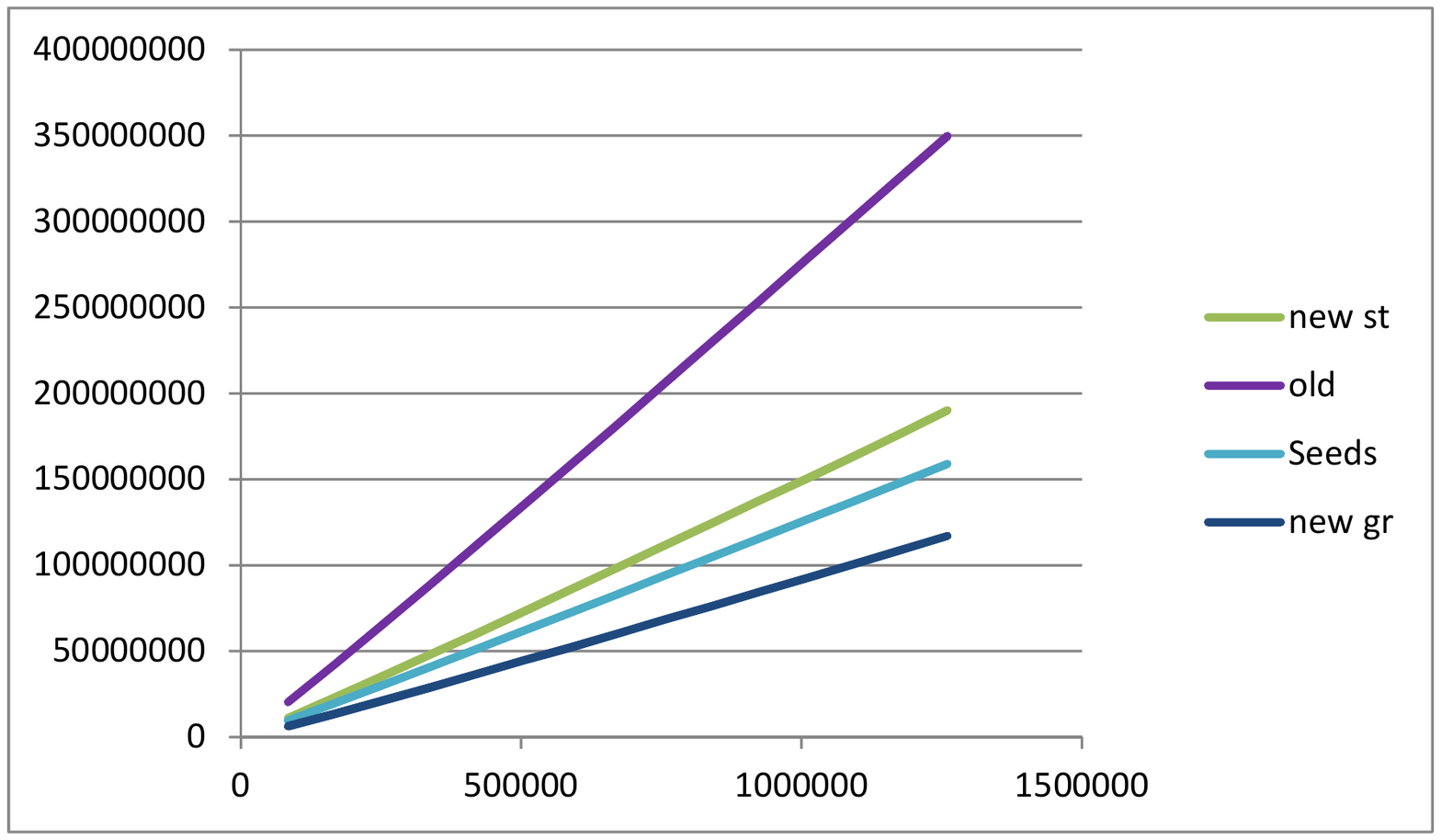}
\end{tabular}
\caption{Results of running the algorithms on data-sets of varying sizes, but with a fixed density. The first and third rows are from experiments with a low density and the second and fourth rows with a high density. The $x$-axis denotes the total number of points in the point set and the $y$-axis the number of distance calculations. The new method with a strip-based or grid-based construction is denoted by \emph{new st} and \emph{new gr} respectively. Note that in the high-density experiments (second and fourth row), the lines for the two new methods almost overlap. However, in the low-density setting, the line for the old method and strip-construction (\emph{new st}) overlap.}
\label{fig:2D-fixed-density}
\end{figure}

\mypara{Discussion.}
The running times in Fig.~\ref{fig:2D-fixed-n} (top row)
show a lot of fluctuation, while the results on the number of distance computations
are very stable. We suspect the fluctuation in running time is related
to memory issues, or to other processes claiming resources.
Hence, most of our conclusions will rely on the number of distance calculations and the sum of neighborhood-sizes.

The results from Fig.~\ref{fig:2D-fixed-n} clearly show the different dependency
on density between the two algorithms. For the original algorithm the number of
distance calculations (shown in the second row) grows linearly with the density.
The sum of the neighborhood sizes (seeds) is slightly over 50\% of the number
of distance computations, so the linear dependency is inherent in the
original algorithm, which reports the complete neighborhood of all points.
The new algorithm, on the other hand, is insensitive to the density.
Note that the bottom row in Fig.~\ref{fig:2D-fixed-n} shows that there is a
fairly large range of values of $\eps$ at which the correct clusters are found.
(The smallest range occurs for Gaussian clusters with noise, and even there the four
clusters are still found for $\eps^2$ roughly between 4 and 20.)
The original algorithm can compete with the new algorithm only
when $\eps$ is chosen very near the lower end of the correct range
of $\eps$; for a ``safe'' value of $\eps$ in the middle of the range, our
algorithm is significantly faster. This is important, as determining the
right value of $\eps$ is hard and may require running the algorithm several times
with different values of $\eps$. Moreover, different clusters may
have different densities. Choosing a value for $\eps$ that is near the low
end of the range for some clusters may then either fail to find some other clusters
or it may lead to an explosion in running time for the other clusters.

Fig.~\ref{fig:2D-fixed-density} shows that the strip-based approach tends to make fewer
comparison than the grid-based approach, especially in the low-density
setting shown in the upper row in the figure. (In Fig.~\ref{fig:2D-fixed-n} the low-density
setting is in the far left in each graph, and therefore less visible.) This is caused
by two things: the boxes are created in a data-driven way in the strip-based approach,
and the boxes are smaller (being bounding boxes of the points in it). The former
means that points within distance $\eps$ end up in the same box more often (in which
case they are not compared), while the latter means that some pairs of bounding boxes
can be at distance more than $\eps$, while the ``corresponding'' grid cells have distance
just below~$\eps$. Except for the low-density setting, the actual computation time
(see the top row in Fig.~\ref{fig:2D-fixed-n}) of the grid-based approach is better in 2D (even though
the number of comparisons is not), due to the smaller constant factors in the approach.

%--------------------------------------------------------------------------
\section{Concluding remarks}
%--------------------------------------------------------------------------
We presented a new algorithm for \dbs in $\Reals^2$,
which runs in $O(n\log n)$ time in the worst case. We also presented an $O(n\log n)$
algorithm for \hdbs in $\Reals^2$---this is the first subquadratic algorithm for this
problem---and we presented near-linear algorithms for approximate \hdbs.
It will be interesting to do a more thorough experimental evaluation of our new algorithms.
Specifically to compare the approach to other implementations that are current available.
It would also be interesting to implement the approximation algorithms and compare them to the exact ones.
From the theoretical side, a main open problem is to see if we can compute
the exact \hdbs hierarchy in subquadratic time in dimensions $d\geq 3$.

%\newpage

%--------------------------------------------------------------------------
\bibliographystyle{plain}                       % the recommended LIPIcs bibstyle
%--------------------------------------------------------------------------

%--------------------------------------------------------------------------
\end{document}
%--------------------------------------------------------------------------